\newcommand\floatc@mybox[2]{\vbox{\hbadness10000
\moveleft3.4pt\vbox{\advance\hsize by6.8pt
\hrule \hbox to\hsize{\vrule\kern3pt
\vbox{\kern3pt\vbox{\advance\hsize by-6.8pt{\@fs@cfont #1} #2}\kern3pt}\kern3pt\vrule}}}}%
\newcommand\fs@mybox{\def\@fs@cfont{\bfseries}\let\@fs@capt\floatc@mybox
\def\@fs@pre{\setbox\@currbox\vbox{\hbadness10000
\moveleft3.4pt\vbox{\advance\hsize by6.8pt
\hrule \hbox to\hsize{\vrule\kern3pt
\vbox{\kern4.5pt\box\@currbox\kern4.5pt}\kern3pt\vrule}\hrule}}}%
\def\@fs@mid{}%
\def\@fs@post{}%
\let\@fs@iftopcapt\iftrue}
\theoremstyle{plain}
\newtheorem{theorem}{Theorem}
\newtheorem{definition}[theorem]{Definition}
\newtheorem{lemma}[theorem]{Lemma}
\newtheorem{corollary}[theorem]{Corollary}
\newtheorem{observation}[theorem]{Observation}
\newcommand{\XSays}[3]{{\color{#2}
      {$\rule[-0.12cm]{0.2in}{0.5cm}$\fbox{\tt
            #1:} }
       #3
      \def\comment{#3}\def\empty{}\ifx\comment\empty\else
      {
          \fbox{\tt end}
      } \fi
      }%
}
\newcommand\curve{\ensuremath{f}\xspace}
\newcommand\BCR{\textsc{bcr}\xspace}
\newcommand\unsigneddim{\ensuremath{\{1,\ldots,d\}}\xspace}
\newcommand\signeddim{\ensuremath{\{-d,-d+1,\ldots,d-1,d\}\setminus\{0\}}\xspace}
\newcommand\twodk{\ensuremath{2^{d\ast k}}\xspace}
\newcommand\reverse[1]{\ensuremath{\overleftarrow{#1}}}
\newcommand\perm{\ensuremath{\sigma}\xspace}
\DeclareMathOperator{\sgn}{sign}
\DeclareMathOperator{\flipped}{isneg}
\DeclareMathOperator{\permdp}{depth}
\DeclareMathOperator{\edgedist}{ed}
\DeclareMathOperator{\localedgedist}{led}
\DeclareMathOperator{\absentr}{ent}
\DeclareMathOperator{\absexit}{ext}
\DeclareMathOperator{\relentr}{rlent}
\DeclareMathOperator{\relexit}{rlext}
\DeclareMathOperator{\compose}{\ensuremath{\circ}}
\DeclareMathOperator{\volm}{vol}
\newcommand\volume[1]{\ensuremath{\volm(#1)}}
\title{\MakeUppercase{Hyperorthogonal well-folded Hilbert curves}}
\author{
  Arie Bos%
  \thanks{Dept. of Mathematics and Computer Science; Eindhoven University of Technology, the Netherlands,
          arie\_bos@online.nl}\,
  and
  Herman~Haverkort%
  \thanks{Dept. of Mathematics and Computer Science; Eindhoven University of Technology, the Netherlands,
          cs.herman@haverkort.net}}
\begin{document}
\maketitle

\begin{abstract}
R-trees can be used to store and query sets of point data in two or more dimensions. An easy way to construct and maintain R-trees for two-dimensional points, due to Kamel and Faloutsos, is to keep the points in the order in which they appear along the Hilbert curve. The R-tree will then store bounding boxes of points along contiguous sections of the curve, and the efficiency of the R-tree depends on the size of the bounding boxes---smaller is better. Since there are many different ways to generalize the Hilbert curve to higher dimensions, this raises the question which generalization results in the smallest bounding boxes. Familiar methods, such as the one by Butz, can result in curve sections whose bounding boxes are a factor $\Omega(2^{d/2})$ larger than the volume traversed by that section of the curve. Most of the volume bounded by such bounding boxes would not contain any data points. In this paper we present a new way of generalizing Hilbert's curve to higher dimensions, which results in much tighter bounding boxes: they have at most 4 times the volume of the part of the curve covered, independent of the number of dimensions. Moreover, we prove that a factor 4 is asymptotically optimal.
\end{abstract}

\tableofcontents

\clearpage

\section{Introduction}\label{sec:intro}

\subsection{Space-filling curves and spatial index structures}\label{sec:rtrees}

A $d$-dimensional space-filling curve is a continuous, surjective mapping from $\mathbb{R}$ to $\mathbb{R}^d$.
In the late 19th century Peano~\cite{Peano} described such mappings for $d = 2$ and $d = 3$. Since then, various other space-filling curves have been found, and they have been applied in diverse areas such as spatial databases, load balancing in parallel computing, improving cache utilization in computations on large matrices, finite element methods, image compression, and combinatorial optimization~\cite{Bader,Haverkort2009,Sagan}.
In this paper we present new space-filling curves for $d > 2$ that have favourable properties for use in spatial data structures.

In particular, we consider data structures for $d$-dimensional points such as R-trees~\cite{Manolopoulos}. In such data structures, data points are organised in blocks, often stored in external memory. Each block contains at most $B$ points, for some parameter $B$, and each point is stored in exactly one block. For each block we maintain a bounding box, which is the smallest axis-aligned $d$-dimensional box that contains all points stored in the block. The bounding boxes of the blocks are stored in an index structure, which may often be kept in main memory. To find all points intersecting a given query window $Q$, we can now query the index structure for all bounding boxes that intersect $Q$; then we retrieve the corresponding blocks, and check the points in those blocks for answers to our query. We may also use the index structure to find the nearest neighbour to a query point~$q$: if we search blocks in order of increasing distance from~$q$, we will retrieve exactly the blocks whose bounding boxes intersect the largest empty sphere around~$q$. The grouping of points into blocks determines what block bounding boxes are stored in the index structure, and in practice, retrieving these blocks is what determines the query response time~\cite{Haverkort2009}.

If we store $n$ points in $d$ dimensions with $B$ points in a block, $\Theta((n/B)^{1-1/d})$ blocks may need to be visited in the worst case if the query window is a rectangular box with no points inside~\cite{kanth}, and $\Theta(n/B)$ blocks may need to be visited if the query window is an empty sphere. The \emph{Priority-R-tree} achieves these bounds~\cite{prtree}, whereas a heuristic solution by Kamel and Faloutsos~\cite{Kamel1993}, which is explained below, may result in visiting $\Theta(n/B)$ blocks even if the query window is a rectangular box with no points inside~\cite{prtree}. However, experimental results for (near-)point data and query ranges with few points inside~\cite{alenex} indicate that the approach by Kamel and Faloutsos seems to be more effective in practice for such settings. Moreover, regardless of the type of data and query ranges, a structure based on the ideas of Kamel and Faloutsos is much easier to build and maintain than a Priority-R-tree~\cite{prtree}.

\begin{figure}[b]
  \centering
  \leavevmode
  \raisebox{27mm}{(a)} \includegraphics[height=30mm]{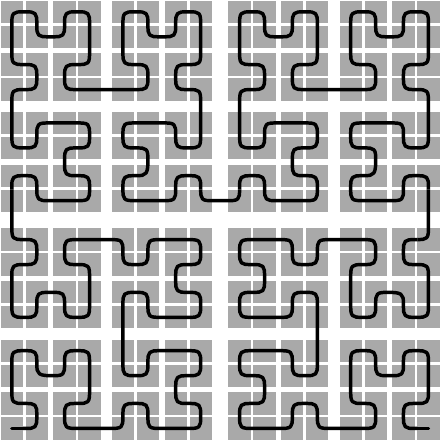}
  \hfill
  \raisebox{27mm}{(b)} \includegraphics[height=30mm]{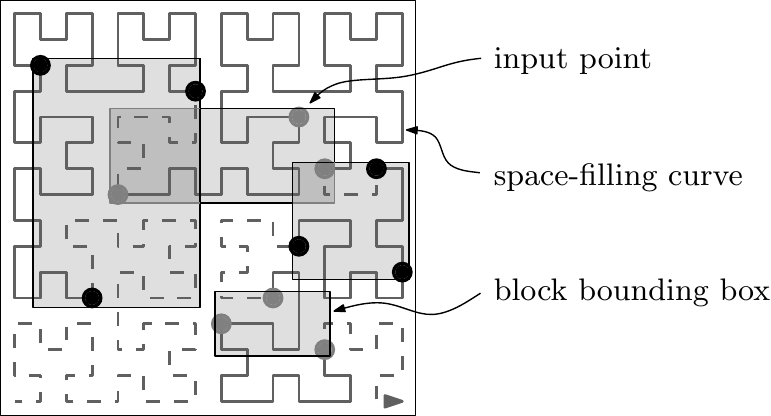}
  \hfill
  \raisebox{27mm}{(c)} \includegraphics[height=30mm]{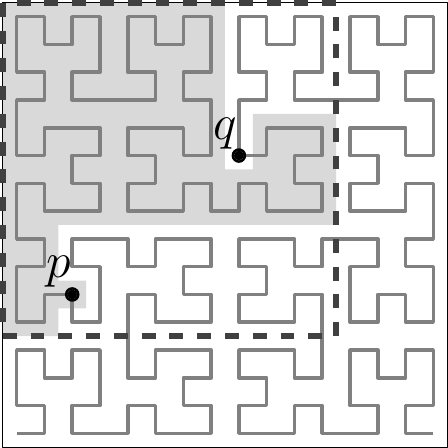}
  \caption{%
  (a) Sketch of Hilbert's space-filling curve.\quad
  (b) Blocks of an R-tree or similar data structure with $B = 3$.\quad
  (c) Box-to-curve ratio of the section between $p$ and $q$ =
  area of the bounding box of the curve section $S$ between $p$ and $q$, divided by the area covered by $S$: $12\ast 12 / 87 \approx 1.66$.}
  \label{fig:rtree_leaves}
  \label{fig:measures}
  \label{fig:hilbert2d}
\vspace{-0.5cm}
\end{figure}

Kamel and Faloutsos proposed to determine the grouping of points into blocks as follows: we order the input points along a space-filling curve and then put each next group of $B$ points together in a block (see Figure~\ref{fig:rtree_leaves}(b)). Note that the number of blocks retrieved to answer a query is simply the number of bounding boxes intersected. Therefore it is important that the ordering induced by the space-filling curve makes us fill each block with points that lie close to each other and thus have a small bounding box.

Kamel and Faloutsos proposed to use the Hilbert curve~\cite{Hilbert} for this purpose. One way to describe the two-dimensional Hilbert curve is as a recursive construction that maps the unit interval $[0,1]$ to the unit square $[0,1]^2$. We subdivide the square into a grid of $2 \times 2$ square cells, and simultaneously subdivide the unit interval into four subintervals. Each subinterval is then matched to a cell; thus Hilbert's curve traverses the cells one by one in a particular order. The mapping from unit interval to unit square is refined by applying the procedure recursively to each subinterval-cell pair, so that within each cell, the curve makes a similar traversal. The traversals within these cells are rotated and/or reflected so that the traversal remains continuous from one cell to another (see Figure~\ref{fig:hilbert2d}(a)). The result is a fully-specified mapping $\curve: [0,1] \rightarrow [0,1]^2$ from the unit interval to the unit square. The mapping is easily reversed, and thanks to the fact that the curve is based on recursive subdivision of a square in quadrants, the reversed mapping can be implemented very efficiently with coordinates represented as binary numbers. This gives us a way to decide which of any two points in the unit square is the first along the curve.

We can sketch the shape of the curve by drawing, for the $k$-th level of recursion, a polygonal curve, an \emph{approximating curve} $A_k$, that connects the centres of the $4^k$ squares in the order in which they are visited. In fact, the mapping $\curve$ can also be described as the limit of the approximating curves $A_k$ as $k$ goes to infinity. Explicit descriptions of the approximating curves help us to reason about the shapes of curve sections, and thus, about the extents of their bounding boxes. For ease of notation, in this paper we scale the approximating curve for any level $k$ by a factor $2^k$ and translate it so that its vertices are exactly the points $\{0,\ldots,2^k-1\}^2$.

A $d$-dimensional version of Hilbert's curve could now be described by a series of curves $A_k$ for increasing $k$, each visiting the points $\{0,\ldots,2^k-1\}^d$, where each point corresponds to a $d$-dimensional cube of width $1/2^k$ in the unit hypercube. For $d \geq 3$, there are many ways to define such a series of curves~\cite{Alber,Haverkort3D,Harmonious}, but their distinctive properties and their differences in suitability for our purposes are largely unexplored.

\subsection{Our results}

In this paper we present a family of space-filling curves, for any number of dimensions $d \geq 3$, with two properties which we call \emph{well-foldedness} and \emph{hyperorthogonality}---Hilbert's two-dimensional curve also has these properties. We show that these properties imply that the curves have good \emph{bounding-box quality} as defined by Haverkort and Van Walderveen~\cite{Haverkort2009}.

More precisely, for any $0 \leq a \leq b \leq 1$, let $\curve([a,b])$ denote the section of the space-filling curve $\curve$ from $\curve(a)$ to $\curve(b)$, that is, $\curve([a,b])=\bigcup_{a \leq t \leq b} \{\curve(t)\}$.

The \emph{box-to-curve ratio (BCR)} of a section $\curve([a,b])$ (denoted $\BCR(\curve([a,b]))$)
is the volume of the minimum axis-aligned bounding box of $\curve([a,b])$ divided by the volume ($d$-dimensional Lebesgue measure) of $\curve([a,b])$, see Figure~\ref{fig:measures}(c).

The worst-case \BCR of a space-filling curve $f$ is the maximum \BCR over all sections of~$f$. We show that the worst-case \BCR of a well-folded, hyperorthogonal space-filling curve is at most 4, independent of the number of dimensions. Moreover, we show that this is asymptotically optimal: we prove that any $d$-dimensional space-filling curve that is described by a series of curves $A_k$ as defined above, has a section with \BCR at least $4 - O(1/2^d)$. In contrast, the $d$-dimensional ``Hilbert'' curves of Butz~\cite{Butz}, as implemented by Moore~\cite{Moore}, have sections with \BCR in $\Omega(2^{d/2})$.

In Section~\ref{sec:notation} we introduce basic nomenclature and notation. Section~\ref{sec:wellfolded} defines the concept of well-foldedness, and presents sufficient and necessary conditions for approximating curves of well-folded space-filling curves. Section~\ref{sec:hyperorthogonal} introduces the concept of hyperorthogonality. We present sufficient and necessary conditions for approximating curves of well-folded space-filling curves to be hyperorthogonal. The necessity of these conditions is then used to prove that any section of a hyperorthogonal well-folded space-filling curve has good box-to-curve ratio. Our next task is to show that hyperorthogonal well-folded curves actually exist, and this is the topic of Section~\ref{sec:existence}.

We combine the conditions from the previous sections to learn more about the shape of hyperorthogonal well-folded curves, and in particular about self-similar curves (Section~\ref{sec:selfsimilar}).
 It turns out that in two, three, and four dimensions, there are actually very few self-similar, well-folded, hyperorthogonal curves (Corollary~\ref{cor:onlytwoselfsimilar}); in five and more dimensions, more such curves exist. In Section~\ref{sec:implementation}, we make a few remarks about how to implement a comparison operator based on self-similar, well-folded, hyperorthogonal curves in any number of dimensions greater than two.
Finally, in Section~\ref{sec:discussion}, we compare the bounding box quality of hyperorthogonal well-folded curves to lower bounds and to the bounding box quality of Butz's generalization of Hilbert curves, and we discuss directions for further research.
Pseudocode for the comparison operator discussed in Section~\ref{sec:implementation} is given and explained in Appendix~\ref{sec:pseudocode}.

\subsection{Nomenclature and notation}\label{sec:notation}

\paragraph{General notation}\ \\
\textcolor{gray}{\rule[1mm]{2mm}{0.5mm}} By $D$ we denote $2^d$.\\
\textcolor{gray}{\rule[1mm]{2mm}{0.5mm}} By $\sgn(i)$ we denote the sign of $i$, that is, $\sgn(i) = -1$ if $i < 0$; $\sgn(i) = 0$ if $i = 0$, and $\sgn(i) = 1$ if $i > 0$. \\
\textcolor{gray}{\rule[1mm]{2mm}{0.5mm}} By $\flipped(i)$ we denote the function defined by $\flipped(i) = 1$ if $i < 0$, and $\flipped(i) = 0$ if $i \geq 0$. Notice $\sgn(i)=1-2\ast\flipped(i)$.

\paragraph{Vertices, edges, directions and axes}\ \\
\textcolor{gray}{\rule[1mm]{2mm}{0.5mm}} The universe in this article is the integer grid in $d$ dimensions $\mathbb{Z}^d$. \\
\textcolor{gray}{\rule[1mm]{2mm}{0.5mm}} A \emph{vertex} is a point $v = (v[1],v[2],\ldots,v[d]) \in \mathbb{Z}^d$. \\
\textcolor{gray}{\rule[1mm]{2mm}{0.5mm}} An \emph{edge} $e$ is an ordered pair of vertices $(v,w)$ with distance $||w-v|| = 1$. \\
\textcolor{gray}{\rule[1mm]{2mm}{0.5mm}} The \emph{direction} of an edge $e = (v,w)$ is the number $i \in \signeddim$ such that $w[|i|] - v[|i|] = \sgn(i)$ and $w[j] = v[j]$ if $j \neq |i|$. \\
\textcolor{gray}{\rule[1mm]{2mm}{0.5mm}} The \emph{axis} of an edge is the absolute value of its direction. Note that the edges $(v,w)$ and $(w,v)$ have opposite directions, but the same axis. \\
\textcolor{gray}{\rule[1mm]{2mm}{0.5mm}} In our figures we will use horizontal lines for axis 1, vertical lines for axis 2 and lines with another orientation for axis 3.\\
\textcolor{gray}{\rule[1mm]{2mm}{0.5mm}} By $\langle e_1,e_2,\ldots \rangle$ we denote a path of edges with directions $e_1, e_2, \ldots$.

\paragraph{Curves, length, volume, entry and exit}\ \\
\textcolor{gray}{\rule[1mm]{2mm}{0.5mm}} For the purposes of this paper, a \emph{curve} is a \emph{curve on the grid}, which is an ordered set of unique vertices where each subsequent pair of vertices forms an edge as defined above. Note that a curve never visits the same vertex more than once. Since a vertex and a direction determine an edge, a curve can alternatively be specified by the starting point and the listing of the directions of its edges in order. Note that curves are directed.\\
\textcolor{gray}{\rule[1mm]{2mm}{0.5mm}} A \emph{space-filling curve} is always a mapping $f: [0,1] \rightarrow [0,1]^d$, while any other curve discussed in this paper will be assumed to be a curve on the grid.\\
\textcolor{gray}{\rule[1mm]{2mm}{0.5mm}} A \emph{free} curve is a curve without a starting point, so with unspecified location: it is described by the directions of its edges only.\\
\textcolor{gray}{\rule[1mm]{2mm}{0.5mm}} The \emph{reverse} $\reverse{C}$ of a free curve $C$ is obtained by reversing the order of the edge directions \emph{and} reversing the directions themselves, which means negating them. \\
\textcolor{gray}{\rule[1mm]{2mm}{0.5mm}} The \emph{length} of a curve is the number of edges, the \emph{volume} of a subset of the grid is the number of vertices it contains. So the volume $\volume{C}$ of a curve $C$ is its length + 1. \\
\textcolor{gray}{\rule[1mm]{2mm}{0.5mm}} The first vertex of a curve is called the \emph{entry}; the last vertex is called the \emph{exit}.

\paragraph{$k$-Curves and $k$-cubes}\ \\
\textcolor{gray}{\rule[1mm]{2mm}{0.5mm}} A \emph{$k$-cube} is a $d$-dimensional cube with $\twodk $ points, so with a side of length $2^k -1$.\\
\textcolor{gray}{\rule[1mm]{2mm}{0.5mm}} A \emph{$k$-curve} is a Hamiltonian path on the integer grid in a \emph{$k$-cube}. \\
\textcolor{gray}{\rule[1mm]{2mm}{0.5mm}} Since each of the (integer) points of the cube is visited by the curve exactly once, its volume is $\twodk$  and the length of a $k$-curve is $\twodk - 1$.

\paragraph{Approximating curves}\ \\
\textcolor{gray}{\rule[1mm]{2mm}{0.5mm}} The space-filling curves under study in this paper will be approximated by curves on the grid as just defined. By $A_0, A_1, \ldots$ we will denote a sequence of curves that approximates a $d$-dimensional space-filling curve, where $A_0$ is a single vertex and $A_k$ is a $k$-curve. \\
\textcolor{gray}{\rule[1mm]{2mm}{0.5mm}} By $v_{k,1},v_{k,2},\ldots,v_{k,K}$, where $K = \twodk$, we denote the vertices of $A_k$ in order, and by $e_{k,i}$ we denote the direction of the edge $(v_{k,i},v_{k,i+1})$. \\ \textcolor{gray}{\rule[1mm]{2mm}{0.5mm}} Each vertex $v_{k,i}$ of $A_k$ represents a $d$-dimensional hypercube $H_{k,i}$ of width $1/2^k$ that is visited by the space-filling curve approximated by $A_k$. The vertices $v_{k+1,D\ast i - D + 1},\ldots,v_{k+1,D\ast i}$ of $A_{k+1}$ model the order in which the space-filling curve traverses the $d$-dimensional hypercubes of width $1/2^{k+1}$ whose union is $H_{k,i}$.

\begin{figure}[t]
\begin{center}
\includegraphics[scale=1.3]{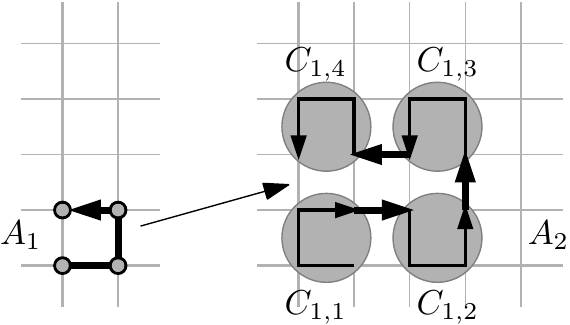}
\hfill
\raise 1.5\baselineskip\hbox{\includegraphics{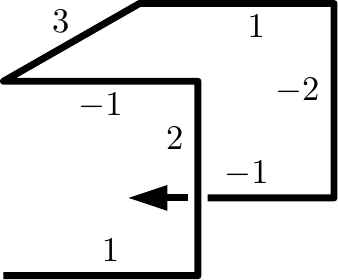}}
\caption{\emph{Left:} A parent curve $A_1$ is inflated to create $A_2$, which is composed of the child curves $C_{1,1}$, $C_{1,2}$, $C_{1,3}$ and $C_{1,4}$, and edges of $A_1$ which are translated such that they connect the child curves to each other at their end points.\quad \emph{Right:} $G(3)$ with the directions of its edges.\label{fig:inflating}\label{fig:G3}}
\end{center}
\end{figure}

\noindent \textcolor{gray}{\rule[1mm]{2mm}{0.5mm}} Therefore it must be possible to construct $A_{k+1}$ from $A_k$, which we call the \emph{parent curve}, by \emph{inflation}: we replace each vertex $v_{k,i}$ of the parent curve with a 1-curve $C_{k,i}$ (a \emph{child curve}), whose vertices are those of the unit cube, translated by $2 \ast v_{k,i}$.
Each edge $(v_{k,i},v_{k,i+1})$ of the parent curve is replaced by an edge $(v_{k+1,D\ast i},v_{k+1,D\ast i+1})$ in $A_{k+1}$ of the same direction, connecting the exit of $C_{k,i}$ to the entry of $C_{k,i+1}$, see Figure~\ref{fig:inflating}, left. Note that not just any choice of child curves results in a valid $(k+1)$-curve. The 1-curves that replace the vertices have to be chosen carefully such that for each edge $(v_{k,i},v_{k,i+1})$ of the parent curve, there is indeed an edge in the grid from the exit of $C_{k,i}$ to the entry of $C_{k,i+1}$. In Section~\ref{sec:continuityconditions} we will discuss how the $1$-curves should be constructed so that they match up.

In what follows, the first subscripts to $v$, $e$, $H$ and $C$ will usually be omitted if they are clear from the context, for example, if $k$ is fixed, or if the approximating curve in question is otherwise specified. For example, ``a child curve $C_i$ of $A_{k-1}$'' should be read as: ``a child curve $C_{k-1,i}$'' (and it would be a subcurve of $A_k$).

Observe that our definition of curves on the grid restricts the generalizations of Hilbert curves under study to \emph{face-continuous curves}, that is, each pair of consecutive $d$-dimensional hypercubes along the space-filling curve must share a $(d-1)$-dimensional face. In Section~\ref{sec:lowerbounds}, we will discuss why, in the context of this paper, this restriction is justified.

\section{Well-folded curves}\label{sec:wellfolded}

\subsection{Gray codes and definition of well-folded curves}

In the process of inflating, we will restrict ourselves in this paper to replacing vertices with isometric images (like translations, rotations,  reflections or taking the reverse, shortly: all  distance-preserving mappings) of one particular 1-curve, namely the free curve $G(d)$ that follows the so-called \emph{binary reflected Gray code}.

\begin{definition}\label{def:graycode}
The free curve $G(d)$ is defined recursively as follows: $G(0)$ is empty; $G(d)$ is the concatenation of $G(d-1)$, $\langle d\rangle$, and $\rule{0mm}{5mm}\reverse{G(d-1)}$.
\end{definition}


For example, $G(2)$ is the free curve $\langle 1,2,-1\rangle$ (Figure \ref{fig:inflating}, left), $G(3)$ is shown in Figure~\ref{fig:G3}, right, and $G(4)$ is the free curve $\langle 1,2,-1,3,1,-2,$ $-1,4,1,2,-1,-3,1,-2,-1\rangle$.

The length of $G(d)$ is, by induction, $2^d -1$, which is the maximum length of a Hamiltonian path on the unit cube in $\mathbb{Z}^d$.

The following properties of $G(d)$ are well-known:

\begin{lemma}\label{lem:graycodealternatingpattern}
If $d\geq 2$, then, in $G(d)$ as well as $\reverse{G(d)}$, edges with axis 1 and edges with other axes alternate, starting with $\langle 1\rangle$ and ending with $\langle-1\rangle$.
\end{lemma}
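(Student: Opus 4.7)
The plan is to prove both claims (about $G(d)$ and about $\reverse{G(d)}$) simultaneously by induction on $d$, with base case $d = 2$. The base case is a direct check: from Definition~\ref{def:graycode}, $G(2) = \langle 1, 2, -1 \rangle$, and by the definition of the reverse operator, $\reverse{G(2)} = \langle 1, -2, -1 \rangle$. Both begin with $\langle 1 \rangle$, end with $\langle -1 \rangle$, and alternate between axis 1 and axis 2.

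For the inductive step, I would assume the lemma for $d - 1 \geq 2$ and write, by Definition~\ref{def:graycode},
\[
  G(d) \;=\; G(d-1),\ \langle d \rangle,\ \reverse{G(d-1)}.
\]
By the induction hypothesis both $G(d-1)$ and $\reverse{G(d-1)}$ start with $\langle 1 \rangle$, end with $\langle -1 \rangle$, and alternate between axis-1 edges and non-axis-1 edges. Because each piece has $2^{d-1} - 1$ edges, which is odd, the alternation in each begins and ends on an axis-1 edge. Consequently, near the two splices the local direction sequences read $\ldots, -1, d, 1, \ldots$; since the axis of $\langle d \rangle$ is $d \neq 1$, axis alternation is preserved across both splices. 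Hence $G(d)$ alternates throughout, starts with $\langle 1 \rangle$ (inherited from $G(d-1)$), and ends with $\langle -1 \rangle$ (inherited from $\reverse{G(d-1)}$).

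For $\reverse{G(d)}$, I would use that reversing a concatenation reverses the subcurves in reverse order while negating each direction, so
\[
  \reverse{G(d)} \;=\; \reverse{\reverse{G(d-1)}},\ \langle -d \rangle,\ \reverse{G(d-1)} \;=\; G(d-1),\ \langle -d \rangle,\ \reverse{G(d-1)}.
\]
This has the same structural form as $G(d)$, with $-d$ replacing $d$; since $-d$ still has axis $d \neq 1$, the identical splicing argument yields the claim for $\reverse{G(d)}$ as well.

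The argument is a clean structural induction and there is no real obstacle. The only subtlety worth flagging is the appeal to the parity of $2^{d-1} - 1$ to conclude that $G(d-1)$ ends with (rather than merely contains) an axis-1 edge; without this observation one could not rule out that the splice with $\langle d \rangle$ breaks the alternation.
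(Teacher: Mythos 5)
Your proof is correct and follows exactly the route the paper intends: the paper's own proof is just the one-line remark ``straightforward by induction on increasing $d$, with base case $d=2$,'' and your argument supplies the details of precisely that induction (including the identity $\reverse{G(d)} = G(d-1),\langle -d\rangle,\reverse{G(d-1)}$ for the reversed case). The only minor redundancy is the parity appeal to $2^{d-1}-1$ being odd: the induction hypothesis already asserts that $G(d-1)$ and $\reverse{G(d-1)}$ end with $\langle -1\rangle$, so the splices are handled without counting edges.
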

\begin{proof}
Straightforward by induction on increasing $d$, with base case $d=2$.
\end{proof}


\begin{lemma}\label{lem:graycodeendpoints}
Let $G^0(d)$ be $G(d)$ with entry point $(0,\ldots,0)$, where $d \geq 1$. Then the vertices of $G^0(d)$ are those of the 1-cube $\{0,1\}^d$ and the exit point is the point $v = (0,\ldots,0,1)$ with $v[d] = 1$ and $v[j] = 0$ for $j < d$.
\end{lemma}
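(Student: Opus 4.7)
The plan is to prove this by induction on $d$, mirroring the recursive definition of $G(d)$ given in Definition~\ref{def:graycode}. The two claims to maintain by induction are (i) $G^0(d)$ visits exactly the vertex set $\{0,1\}^d$, and (ii) its exit is $(0,\ldots,0,1)$. It will be convenient to also carry along the implied \emph{displacement} between entry and exit, namely $(0,\ldots,0,1)$, because this is exactly what is needed to place the pieces in the inductive step.

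The base case is $d=1$: by Definition~\ref{def:graycode}, $G(1)$ consists of the single edge of direction $1$, so $G^0(1)$ goes from $(0)$ to $(1)$, covering $\{0,1\}$, with exit of the required form.

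For the inductive step, assume the lemma holds for $d-1 \geq 1$. By Definition~\ref{def:graycode}, $G(d) = G(d-1) \compose \langle d\rangle \compose \reverse{G(d-1)}$. Starting the first copy of $G(d-1)$ at the origin of $\mathbb{Z}^d$, the induction hypothesis (applied to the first $d-1$ coordinates, with the $d$-th coordinate frozen at $0$) gives that this prefix visits exactly $\{0,1\}^{d-1}\times\{0\}$ and ends at the point $p=(0,\ldots,0,1,0)$ whose only nonzero entry is a $1$ in position $d-1$. The edge $\langle d\rangle$ then moves us to $p'=(0,\ldots,0,1,1)$. Finally, $\reverse{G(d-1)}$ has edge directions that are the negation of those of $G(d-1)$ in reverse order, so its displacement is the negation of that of $G(d-1)$, namely $(0,\ldots,0,-1,0)$; hence it takes $p'$ to $(0,\ldots,0,0,1)$, which is the desired exit $v$. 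Moreover, since $\reverse{G(d-1)}$ visits the same vertex set as $G(d-1)$ (just in reverse order), translated to start at $p'$ it visits exactly $\{0,1\}^{d-1}\times\{1\}$. Combining the two halves, $G^0(d)$ visits exactly $\{0,1\}^d$, as required.

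There is no real obstacle here; the only thing to be careful about is the embedding of the recursive pieces into $\mathbb{Z}^d$ and the observation that reversing a free curve negates its overall displacement (which follows immediately from the definition of \reverse{\cdot} given in the nomenclature section). The argument reuses exactly the structure of the definition, so the proof is essentially a bookkeeping exercise on coordinates.
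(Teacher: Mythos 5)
Your proof is correct and follows essentially the same route as the paper: induction on $d$ using the recursive decomposition $G(d) = G(d-1)\,\langle d\rangle\,\reverse{G(d-1)}$, tracking the entry-to-exit displacement to locate the second copy and observing that reversal preserves the vertex set while negating the displacement. The paper's own proof is just a slightly terser version of the same bookkeeping.
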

\begin{proof}
We can prove this by induction on increasing $d$. The base case $d = 1$ is easy to verify. Now suppose the lemma holds for $d-1$, that is, the vertices of $G^0(d-1)$ are those of the 1-cube $\{0,1\}^{d-1}$ and the exit point is $(0,\ldots,0,1)$. Or to put it differently, the exit point of $G^0(d-1)$ is the endpoint of an edge $\langle d-1\rangle$ that starts in the origin.

Now recall that $G(d)$ is the concatenation of $G(d-1)$, $\langle d\rangle$ and $\reverse{G(d-1)}$. Therefore the exit point of $G^0(d)$ can be found as the endpoint of a curve $\langle(d-1),d,-(d-1)\rangle$, starting in the origin. Clearly, this endpoint is $(0,\ldots,0,1)$. The vertices of $G^0(d)$ are those of $G(d-1)$ starting in the origin plus those of $\reverse{G(d-1)}$ ending at $(0,\ldots,0,1)$, or equivalently, those of $G(d-1)$ starting in the origin plus those of $G(d-1)$ starting at $(0,\ldots,0,1)$. Together these constitute the set $\{0,1\}^d$.
\end{proof}

The following lemma will prove useful in Sections \ref{sec:hyperorthogonal} and \ref{sec:comparison} but can be skipped on first reading:

\begin{lemma}\label{lem:prefixspan}
The axes of the first (and last) $n$ edges of $G(d)$ constitute the set $\{1,\ldots,m\}$, where $m = 1 + \lfloor \log_2(n) \rfloor  = \lceil \log_2(n + 1) \rceil$.
\end{lemma}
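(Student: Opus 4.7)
The plan is to identify the axis of each edge of $G(d)$ with a simple arithmetic function of its position: I will show that the axis of the $i$-th edge of $G(d)$, for $1\le i\le 2^d-1$, equals $1+\nu_2(i)$, where $\nu_2(i)$ denotes the $2$-adic valuation of $i$ (the exponent of the largest power of $2$ dividing $i$). Once this identification is in hand, the lemma reduces to an elementary counting statement about $2$-adic valuations in $\{1,\ldots,n\}$.

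To establish the identification I would induct on $d$, exploiting Definition~\ref{def:graycode}. The case $d=1$ is trivial. For the induction step, partition the $2^d-1$ edges of $G(d)$ into the first $2^{d-1}-1$ edges (inherited from $G(d-1)$), the middle edge $\langle d\rangle$ at position $i=2^{d-1}$, and the final $2^{d-1}-1$ edges (from $\reverse{G(d-1)}$). The first block inherits the formula by induction; for the middle edge one checks $1+\nu_2(2^{d-1})=d$; and for the final block, the $(2^{d-1}+j)$-th axis of $G(d)$ equals, via the reversal, the $(2^{d-1}-j)$-th axis of $G(d-1)$, which the inductive hypothesis identifies as $1+\nu_2(2^{d-1}-j)$. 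The only genuine computation, and the main obstacle in the argument, is to verify the identity
\[
\nu_2\bigl(2^{d-1}+j\bigr)\;=\;\nu_2\bigl(2^{d-1}-j\bigr)\quad\text{for }1\le j<2^{d-1}.
\]
Writing $j=2^k m$ with $m$ odd, one has $k\le d-2$, so $2^{d-1-k}$ is even and both $2^{d-1}\pm j = 2^k(2^{d-1-k}\pm m)$ factor as $2^k$ times an odd integer; hence both valuations equal $k$.

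With the identification proven, the set of axes among the first $n$ edges of $G(d)$ is exactly $\{1+\nu_2(i):1\le i\le n\}$. For each $j$ with $0\le j\le\lfloor\log_2 n\rfloor$, the value $i=2^j$ lies in $\{1,\ldots,n\}$ and contributes the axis $1+j$, so the set contains all of $\{1,\ldots,m\}$. Conversely, no axis exceeds $m$, because $\nu_2(i)\ge m$ would force $i\ge 2^m>n$ by definition of the floor. For the statement about the last $n$ edges, I would note that the full axis sequence of $G(d)$ is a palindrome: by induction on the recursive decomposition, the axis sequences of $G(d-1)$ and $\reverse{G(d-1)}$ are mutual reverses and are themselves palindromes, so the concatenation of the first block, the single axis $d$, and the third block is palindromic; therefore the axis set of the last $n$ edges coincides with that of the first $n$, and the lemma follows.
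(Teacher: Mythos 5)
Your proof is correct, but it takes a genuinely different route from the paper. The paper argues directly from the nesting structure of Definition~\ref{def:graycode}: since $2^{m-1}\le n\le 2^m-1$, the first $n$ edges contain all of $G(m-1)$ plus the edge $\langle m\rangle$ (giving at least the axes $\{1,\ldots,m\}$) and are contained in a full $G(m)$ (giving at most those axes), with a symmetric statement for suffixes; no per-edge information is ever computed. You instead prove the stronger pointwise statement that the $i$-th edge of $G(d)$ has axis $1+\nu_2(i)$ --- the classical ``ruler sequence'' characterization of the binary reflected Gray code --- and then derive the lemma by counting $2$-adic valuations in $\{1,\ldots,n\}$. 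All the steps check out: the index arithmetic for the reversed block is right (the $j$-th edge of $\reverse{G(d-1)}$ is the negation of the $(2^{d-1}-j)$-th edge of $G(d-1)$, and negation preserves the axis), the identity $\nu_2(2^{d-1}+j)=\nu_2(2^{d-1}-j)$ is verified correctly, and the palindrome argument legitimately reduces the suffix case to the prefix case. Your approach costs more work up front but buys an explicit closed form for every edge's axis, from which Lemma~\ref{lem:graycodealternatingpattern} also falls out immediately (axis $1$ occurs exactly at the odd positions); the paper's argument is shorter because it only ever needs the coarse sandwich $G(m-1)\langle m\rangle \subseteq (\text{first } n \text{ edges}) \subseteq G(m)$.
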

\begin{proof}
For $1 + \lfloor \log_2(n) \rfloor = m = \lceil \log_2(n+1)\rceil$ we have
$2n \geq 2^m \geq n+1$ and thus, $n \geq 2^{m-1}$ and $n \leq 2^m-1$. Therefore the first $n$ edges of $G(d)$ include at least a full $G(m-1)$ and an edge $\langle m\rangle$, and not more than a full $G(m)$. In a symmetric way, the last $n$ edges of $G(d)$ include at least an edge $\langle -m\rangle$ and a full $\reverse{G(m-1)}$, and not more than a full $\reverse{G(m)}$. It follows that the first or last $n$ edges cover $m$ different axes.
\end{proof}

\begin{definition}\label{def:wellfolded}
A curve is \emph{well-folded} if it is a single vertex, or if it is obtained by inflating a well-folded curve by replacing its vertices by isometric images of $G(d)$. A space-filling curve is \emph{well-folded} if its approximating curves are well-folded.
\end{definition}

Note that in two dimensions, all possible 1-curves are in fact isometric images of $G(2)$, so any face-continuous space-filling curve based on recursive subdivision of a square into four squares must be well-folded (for example, Hilbert's curve or the $\beta\Omega$-curve~\cite{Wierum}).

In higher dimensions, the most common generalizations of the Hilbert curve are well-folded as well, but there are also face-continuous curves based on recursive subdivision of a cube into eight cubes that are not well-folded (using generators of types B and C from Alber and Niedermeier~\cite{Alber,Haverkort3D}). In Section~\ref{sec:discussion}, we will briefly get back to non-well-folded curves; until then, we will focus on well-folded curves.

\subsection{Notation for isometries of Gray codes in well-folded curves}
\label{sec:signedpermutations}

The isometric transformations of 1-curves which we need in this paper are those of the hyperoctahedral group of symmetries of the hypercube. This group is the product of the symmetric group ${{S}_{d}}$ (the group of all permutations of the $d$ coordinate axes) and the group of $2^d$ reflections formed by all combinations of reflections in hyperspaces orthogonal to the coordinate axes. Thus there are $d! \ast 2^d$ such transformations.

To distinguish these transformations, we will use \emph{signed permutations}. A signed permutation $\pi$ is a bijection from $\signeddim$ to itself with the property that $\pi(-k) = -\pi(k)$ for $k \in \unsigneddim$. It is denoted by $[\pi(1),\pi(2),\ldots,\pi(d)]$.

Given a $k$-cube $H$, a signed permutation $\pi$ specifies the isometry that maps $H$ onto itself and maps the direction $k$ to the direction $\pi(k)$. If $\pi$ is a signed permutation, then $\pi(\mathcal{X})$ denotes the application of $\pi$ to all elements of the vector, set, or sequence $\mathcal{X}$; $|\pi|$ denotes the permutation $[|\pi(1)|,|\pi(2)|,\ldots,|\pi(d)|]$; and $\pi^{-1}$ denotes the inverse of $\pi$, that is, $\pi^{-1}(x) = y$ if and only if $\pi(y) = x$.


Note that signed permutations do not allow us to express the isometric transformation that consists of reversing a curve. For now, this is not a problem, because the reversal of $G(d)$ is identical to its reflection in coordinate $d$.

We define the \emph{orientation} of an isometry of $G(d)$ as the direction of the vector from entry to exit. 
A direct corollary of Lemma~\ref{lem:graycodeendpoints} is the following:

\begin{corollary}\label{cor:transform1curve}
Let $G^0(d)$ be $G(d)$ with entry point $(0,\ldots,0)$, and let $\pi$ be a signed permutation. Then the coordinates of the entry point $a$ of $\pi(G(d))$ are given by $a[j] = \flipped\big(\pi^{-1}(j)\big)$ for $j \in \unsigneddim$; the orientation of $\pi(G(d))$ is $\pi(d)$; and the coordinates of the exit point $b$ of $\pi(G(d))$ satisfy $b[j] = 1 - a[j]$ for $j = |\pi(d)|$ and $b[j] = a[j]$ for $j \neq |\pi(d)|$.
\end{corollary}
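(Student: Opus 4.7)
The plan is to view the signed permutation $\pi$ as a linear isometry of $\mathbb{R}^d$ and to chase what it does to the known entry and exit of $G^0(d)$ given by Lemma~\ref{lem:graycodeendpoints}, composing with exactly the translation that is forced if we want the transformed curve to sit inside $\{0,1\}^d$.

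First I would make precise how $\pi$ acts on $\mathbb{R}^d$. Since direction $k$ is supposed to map to direction $\pi(k)$, the associated linear map sends the unit vector $e_k$ to $\sgn(\pi(k))\,e_{|\pi(k)|}$, so a vector $x$ is sent to the vector $y$ with $y_j = \sgn(\pi^{-1}(j))\,x_{|\pi^{-1}(j)|}$. Applying this linear map to $G^0(d)$ produces a free curve whose vertex set is the image of the unit cube $\{0,1\}^d$. Along axis $j$, these vertices have coordinates in $\{0,1\}$ if $\pi^{-1}(j)>0$ and in $\{-1,0\}$ if $\pi^{-1}(j)<0$.

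Next I would observe that in order to place the resulting curve so that its vertex set equals $\{0,1\}^d$, one must translate along each axis $j$ for which $\pi^{-1}(j)<0$ by $+1$, i.e.\ by the vector $t$ with $t[j]=\flipped(\pi^{-1}(j))$; this is the \emph{only} translation that works. Since the linear image of the entry $(0,\ldots,0)$ of $G^0(d)$ is $(0,\ldots,0)$, the entry point of $\pi(G(d))$ becomes $a=t$, which gives the claimed formula $a[j]=\flipped(\pi^{-1}(j))$.

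For the exit, I would apply the same procedure to the exit point of $G^0(d)$, which by Lemma~\ref{lem:graycodeendpoints} is the unit vector $e_d$. Its linear image is $\sgn(\pi(d))\,e_{|\pi(d)|}$, and after adding $t$ one gets $b[j]=a[j]$ for $j\neq|\pi(d)|$, while on axis $j^*=|\pi(d)|$ one has $b[j^*]=\sgn(\pi(d))+\flipped(\pi^{-1}(j^*))$. A short case distinction on the sign of $\pi(d)$ (if $\pi(d)>0$ then $\pi^{-1}(j^*)=d>0$ so the correction vanishes and $b[j^*]=1$; if $\pi(d)<0$ then $\pi^{-1}(j^*)=-d<0$ so $b[j^*]=-1+1=0$) confirms that $b[j^*]=1-a[j^*]$ in both cases. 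The vector from $a$ to $b$ is $\sgn(\pi(d))\,e_{|\pi(d)|}$, which, by the paper's definition of the direction of an edge, is precisely the direction $\pi(d)$, giving the orientation claim. The only real obstacle here is sign bookkeeping across the identities $\sgn(i)=1-2\flipped(i)$ and $\pi(-k)=-\pi(k)$; everything else is a direct consequence of Lemma~\ref{lem:graycodeendpoints}.
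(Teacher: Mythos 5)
Your proof is correct and follows exactly the route the paper intends: the paper states this as a ``direct corollary'' of Lemma~\ref{lem:graycodeendpoints} without writing out the details, and your argument is precisely that direct verification (identify the linear action $y[j]=\sgn(\pi^{-1}(j))\,x[|\pi^{-1}(j)|]$, fix the unique translation into $\{0,1\}^d$, and push the known entry and exit of $G^0(d)$ through it). The sign bookkeeping in your case distinction on $\sgn(\pi(d))$ checks out.
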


\paragraph{Isometries in approximating curves}
Consider a sequence of well-folded approximating curves $A_0, A_1, \ldots$. By $\perm_{k,i}$ we denote the transformation (modulo translation) that is applied to $G(d)$ to obtain the 1-curve $C_{k,i}$ that replaces vertex $v_i$ of $A_k$ in the inflation of $A_k$ to $A_{k+1}$. For example, for the curves in Figure~\ref{fig:inflating}, left, we have $\perm_{1,1} = [-1,2]$; $\perm_{1,2} = [-2,1]$; $\perm_{1,3} = \perm_{1,4} = [2,-1]$. As with $v$, $e$, $H$ and $C$, the first subscript will usually be omitted if it is clear from the context.

\subsection{Conditions on edges and isometries in well-folded curves}
\label{sec:continuityconditions}

As observed before, when inflating a curve $A_k$, the 1-curves that replace the vertices of $A_k$ have to be chosen carefully such that the exit of $C_i$ and the entry of $C_{i+1}$ constitute an edge with direction $e_i$. For this we need the conditions as stated in Theorem~\ref{thm:wellfolded} below.

\begin{theorem}\label{thm:wellfolded}
Given a well-folded approximating curve $A_k$ for a particular, fixed level $k\ge 0$.
Inflating $A_k$ to $A_{k+1}$ results in a well-folded approximating curve $A_{k+1}$ if and only if, for each $1 \leq i < \twodk$:
\begin{itemize}
\item for $j \in \{1,\ldots,d\}$ we have $\sgn\big(\perm^{-1}_{i+1}(j)\big) = \sgn\big(\perm^{-1}_i(j)\big)$ if and only if $j$ equals neither or both of $|\perm_i(d)|$ and $|e_i|$; otherwise $\sgn\big(\perm^{-1}_{i+1}(j)\big) = -\sgn\big(\perm^{-1}_i(j)\big)$;
\item $\sgn\big(\perm^{-1}_{i+1}(e_i)\big) = 1$.
\end{itemize}
\end{theorem}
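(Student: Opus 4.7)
The plan is to translate the geometric continuity requirement into algebraic conditions on the signed permutations $\pi_i, \pi_{i+1}$ by plugging in the formulas of Corollary~\ref{cor:transform1curve}, and then read off each of the two bullets as exactly one half of that translation. Since each child curve $C_i$ is already an isometric image of $G(d)$ by construction, well-foldedness of the inflated curve reduces to checking that the concatenation of the $C_i$ is actually a curve on the grid, i.e.\ that for each $i$ the exit of $C_i$ and the entry of $C_{i+1}$ form an edge of direction $e_i$.

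First I would make the edge condition explicit in coordinates. Let $a_i, b_i \in \{0,1\}^d$ be the local entry and exit of $C_i$ (before translating by $2 v_i$). Because $v_{i+1} - v_i$ is $\sgn(e_i)$ on axis $|e_i|$ and $0$ elsewhere, the requirement that $(2 v_i + b_i, 2 v_{i+1} + a_{i+1})$ be an edge of direction $e_i$ is equivalent to the two subconditions \textbf{(A)} $a_{i+1}[j] = b_i[j]$ for all $j \neq |e_i|$, and \textbf{(B)} $a_{i+1}[|e_i|] = \flipped(e_i)$ and $b_i[|e_i|] = 1 - \flipped(e_i)$. Corollary~\ref{cor:transform1curve} gives $a_i[j] = \flipped(\pi_i^{-1}(j))$, and $b_i[j] = a_i[j]$ except that $b_i[|\pi_i(d)|] = 1 - a_i[|\pi_i(d)|]$; so at every axis $j$, $b_i[j]$ differs from $a_i[j]$ iff $j = |\pi_i(d)|$.

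Next I would combine \textbf{(A)} and \textbf{(B)} into a single parity statement about $a_i$ vs.\ $a_{i+1}$: in both cases the question reduces to whether $a_{i+1}[j] = a_i[j]$. A short XOR calculation shows this holds iff $j$ equals neither or both of $|\pi_i(d)|$ and $|e_i|$. Since $\flipped$ and $\sgn$ carry the same information on nonzero signed integers, replacing $\flipped(\pi_i^{-1}(j)) = \flipped(\pi_{i+1}^{-1}(j))$ by $\sgn(\pi_i^{-1}(j)) = \sgn(\pi_{i+1}^{-1}(j))$ yields exactly the first bullet of the theorem.

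The first bullet pins down only the \emph{parity} of the coordinate change at axis $|e_i|$; the actual value of $a_{i+1}[|e_i|]$ demanded by \textbf{(B)} has to be extracted separately, and this is the only spot that requires a tiny case check. Using $a_{i+1}[|e_i|] = \flipped(\pi_{i+1}^{-1}(|e_i|))$ together with the signed-permutation identity $\pi_{i+1}^{-1}(-|e_i|) = -\pi_{i+1}^{-1}(|e_i|)$, the requirement $a_{i+1}[|e_i|] = \flipped(e_i)$ collapses to $\sgn(\pi_{i+1}^{-1}(e_i)) = 1$ for $e_i$ positive or negative alike, which is the second bullet. All of the above implications are reversible, so the conjunction of the two bullets is equivalent to the conjunction of \textbf{(A)} and \textbf{(B)}, giving the iff. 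The only real obstacle is keeping the XOR bookkeeping between $\flipped$, $\sgn$, and the swap induced by $|\pi_i(d)|$ straight; once it is organised as above, each bullet of the theorem corresponds cleanly to one of the two edge subconditions.
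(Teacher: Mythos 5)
Your proposal is correct and takes essentially the same route as the paper: both reduce well-foldedness to the requirement that the exit of $C_i$ and the entry of $C_{i+1}$ form an edge of direction $e_i$, apply Corollary~\ref{cor:transform1curve} to express the relevant coordinates via $\flipped(\perm_i^{-1}(j))$ and $\flipped(\perm_{i+1}^{-1}(j))$, obtain the first bullet from the parity bookkeeping over which coordinates flip (the paper uses a truth table where you use an XOR decomposition into your conditions \textbf{(A)} and \textbf{(B)}), and obtain the second bullet from pinning down the actual value of the entry coordinate of $C_{i+1}$ on axis $|e_i|$, with the same two-sign case check. The differences are purely presentational.
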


\begin{proof}
By construction, $A_{k+1}$ is obtained by replacing the vertices of $A_k$ by isometric images of $G(d)$. The challenge is to prove that the above conditions are necessary and sufficient to guarantee that $A_{k+1}$ is indeed a curve, and hence, well-folded.

Recall that $C_i=\perm_i(G(d))$ is the 1-curve (modulo translation) that replaces vertex $v_i$ of $A_k$ in the process of inflation and $e_i=(v_i,v_{i+1})$. For a given $i$, let $a$, $b$ and $c$ be, respectively, the \emph{entry} of $C_i$, the \emph{exit} of $C_i$, and the \emph{entry} of $C_{i+1}$, all relative to the point $2\ast v_i$. By Corollary~\ref{cor:transform1curve} we have $a[j] = \flipped\big(\perm^{-1}_i(j)\big)$ and $c[j] = \flipped\big(\perm^{-1}_{i+1}(j)\big) \pmod 2$ for $j \in \unsigneddim$.
Our task is to establish the conditions under which $(b,c)$ is indeed an edge with direction $e_i$, that is: for $j = |e_i|$ we should have $c[j] = b[j] + \sgn(e_i)$, and for $j \neq |e_i|$ we should have $c[j] = b[j]$.

Note that if $(b,c)$ is an edge with direction $e_i$, then the path $\langle \perm_i(d), e_i\rangle$ brings us from $a$ via $b$ to $c$. Since each edge increments or decrements one coordinate by 1, it follows that if and only if $j$ equals neither of both of $|\perm_i(d)|$ and $|e_i|$, we have $c[j] = a[j] \pmod 2$ and thus, $\flipped\big(\perm^{-1}_i(j)\big) = \flipped\big(\perm^{-1}_{i+1}(j)\big)$. This proves that the first condition of the theorem is necessary.

Conversely, the first condition of the theorem, together with Corollary~\ref{cor:transform1curve}, gives us that $b[j] = c[j] \pmod 2$ if and only if $j \neq |e_i|$, as witnessed by the following table. The third column is equivalent with the first condition of the theorem, as derived in the previous paragraph. The fourth column follows from Corollary~\ref{cor:transform1curve}. The fifth column follows from the third and the fourth.

\noindent\hbox to\hsize\bgroup\hfill
\begin{tabular}{cc|ccc}
$j = |\perm_i(d)|$ & $j = |e_i|$ & $a[j] = c[j] \pmod 2$ & $a[j] = b[j] \pmod 2$ & $b[j] = c[j] \pmod 2$ \\\hline
 no &  no & yes & yes & yes \\
 no & yes &  no & yes &  no \\
yes &  no &  no &  no & yes \\
yes & yes & yes &  no &  no \\
\end{tabular}\hfill\egroup

\noindent In fact, since $v_i[j]$ and $v_{i+1}[j]$ are equal if $j \neq |e_i|$, we get $b[j] = c[j]$ (without $\pmod 2$) if $j \neq |e_i|$. On the other hand, if $j = |e_i|$, we have, so far, only established $b[j] \neq c[j] \pmod 2$.

To complete the proof of Theorem~\ref{thm:wellfolded}, we will now show that, given $j = |e_i|$ and $c[j] \neq b[j] \pmod 2$, we actually have $c[j] = b[j] + \sgn(e_i)$ if and only if the second condition of the theorem is satisfied. In fact, the second condition, $\sgn\big(\perm^{-1}_{i+1}(e_i)\big) = 1$, expresses that, within the 1-cube filled by $C_{i+1}$, the entry is on the side that is adjacent in direction $e_i$ to the 1-cube filled by $C_i$. To analyse this in more detail, we distinguish two cases: first, $\sgn(e_i) = 1$, and second, $\sgn(e_i) = -1$.

If $\sgn(e_i) = 1$, then $v_{i+1}[j] = v_i[j] + 1$. Hence, by the fact that $A_{k+1}$ is obtained by inflation from $A_k$, we have $0 \leq b[j] \leq 1$ and $2 \leq c[j] \leq 3$. Moreover, given $c[j] \neq b[j] \pmod 2$, we have that $(b[j], c[j])$ is either $(0,3)$ or $(1,2)$. So we have $c[j] = b[j] + \sgn(e_i) = b[j] + 1$ if and only if $c[j] = 2 = 0 \pmod 2$. By Corollary~\ref{cor:transform1curve} this is the case if and only if $\flipped\big(\perm^{-1}_{i+1}(j)\big) = 0$; with $j = |e_i| = e_i$ we can rewrite this as $\sgn\big(\perm^{-1}_{i+1}(e_i)\big) = 1$.

Similarly, if $\sgn(e_i) = -1$, then $v_{i+1}[j] = v_i[j] - 1$. Hence, by the fact that $A_{k+1}$ is obtained by inflation from $A_k$, we have $0 \leq b[j] \leq 1$ and $-2 \leq c[j] \leq -1$. Moreover, given $c[j] \neq b[j] \pmod 2$, we have that $(b[j], c[j])$ is either $(0,-1)$ or $(1,-2)$. So we have $c[j] = b[j] + \sgn(e_i) = b[j] - 1$ if and only if $c[j] = -1 = 1 \pmod 2$. By Corollary~\ref{cor:transform1curve} this is the case if and only if $\flipped\big(\perm^{-1}_{i+1}(j)\big) = 1$; with $j = |e_i| = -e_i$ we can rewrite this as $\sgn\big(\perm^{-1}_{i+1}(e_i)\big) = \sgn\big(\perm^{-1}_{i+1}(-j)\big) = -\sgn\big(\perm^{-1}_{i+1}(j)\big) = 1$.
\end{proof}

Given the edges of $A_k$ and the signs of the inverse permutations, Theorem~\ref{thm:wellfolded} allows us to determine the last elements $\sigma_i(d)$ of each permutation.
Conversely, given the edges and the last elements of each permutation, Theorem~\ref{thm:wellfolded} allows us to determine the signs of each permutation.
Note that this leaves $d-1$ elements of each $|\perm_i|$ unspecified and without consequence: any permutation of those elements will do.

\begin{observation}\label{obs:startingpoint}
Let $f$ be a well-folded space-filling curve approximated by $A_0,A_1,\ldots$, and let $x = f(0)$ be the starting point of $f$. 
Then $x[j] = \sum_{k=0}^\infty \flipped(\perm^{-1}_{k,1}(j)) / 2^{k+1}$.\\
In other words, the digits of the binary representation of $x[j]$ behind the fractional point are\\ $\flipped(\perm^{-1}_{0,1}(j)),\flipped(\perm^{-1}_{1,1}(j)),\flipped(\perm^{-1}_{2,1}(j)),\ldots$.
\end{observation}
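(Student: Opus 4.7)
The plan is to identify the coordinates of $f(0)$ by tracking the entry vertex $v_{k,1}$ of each approximating curve $A_k$, and then converting the grid coordinates into the scaled unit-cube coordinates by dividing by $2^k$ and taking a limit.

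First I would establish a recurrence for $v_{k+1,1}$ in terms of $v_{k,1}$. By the definition of inflation, the vertex $v_{k,1}$ of $A_k$ is replaced by the 1-curve $C_{k,1} = \sigma_{k,1}(G(d))$, whose vertices are translated by $2\ast v_{k,1}$. The entry of $A_{k+1}$ is therefore the entry of $C_{k,1}$, shifted by $2\ast v_{k,1}$. By Corollary~\ref{cor:transform1curve}, the entry of $\sigma_{k,1}(G(d))$ in its standard position has $j$-th coordinate $\flipped(\sigma^{-1}_{k,1}(j))$. Combining, for every $j\in\unsigneddim$,
\[
 v_{k+1,1}[j] \;=\; 2\ast v_{k,1}[j] + \flipped\bigl(\sigma^{-1}_{k,1}(j)\bigr).
\]
Since $A_0$ is a single vertex and, by the scaling convention, the vertices of $A_k$ lie in $\{0,\ldots,2^k-1\}^d$, we have $v_{0,1}=(0,\ldots,0)$. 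Unrolling the recurrence yields the closed form
\[
 v_{k,1}[j] \;=\; \sum_{m=0}^{k-1} 2^{\,k-1-m}\ast \flipped\bigl(\sigma^{-1}_{m,1}(j)\bigr).
\]

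Next I would relate $v_{k,1}$ to $f(0)$. The vertex $v_{k,1}$, being the first vertex visited by $A_k$, corresponds to the hypercube $H_{k,1}$ of width $1/2^k$ traversed first by the space-filling curve $f$. Hence $f(0)\in H_{k,1}$, and in the scaled picture this hypercube has its ``lower'' corner at $v_{k,1}/2^k$, so
\[
 \Bigl|\,f(0)[j] - v_{k,1}[j]/2^k\,\Bigr| \;\leq\; 1/2^k.
\]
Dividing the closed form for $v_{k,1}[j]$ by $2^k$ gives $v_{k,1}[j]/2^k = \sum_{m=0}^{k-1} \flipped(\sigma^{-1}_{m,1}(j))/2^{m+1}$. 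Letting $k\to\infty$ and using the preceding bound yields exactly the claimed identity.

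There is no real obstacle here; the only things to get right are (i) that the convention $v_{0,1}=(0,\ldots,0)$ is implied by the scaling statement of Section~\ref{sec:notation}, and (ii) that Corollary~\ref{cor:transform1curve} indeed gives the entry of $\sigma_{k,1}(G(d))$ in the form used, so that the translation by $2\ast v_{k,1}$ produces the recurrence above. The binary-digit reformulation in the observation is then just a reading of the series $\sum_{m\geq 0} \flipped(\sigma^{-1}_{m,1}(j))/2^{m+1}$ as a binary expansion.
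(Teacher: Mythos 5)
Your argument is correct: the recurrence $v_{k+1,1}[j] = 2\ast v_{k,1}[j] + \flipped\bigl(\sigma^{-1}_{k,1}(j)\bigr)$ follows from the inflation rule together with Corollary~\ref{cor:transform1curve}, and the limit step via $f(0)\in H_{k,1}$ is sound. The paper states this as an observation without proof, and your write-up is precisely the natural justification the authors leave implicit.
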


\section{Hyperorthogonal well-folded curves}\label{sec:hyperorthogonal}

So far, we have been defining and discussing properties of curves that are in fact common to the previously best-known generalizations of Hilbert's curve to higher dimensions. We will now introduce a new property that is \emph{not} satisfied by any of the previously known generalizations that we are aware of, and which will prove useful in designing novel curves with good box-to-curve ratios.

\subsection{Definition and characterization}

\begin{definition}\label{def:hyperorthogonal}
We call a curve \emph{hyperorthogonal} if and only if, for any $n \in \{0,\ldots,d-2\}$,
each sequence of $2^{n}$ consecutive edges have exactly $n+1$ different axes.
A space-filling curve is hyperorthogonal if its approximating curves are hyperorthogonal.
\end{definition}

Notice that an $n$-dimensional 1-cube (in $\mathbb{Z}^d$) can hold at most ${2^n}-1$ consecutive edges of a curve, so any curve constructed by inflation contains sets of ${2^n}$ edges that have at least $n+1$ different axes, for each $n \leq d-1$. Hyperorthogonality requires that this holds for \emph{every} set of $2^n$ edges, provided $n \leq d-2$. The definition leaves little room for being made
more strict (see Inset~\ref{ins:strict}).

\begin{inset}
\caption{No room for a stricter definition of hyperorthogonality}
\label{ins:strict}
In Definition~\ref{def:hyperorthogonal}, the upper bound on $n$ cannot be raised to $d-1$, as this would require that, in two dimensions, no pair of consecutive edges would have the same direction. It is easy to see that a 2-curve with this property cannot be constructed.

\leavevmode\hspace{2em}
Consider a square of four by four vertices, that is, a 2-cube for $d = 2$. Within this square, four vertices lie in a corner. Of these four corner vertices, let ${s}$ be the second one visited by the curve. The curve must visit at least two vertices before ${s}$ and at least two vertices after ${s}$ in order to reach the other corners of the square. Let $S$ be the unit square (quadrant of four vertices) that contains ${s}$.

\leavevmode\hspace{2em}
Now consider the sequence that consists of the two edges that precede ${s}$ and the two edges that follow ${s}$. Since this sequence visits five vertices, it clearly does not fit in $S$, and therefore the $2^{d-1} = 2$ edges preceding ${s}$ or the $2^{d-1} = 2$ edges following ${s}$ do not fit in a two-dimensional unit cube. Hence, either two edges preceding $s$ or the two edges following $s$ must be collinear.

\medskip
Hyperorthogonality still allows that less than $2^{n}$, but more than $2^{n-1}$, consecutive edges also span a $(n+1)$-dimensional space---this is also necessary, since otherwise, even if $d \geq 3$, any three consecutive edges would be restricted to alternating between two dimensions, which, by induction, would restrict the whole curve to edges alternating between two dimensions.
\end{inset}

For $d=2$, hyperorthogonality requires only that each single edge spans a one-dimensional space, which is obvious. So all two-dimensional curves are hyperorthogonal.

For $d=3$ each two consecutive edges must span a two-dimensional space, so each pair of consecutive edges must be orthogonal. (For that reason the property is called `hyperorthogonal' for higher dimensions as well.)

Note that $G(d)$ is hyperorthogonal for all $d$.

As can be seen by inspecting familiar generalizations of Hilbert curves to three dimensions, if we construct a sequence of curves $A_0,\ldots,A_k$ in three or more dimensions by inflation, using isometric images of $G(d)$ to inflate vertices, then $A_k$ is not necessarily hyperorthogonal, even though $G(d)$ is (see, for example, the Butz-Moore curve in Figure~\ref{fig:3dcurves}, right, where there are two collinear edges along the top back edge of the cube). The next theorem states what conditions the isometries should fulfill in order to obtain hyperorthogonal curves.

\begin{definition}\label{def:depth}
The \emph{depth} of a direction $a$ in a signed permutation $\pi$, denoted $\permdp(\pi,a)$, is defined as follows: if $|a| \in \{|\pi(d)|, |\pi(d-1)|\}$, then $\permdp(\pi,a) = 0$, otherwise $\permdp(\pi,a)$ is the number $j$ such that $|\pi(d-1-j)| = |a|$.
\end{definition}
\noindent So the depth of $\pi(d-2)$ is $1$, the depth of $\pi(1)$ is $d-2$, and since each axis occurs in $\pi$, each direction has a depth in $\pi$.

\begin{theorem}\label{thm:hyperorthogonal}
For fixed $k$, let $K=\twodk $, and let $A_0,\ldots,A_{k+1}$ be a sequence of well-folded curves constructed by inflation (with all the associated notation introduced in the previous sections). Suppose $A_k$ is hyperorthogonal, then $A_{k+1}$ is hyperorthogonal as well if and only if the following conditions are satisfied:

\begin{enumerate}
\item\label{item:1} for each $i \in \{1,\ldots,K-1\}$: $\permdp(\perm_{k,i}, e_{k,i}) = 0= \permdp(\perm_{k,i+1}, e_{k,i}) $;
\item\label{item:2} for each $i \in \{1,\ldots,K-1\}$ and each direction $a$:\\$|\permdp(\perm_{k,i}, a) - \permdp(\perm_{k,i+1}, a)| \leq 1$.
\end{enumerate}
\end{theorem}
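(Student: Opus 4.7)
The plan is to analyze sequences of $2^n$ consecutive edges in $A_{k+1}$ for $n \in \{0,\ldots,d-2\}$. Since a single child curve has $2^d - 1$ edges and $2^n \leq 2^{d-2}$, every such sequence either lies entirely within one child curve $C_i = \perm_i(G(d))$, or crosses exactly one connecting edge $e_i$ between consecutive children $C_i$ and $C_{i+1}$. The first case is automatic: because $\perm_i$ permutes axes bijectively, the hyperorthogonality of $G(d)$ immediately yields that any $2^n$-edge sub-sequence of $C_i$ spans exactly $n+1$ axes. So all the work concerns boundary-crossing sequences.

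For such a sequence, I would parameterize it as the last $j$ edges of $C_i$, the edge $e_i$, and the first $m = 2^n - 1 - j$ edges of $C_{i+1}$. By Lemma~\ref{lem:prefixspan}, the relevant axis sets are $L = |\perm_i|(\{1,\ldots,m_1\})$, $E = \{|e_i|\}$, and $R = |\perm_{i+1}|(\{1,\ldots,m_2\})$, with $m_1 = \lceil \log_2(j+1)\rceil$ and $m_2 = \lceil \log_2(m+1)\rceil$. Hyperorthogonality amounts to $|L \cup R \cup E| = n+1$ for every valid choice of $j$. A short arithmetic check will show that $m_1 = m_2$ is impossible (it would force $2^n-1$ to lie in an interval incompatible with being one less than a power of two), so WLOG $m_1 = n > m_2$.

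For sufficiency, assuming both conditions, condition~\ref{item:1} gives $|e_i| \notin L \cup R$ (since $m_1, m_2 \leq d-2$), reducing the goal to $R \subseteq L$. For any $a \in R$, the bound $|\perm_{i+1}^{-1}(a)| \leq m_2 \leq n-1 \leq d-3$ forces $\permdp(\perm_{i+1},a) \geq 2$; condition~\ref{item:2} then gives $\permdp(\perm_i, a) \geq 1$, so $|\perm_i^{-1}(a)| \leq d-2$, and a second use of condition~\ref{item:2} gives $|\perm_i^{-1}(a)| \leq |\perm_{i+1}^{-1}(a)| + 1 \leq m_2 + 1 \leq m_1$, i.e., $a \in L$. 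This is the cleanest part of the argument.

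For necessity, I would argue by contrapositive. If condition~\ref{item:1} fails, say $\permdp(\perm_i, e_i) \geq 1$, then the sequence of the last $2^{d-2} - 1$ edges of $C_i$ followed by $e_i$ has axis set exactly $|\perm_i|(\{1,\ldots,d-2\})$ (since $|e_i|$ already lies in this set), which has only $d-2 < d-1$ distinct axes, violating hyperorthogonality at $n = d-2$. If condition~\ref{item:1} holds but condition~\ref{item:2} fails for some $a$ with $\alpha = \permdp(\perm_i, a) \geq \permdp(\perm_{i+1}, a) + 2 = \beta + 2$ (the opposite inequality is symmetric), note $a \neq |e_i|$ by condition~\ref{item:1}, and choose $j = 2^{d-2-\alpha}$ so that $a \in L$; then take $n = d-2$ when $\beta = 0$ and $n = d-2-\beta$ when $\beta \geq 1$, each time with $m = 2^n - 1 - j$. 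In both cases one checks $m \geq 0$ and $m+1 \leq 2^{d-2-\beta}$, so $a \notin R$, giving $a \in L \setminus (R \cup E)$ and $|L \cup R \cup E| > n+1$, a contradiction. The main obstacle is precisely this bookkeeping in the necessity step: verifying that $j + m + 1 = 2^n$ with $n \in [1,d-2]$ and that the construction genuinely places $a$ outside $R \cup E$; the case $\beta = 0$ requires slightly separate handling because depth $0$ corresponds to two positions $\{d-1, d\}$ in $|\perm_{i+1}|$ rather than one.
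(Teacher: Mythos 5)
Your proof is correct and follows essentially the same route as the paper: split $2^n$-edge windows into those lying inside one child curve (where the hyperorthogonality of $G(d)$ suffices) and those crossing a single connecting edge, use Lemma~\ref{lem:prefixspan} to identify the axis sets on either side, derive $R \subseteq L$ from the two conditions for sufficiency, and exhibit explicit violating windows for necessity. The only cosmetic differences are your depth-based parameterization of the necessity counterexamples versus the paper's position-based one, and the trivially handled $n=0$ window (a lone connecting edge), where your claim that $m_1 \neq m_2$ technically fails but nothing is at stake since a single edge always has exactly one axis.
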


\begin{proof}
As usual, we will omit the subscripts $k$ in this proof.

\textit{Necessity:}
Suppose condition \ref{item:1} is violated, that is, $|e_i| \in \{|\perm_i(1)|,\ldots,|\perm_i(d-2)|\}$ or\\ $|e_i| \in \{|\perm_{i+1}(1)|,\ldots,|\perm_{i+1}(d-2)\}$.
We analyse the first case $|e_i| \in \{|\perm_i(1)|,\ldots,|\perm_i(d-2)|\}$, the second case is symmetric.\\
Consider the last $2^{d-2} - 1$ edges of $\perm_i(G(d))$: by Lemma~\ref{lem:prefixspan}, their axes form the set $\{|\perm_i(1)|,\ldots,|\perm_i(d-2)|\}$. In $A_{k+1}$, these edges will be followed by an edge with axis $|e_i| \in \{|\perm_i(1)|,\ldots,|\perm_i(d-2)|\}$. Thus we get a sequence of $2^{d-2}$ edges with only $d-2$ different axes: too few for hyperorthogonality as defined by Definition~\ref{def:hyperorthogonal}.

Now suppose condition \ref{item:1}  is satisfied, but condition \ref{item:2} is violated, that is, there are $h$ and $j$ such that $|\perm_i(h)| = |\perm_{i+1}(j)|$ or $|\perm_i(j)| = |\perm_{i+1}(h)|$ and $h+1 < j \leq d-1$ or $h+2 < j = d$.
We analyse the first case $|\perm_i(h)| = |\perm_{i+1}(j)|$, the second case is symmetric. \\
Observe that $|\perm_i(1)|,\ldots,|\perm_i(h)|$ all differ from $|e_i|$, since $h \leq d-3$. Also,  in $\perm_{i+1}(G(d))$, the axes $|\perm_{i+1}(1)|,\ldots,|\perm_{i+1}(h+1)|$ all differ from $|e_i|$ as well as from $|\perm_{i+1}(j)| = |\perm_i(h)|$. Now consider the sequence of $2^{h+1}$ edges that consists of the last $2^{h}-1$ edges of $\perm_i(G(d))$, followed by the edge with direction $e_i$, and the first $2^{h}$ edges of $\perm_{i+1}(G(d))$. By Lemma~\ref{lem:prefixspan}, the last $2^h$ edges of this sequence have $h+1$ different axes $|\perm_{i+1}(1)|,\ldots,|\perm_{i+1}(h+1)|$, while the first $2^h$ edges contribute two more different axes, namely $|\perm_i(h)|$ and $|e_i|$. Thus there are $h+3$ different axes in this sequence of $2^{h+1}$ edges: too many for hyperorthogonality as defined by Definition~\ref{def:hyperorthogonal}.

\textit{Sufficiency:}
We distinguish two cases: a sequence of $2^n$ edges $E$, with $n \leq d-2$, either lies within a single 1-curve $\perm_i(G(d))$, or not.

In the first case, let $j$ be the highest index such that $E$ includes an edge with axis $\perm_i(j)$. From Definition~\ref{def:graycode} we get that the axes of the edges of $E$ are, in order, for some $m \leq 2^n-1$, those of the last $m$ edges of $\perm_i(G(j-1))$, followed by $\perm_i(j)$ and the axes of the first $2^n - m - 1$ edges of $\perm_i(\reverse{G(j-1)})$.
 Since either $m$ or $2^n - m - 1$ must be at least $2^{n-1}$, it follows from Lemma~\ref{lem:prefixspan} that the axes of these edges are exactly $\{|\perm_i(1)|,\ldots,|\perm_i(n)|,|\perm_i(j)|\}$, since $j > n$. Hence, the edges of $E$ have exactly $n+1$ different axes and satisfy the conditions for hyperorthogonality.

In the second case, $E$ consists of the last $m$ edges in a 1-curve $\perm_i(G(d))$, followed by $\langle e_i\rangle$, and $2^n - m - 1$ edges in $\perm_{i+1}(G(d))$. Assume $m > 2^n - m - 1$ (the opposite case is symmetric), and hence, $2^{n-1} \leq m \leq 2^n - 1$ and $2^n - m - 1 \leq 2^{n-1} - 1$. By Lemma~\ref{lem:prefixspan}, the first $m$ edges have axes $\{|\perm_i(1)|,\ldots,|\perm_i(n)|\}$. Since $|e_i| \in \{|\perm_i(d-1)|,|\perm_i(d)|\}$ and $d-1 > n$, the edge $\langle e_i\rangle$ contributes one more axis. The remaining edges have axes from $\{|\perm_{i+1}(1)|,\ldots,|\perm_{i+1}(n-1)|\}$, which, because of the second condition of the theorem, is a subset of $\{|\perm_i(1)|,\ldots,|\perm_i(n)|\}$; hence these edges do not contribute any more axes. In total, the edges of $E$ have exactly $n+1$ different axes and satisfy the conditions for hyperorthogonality.
\end{proof}

\subsection{Box-to-curve ratio $\le 4$}

To bound the box-to-curve ratio (\BCR) of sections of hyperorthogonal well-folded space-filling curves, we will make use of the following lemma:

\begin{lemma}\label{lem:fitscubes}
For any $n \in \{0,1,\ldots,d-2\}$, each sequence of $2^n$ consecutive edges of a well-folded, hyperorthogonal curve lies inside an $(n+1)$-dimensional unit cube.
\end{lemma}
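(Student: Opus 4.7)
The plan is to derive the lemma from two ingredients: a structural fact about $G(d)$ itself, and the fact (forced by hyperorthogonality, via the necessity direction of Theorem~\ref{thm:hyperorthogonal}) that the transition axis between two consecutive child curves sits at position $d-1$ or $d$ of both of their isometries.

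First I would prove the structural claim that any $2^n$ consecutive edges of $G(d)$, for $n \leq d-2$, fit inside an $(n+1)$-dimensional unit cube. Using the recursive decomposition $G(d) = G(d-1), \langle d\rangle, \reverse{G(d-1)}$ and iterating, I would find the smallest $G(j)$-substructure containing the subsequence: by minimality the subsequence straddles the axis-$j$ middle edge of that substructure, so it decomposes as the last $p$ edges of $G(j-1)$, the edge $\langle j\rangle$, and the first $q$ edges of $\reverse{G(j-1)}$, with $p+q+1 = 2^n$. Since the larger of $p, q$ is at least $2^{n-1}$, Lemma~\ref{lem:prefixspan} gives that the two tails together with the middle edge use exactly the axes $\{1,\ldots,n,j\}$. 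Coordinate $j$ changes only on the middle edge; coordinates $\{1,\ldots,n\}$ stay in $\{0,1\}$ inside each half-sub-cube; all other coordinates are constant. Hence the subsequence fits in an $(n+1)$-dim unit cube.

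Now consider the given well-folded hyperorthogonal curve $A_k$ (with $k\geq 1$), inflated from $A_{k-1}$. A short pigeonhole shows that $2^n$ consecutive edges of $A_k$ span at most two child curves: spanning three or more would require at least one fully-traversed middle child curve, contributing $2^d$ vertices alone, whereas the subsequence has only $2^n + 1 \leq 2^{d-2} + 1$ vertices. If the subsequence lies inside a single $C_i = \sigma_{k-1,i}(G(d))$, the structural claim finishes. Otherwise, it is the last $\alpha-1$ edges of $C_i$, the transition edge $\langle e_{k-1,i}\rangle$, and the first $\beta-1$ edges of $C_{i+1}$, with $\alpha+\beta = 2^n+1$. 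Lemma~\ref{lem:prefixspan} identifies the axes used inside each child-curve piece as $\{|\sigma_{k-1,i}(1)|,\ldots,|\sigma_{k-1,i}(m_1)|\}$ respectively $\{|\sigma_{k-1,i+1}(1)|,\ldots,|\sigma_{k-1,i+1}(m_2)|\}$, with $m_1, m_2 \leq n \leq d-2$. The key input is the necessity direction of condition~1 of Theorem~\ref{thm:hyperorthogonal}, which hyperorthogonality of $A_k$ forces: it places $|e_{k-1,i}|$ at position $d-1$ or $d$ of both $\sigma_{k-1,i}$ and $\sigma_{k-1,i+1}$, so $|e_{k-1,i}|$ lies outside both position-indexed axis sets above. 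Hence the coordinate $|e_{k-1,i}|$ changes only on the transition edge; hyperorthogonality of $A_k$ further bounds the total number of axes used in the subsequence by $n+1$; and because $v_{k-1,i}$ and $v_{k-1,i+1}$ agree in every coordinate except $|e_{k-1,i}|$, the macro-cubes of $C_i$ and $C_{i+1}$ coincide in each of the remaining $n$ used coordinates. All $n+1$ varying coordinates then have range $1$.

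The main obstacle is precisely this alignment step in the two-child-curves case: a priori a coordinate $c$ could lie in $\{2a,2a+1\}$ inside $C_i$ but in $\{2a+1,2a+2\}$ inside $C_{i+1}$, yielding overall range $2$. Ruling this out requires both ingredients together: $v_{k-1,i}[c] = v_{k-1,i+1}[c]$ for the non-transition coordinates (which aligns the two macro-cubes on those coordinates), and condition~1 of Theorem~\ref{thm:hyperorthogonal} (which confines the transition coordinate's only change to the single transition edge, keeping it out of the shallow axes moved inside the children). Once both observations are in hand, the rest is routine bookkeeping.
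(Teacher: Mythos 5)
Your proposal is correct and follows essentially the same route as the paper: the paper likewise splits into the single-child-curve and two-child-curve cases (by pointing back to the sufficiency proof of Theorem~\ref{thm:hyperorthogonal}), uses Lemma~\ref{lem:prefixspan} and condition~1 to show the connecting axis occurs on only one edge, and gets the axis count $n+1$ from hyperorthogonality. You merely inline the arguments the paper cites by reference and spell out the macro-cube alignment step that the paper treats as immediate.
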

\begin{proof}
Definition~\ref{def:hyperorthogonal} states that each sequence $E$ of $2^n$ consecutive edges of a hyperorthogonal curve lies inside an axis-aligned box that has non-zero width in exactly $n+1$ dimensions. Therefore, to prove the lemma, we  only have to show that in none of these $n+1$ dimensions, the width is more than~1.

We can prove this by inspection of the sufficiency proof of Theorem~\ref{thm:hyperorthogonal}.
In the first case, the width is not more than 1 in any dimension, since all of $E$ lies inside a single unit cube.
In the second case, $E$ lies in the union of two unit cubes, which is a box with width 3 in dimension $|e_i|$, and width 1 in the remaining dimensions. However, as the proof argues, $E$ contains only one edge with axis $|e_i|$; hence the width in this dimension is only one.
\end{proof}

\begin{theorem}\label{thm:bcr}
The box-to-curve ratio of any section of a hyperorthogonal well-folded space-filling curve is at most~4.%
\end{theorem}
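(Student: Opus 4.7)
The plan is to reduce the theorem to a purely combinatorial statement about subpaths of the approximating curves, and then prove that statement by a two-scale argument that uses Lemma~\ref{lem:fitscubes} for short pieces and the inflation hierarchy for long ones. Every section $\curve([a,b])$ is a limit of subpaths of the $A_k$; furthermore, for a subpath with $L+1$ grid vertices the covered volume scales exactly as $(L+1)\cdot(1/2^k)^d$ and the bounding box as $\prod_j(w_j+1)\cdot(1/2^k)^d$. So it suffices to show: for every $k\ge 1$ and every subpath $P$ of $A_k$ with $L+1\ge 2$ vertices, $\volume{\mathrm{BB}(P)}\le 4(L+1)$.

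\emph{Short subpaths} ($L\le 2^{d-2}$) are immediate from Lemma~\ref{lem:fitscubes}. Setting $n=\lceil\log_2 L\rceil\le d-2$, the edges of $P$ fit inside some window of $2^n$ consecutive edges of $A_k$ (enlarging $k$ if necessary to provide room on both sides), and Lemma~\ref{lem:fitscubes} confines that window to an $(n+1)$-dimensional unit cube of volume $2^{n+1}\le 4L$. Therefore $\BCR(P)\le 4L/(L+1)<4$. Note this case already requires nothing more than hyperorthogonality and well-foldedness.

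\emph{Long subpaths} ($L>2^{d-2}$) do not fit in a single $(d-1)$-dimensional unit cube, and here I would use the multi-scale inflation structure. Pick the largest level $m$ such that $P$ is contained in the union of two adjacent level-$m$ macroblocks (cubes of side $2^{k-m}$ in the grid of $A_k$); such $m$ exists because $m=0$ always works. Then $\volume{\mathrm{BB}(P)}\le 2\cdot 2^{(k-m)d}$. By maximality, at level $m+1$ the subpath $P$ spans at least three macroblocks, so at least one complete level-$(m+1)$ macroblock lies inside $P$, contributing covered volume at least $2^{(k-m-1)d}$. The remaining length of $P$ is split between two partial macroblocks at the ends of $P$; for each end, Lemma~\ref{lem:fitscubes} (now applied to the prefix and suffix of $P$ that live inside the end macroblock) bounds how small the covered volume can be relative to how far into the macroblock $P$ extends. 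Summing these contributions gives $L+1\ge \tfrac14\cdot 2\cdot 2^{(k-m)d}$, which is exactly what is needed.

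The hard part will be the bookkeeping in the long case: a crude union bound on the two partial ends loses more than a constant factor and must be replaced by an argument that uses the full force of Theorems~\ref{thm:wellfolded} and~\ref{thm:hyperorthogonal}. Specifically, when a partial end of $P$ reaches deeply into its end macroblock at level $m$, hyperorthogonality forces the end to cover that macroblock almost fully before it can swing far in a new axis, while a shallow end can contribute at most proportionally little to the bounding box. Quantifying this trade-off — probably via an induction on $m$, with Lemma~\ref{lem:fitscubes} supplying the base case — is what pins the constant to exactly $4$ rather than something strictly larger.
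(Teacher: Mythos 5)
Your reduction to a discrete statement about subpaths of $A_k$, and your choice of the critical level $m$ (the largest level at which $P$ fits in two adjacent macroblocks, so that at level $m+1$ it spans at least three and hence contains one complete, fully covered macroblock) is exactly the decomposition the paper uses, and your short case is the paper's case C in disguise. But the long case is where the theorem actually lives, and there the proposal has a genuine gap: the target inequality you state is false, and the tools you name would not close it. You claim the two partial ends plus the complete middle block give $L+1\ge\frac14\cdot 2\cdot 2^{(k-m)d}$, i.e.\ covered volume at least $2^{d-1}$ level-$(m+1)$ macroblocks' worth; but a subpath spanning exactly three level-$(m+1)$ macroblocks with two nearly empty ends has covered volume only about one macroblock's worth, so no such lower bound on $L+1$ can hold. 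What is true instead is that when the covered volume is small, the \emph{bounding box} must shrink well below $2\cdot 2^{(k-m)d}$, and proving that is precisely the missing content. Moreover, Lemma~\ref{lem:fitscubes} cannot be ``applied to the prefix and suffix of $P$ that live inside the end macroblock'': those prefixes may contain far more than $2^{d-2}$ edges, outside the lemma's range.

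The paper closes this gap by working at the macroblock scale rather than on $P$ itself: it takes $E$ to be the subcurve of $A_{m+1}$ consisting of the touched macroblocks, notes $\volume{E}\le 2^{d+1}$ and that all vertices of $E$ except the first and last are fully covered, and then splits on $\volume{E}$. For $\volume{E}\ge 2^{d-1}+2$ the trivial box bound $2^{d+1}$ already gives ratio $4$; for $\volume{E}\le 2^{d-2}+1$ it applies Lemma~\ref{lem:fitscubes} to $E$ \emph{as a subcurve of $A_{m+1}$} (not to the ends of $P$); and in the intermediate range it uses Lemma~\ref{lem:prefixspan} together with condition~1 of Theorem~\ref{thm:hyperorthogonal} (the connecting edge's axis has depth $0$ in $\perm_x$) to show that a short end piece of at most $2^{d-2}$ macroblocks is confined to the half-cube of its 1-cube adjacent to the connecting face, which cuts the bounding box down to $2^d$ or $\frac32\cdot 2^d$. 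That half-cube confinement is the step where hyperorthogonality genuinely enters and where your proposed ``induction on $m$'' would have to be replaced by a concrete argument. Until those intermediate cases are carried out, the constant $4$ is not established.
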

\begin{proof}
Consider a section $s$ of a hyperorthogonal well-folded space-filling curve $f$, approximated by a series of curves $A_0,A_1,\ldots$.
Let $E_k$ be the subcurve of $A_k$ that contains all vertices $v_i$, representing hypercubes $H_i$ of width $1/2^k$, whose interiors are intersected by $s$.
For $\{v_h,\ldots,v_j\}=E_k$, the bounding box of $s$ is contained in the smallest axis-aligned box that fully contains all hypercubes $H_h,\ldots,H_j$.

Now let $k$ be the smallest index such that $E_{k+1}$ contains at least one vertex that represents a hypercube of width $1/2^{k+1}$ that is fully contained in $s$. By this choice of $k$, the subcurve $E_k$ of $A_k$ does not contain any vertex that represents a hypercube of width $1/2^k$ that is fully contained in $s$. Thus, $E_k$ contains only a single vertex $x$, or two vertices $x$ and $y$, and $E_{k+1}$ consists of vertices from the respective child curves $C_x$ and $C_y$ that replace $x$ and $y$ in the inflation from $A_k$ to $A_{k+1}$.

Note that this implies that the bounding box of $E_{k+1}$ has at most the volume of two 1-cubes, that is $2^{d+1}$. Define $E = E_{k+1}$, let $X$ be the maximum common subcurve of $C_x$ and $E$, and, if $y$ exists, let $Y$ be the maximum common subcurve of $C_y$ and $E$, otherwise $Y = \emptyset$. Thus, $\volume{Y} = \volume{E} - \volume{X}$; without loss of generality, assume $\volume{Y} \leq \volume{X}$. Furthermore, let $c = |e_{\min(x,y)}|$ be the axis of the connecting edge of $X$ and $Y$.

A number of cases with smartly chosen boundaries for $\volume{E}$, $\volume{X}$ and $\volume{Y}$ can now be distinguished, as shown in the table below. In each case, we derive an upper bound $\mathit{MaxBoxVol}$ on the bounding box volume, and a lower bound $\mathit{MinCrvVol}$ on the number of vertices of $E$ that represent hypercubes completely covered by $s$ (this is usually all of $E$ except for the first and last vertex). From this we can derive that the box-to-curve ratio is less than $\mathit{MaxBoxVol} / \mathit{MinCrvVol} \leq 4$.

\addvspace\baselineskip
\noindent\hbox to\hsize{\hfill\def\arraystretch{1.4}\begin{tabular}{|c|l|c|c|}\hline
\multicolumn{2}{|@{\,}l@{\,}|}{\rule[-1mm]{0mm}{5.5mm} Case}& $\mathit{MaxBoxVol}$ & $\mathit{MinCrvVol}$ \\
\hline\hline
\textbf{A} & $2^{d-1} +2 \leq \volume{E} \leq 2^{d+1}$ & $2^{d+1}$    & $2^{d-1}$ \\
\hline
B & $2^{d-2} +2 \leq \volume{E} \leq 2^{d-1} +1$ and... & & \\
B1& ...and $\volume{Y} \leq \volume{X} \leq 2^{d-2}$
& $2^d $        & $2^{d-2} $ \\
B2& ...and $2^{d-3} < \volume{Y} \leq 2^{d-2} < \volume{X}$
& $\frac32 \cdot\, 2^d $ & $\frac32 \cdot 2^{d-2} $ \\
B3& ...and $1 \leq \volume{Y} \leq 2^{d-3}$
& $2^d $        & $2^{d-2} $ \\
\textbf{B4}& ...and $\volume{Y} = 0$
& $2^d $        & $2^{d-2} $ \\
\hline
C& $3 \leq \volume{E} \leq 2^{d-2} + 1$
& $4 (\volume{E}-2) $ & $\volume{E} - 2$\\
\hline
\textbf{D}& $\volume{E} \leq 2$
& $2 $          & 1 \\
\hline
\end{tabular}\hfill}
\par\addvspace\baselineskip

\noindent Note that B1, B2, B3, and B4 are subcases for the same bounds on $\volume{E}$, where B1 is the case of having small $X$, and B2, B3, and B4 are the cases of large $X$ with various bounds on the size of $Y$. For cases  \textbf{A}, \textbf{B4}, and  \textbf{D} the bounds on the bounding box volume are trivial; cases B1, B2, B3, and C require a more careful analysis.

\begin{enumerate}
\item[B1:] By Theorem~\ref{thm:hyperorthogonal}, for the axis $c$ of the connecting edge between $X$ and $Y$ we have $\permdp(\perm_x,c) = 0$. Since $\volume{X} \leq 2^{d-2}$, and thus, the length of $X$ is at most $2^{d-2}-1$, Lemma~\ref{lem:prefixspan} now tells us that the edges of $X$ have axes from $|\perm_x(1)|,\ldots,|\perm_x(d-2)|$, hence not including $c$. Therefore $X$ is included in the half-cube ($(d-1)$-dimensional 1-cube) that consists of the vertices of $C_x$ that are adjacent to vertices of $C_y$. Likewise, $Y$ is included in the half-cube that consists of the vertices of $C_y$ that are adjacent to $C_x$. These two half-cubes together constitute a $d$-dimensional unit cube of volume $2^d$.

\item[B2:] As in case B1, $Y$ is included in the half-cube that consists of the vertices of $C_y$ that are adjacent to $C_x$. This half-cube, together with $C_x$, has a bounding box of volume $\frac32 \cdot 2^d$. The minimum curve volume $\mathit{MinCrvVol}$ is at least $\volume{E} - 2 = \volume{X} + \volume{Y} - 2 \geq 2^{d-2} + 2^{d-3} = \frac32 \cdot 2^{d-2}$.

\item[B3:] Given the bounds on $\volume{Y}$ and $\volume{X} \le 2^{d-1}$, Lemma~\ref{lem:prefixspan} tells us that the edges of $X$ have axes from $|\perm_x(1)|,\ldots,|\perm_x(d-1)|$, and the edges of $Y$ have axes from $|\perm_y(1)|,\ldots,|\perm_y(d-3)|$. Now let $a=|\perm_x(d)|$. By Theorem~\ref{thm:hyperorthogonal}, $\permdp(\perm_y,a) \leq \permdp(\perm_x,a) + 1 = 1$ and therefore $a$ is not included in $|\perm_y(1)|,\ldots,|\perm_y(d-3)|$. If $a = c$, it follows that $X$ and $Y$ lie in half-cubes that together constitute a unit cube of volume $2^d$, as in case B1.

    Otherwise, if $a \neq c$, it follows that $E$ may contain multiple edges of direction $c$ but does not include any edge with direction $a$. Therefore $E$ lies completely in a box that spans two 1-cubes in dimension $c$, half a 1-cube in dimension $a$, and one 1-cube in the remaining dimensions. The volume of this box is $2^d$.

\enlargethispage\baselineskip
\item[C:] By Lemma~\ref{lem:fitscubes}, each set of $\volume{E} - 1$ edges of $A_k$ is contained in a unit cube of $\lceil\log_2(\volume{E}-1)\rceil + 1 = \lfloor \log_2(\volume{E}-2)\rfloor + 2$ dimensions, of volume at most \hbox{$4(\volume{E}-2)$}.
\end{enumerate}
\end{proof}

\section{General construction method in three and more dimensions}\label{sec:existence}

\subsection{Extended curves and local edge distance}

\begin{wrapfigure}[22]{r}{0.5\textwidth}
\centering
\vspace{-1cm} 
{\hfill \includegraphics[scale=0.9]{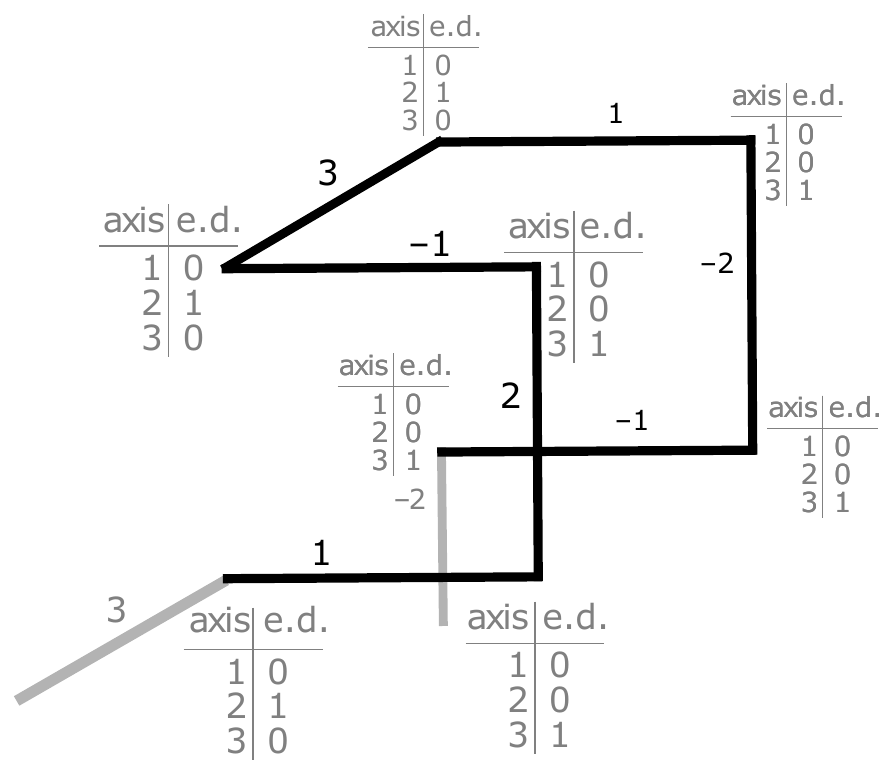} }
\vspace*{-0.5cm}
\caption{in black: $G(3)$ with the directions of its edges; in grey: an extension of $G(3)$ with an entry edge $\langle d\rangle$ and an exit edge $\langle -(d-1)\rangle$, with the edge distance table  for each vertex according to Definition~\ref{def:edgedistance}.}
\label{fig:G3ab1}
\end{wrapfigure}

In Section~\ref{sec:wellfolded}, Theorem~\ref{thm:wellfolded}, we learned about sufficient and necessary conditions for well-folded curves in general, and in Section~\ref{sec:hyperorthogonal}, Theorem~\ref{thm:hyperorthogonal}, we learned about specific conditions for hyperorthogonal well-folded curves.
It remains to show that curves satisfying both the general and the specific conditions actually exist.

In this section we will combine the conditions of Theorems \ref{thm:wellfolded} and~\ref{thm:hyperorthogonal} to derive conditions on the entry and exit points and the isometries used in the construction of hyperorthogonal well-folded curves. We will show how to construct curves that satisfy all conditions, for any $d \geq 3$ (recall that for $d = 2$, we have Hilbert's curve).

\begin{definition}\label{def:edgedistance}
The \emph{edge distance} of the axis $a \in \unsigneddim$ to the vertex $v$ within the curve $C$, denoted $\edgedist(C,v,a)$, is the distance along $C$ between $v$ and the closest edge with axis $a$; more precisely, $\edgedist(C,v,a)$ is one less than the length of the smallest subcurve of $C$ that includes $v$ and an edge with axis $a$.
(For a small example, see Figure~\ref{fig:G3ab1}.)
\end{definition}
Theorem~\ref{thm:hyperorthogonal} has a remarkable consequence:
\begin{lemma}\label{lem:edgedistance}
Let $A_0, \ldots A_{k+1}$ be a sequence of well-folded hyperorthogonal
curves constructed by inflation. Then we have, for all axes $a \in
\{1,\ldots,d\}$ and all vertices $v_i$ of $A_k$, $\permdp(\perm_i, a)
\leq \edgedist(A_k,v_i,a)$.
\end{lemma}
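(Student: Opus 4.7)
The plan is to exploit the two conditions of Theorem~\ref{thm:hyperorthogonal} in a complementary way: the first condition pins the depth of an axis to $0$ at the two vertices incident to any edge whose axis equals that axis, while the second condition says that, for fixed $a$, the map $i' \mapsto \permdp(\perm_{i'}, a)$ is $1$-Lipschitz in the vertex index $i'$ along $A_k$. Chaining these two observations turns the lemma into a straightforward index-counting argument.

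First I would fix an axis $a \in \{1,\ldots,d\}$ and a vertex $v_i$ of $A_k$, and locate an edge $e_m$ of $A_k$ with $|e_m| = a$ that is nearest to $v_i$ along the curve. For $k \geq 1$, $A_k$ is a Hamiltonian path on a $k$-cube that uses every axis at least once, so such an index $m \in \{1,\ldots,K-1\}$ exists; the case $k = 0$ is vacuous since $A_0$ has no edges. Unfolding the definition of $\edgedist$ then gives $\edgedist(A_k, v_i, a) = m - i$ if $m \geq i$, and $\edgedist(A_k, v_i, a) = i - m - 1$ if $m < i$.

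Next I would apply the first condition of Theorem~\ref{thm:hyperorthogonal} at the edge $e_m$ to obtain $\permdp(\perm_m, a) = \permdp(\perm_{m+1}, a) = 0$. The second condition, applied repeatedly across the intermediate vertex indices between $m$ (or $m+1$) and $i$, then yields $\permdp(\perm_i, a) \leq m - i$ when $m \geq i$ and $\permdp(\perm_i, a) \leq i - m - 1$ when $m < i$. In either case the bound coincides exactly with $\edgedist(A_k, v_i, a)$, which is the inequality claimed.

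The only possible pitfall is bookkeeping: one must keep straight the off-by-one between vertex indices and edge indices in the definition of $\edgedist$, which is what forces the two cases $m \geq i$ and $m < i$ to be handled slightly differently. Beyond that, the lemma is essentially a direct reading of Theorem~\ref{thm:hyperorthogonal} as saying that the axis-$a$ depth is $0$ at the endpoints of every axis-$a$ edge and $1$-Lipschitz everywhere else along the curve.
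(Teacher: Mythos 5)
Your proof is correct and is essentially the paper's argument: the paper also derives the bound from condition~1 of Theorem~\ref{thm:hyperorthogonal} (depth $0$ at the endpoints of the nearest axis-$a$ edge) and condition~2 (depth changes by at most $1$ per step), merely packaging the chaining as an induction on increasing edge distance rather than as an explicit walk from the nearest edge. Your index bookkeeping for the two cases $m \geq i$ and $m < i$ matches the paper's definition of $\edgedist$.
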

\begin{proof}
The proof goes by induction on increasing edge distance.
If $\edgedist(A_k,v_i,a) = 0$, then $a \in\{|e_i|,|e_{i-1}|\}$, and, by the first condition of Theorem~\ref{thm:hyperorthogonal}, we have $\permdp(\perm_i,a) = 0 = \edgedist(A_k,v_i,a)$.

Now suppose $\edgedist(A_k,v_i,a) > 0$. Then we can choose $j \in \{i-1,i+1\}$ such that $\edgedist(A_k,v_j,a) = \edgedist(A_k,v_i,a) - 1$, and by induction we can assume $\permdp(\perm_j, a) \leq \edgedist(A_k,v_j,a)$. Then it follows from the second condition of Theorem~\ref{thm:hyperorthogonal} that we have $\permdp(\perm_i,a) \leq \permdp(\perm_j,a) + 1 \leq \edgedist(A_k,v_j,a) + 1 = \edgedist(A_k,v_i,a)$.
\end{proof}

Lemma~\ref{lem:edgedistance} gives us the following idea for an algorithm to specify the permutations $|\perm_i|$, except for the order of the last two elements: simply sort all axes $a \in \unsigneddim$ by order of decreasing edge distance $\edgedist(A_k,v_i,a)$. In fact, as we will show, a version of this algorithm suffices, that only considers edge distances within small subcurves. For this purpose we define the notion of \emph{extended} curves, which can be seen as curves together with an indication of how the curve is connected to preceding an succeeding curves:

\begin{definition}\label{def:extended}
An \emph{extended curve} is a curve that is extended with an \emph{entry edge} leading to the first vertex (the entry point) and an \emph{exit edge} originating from the last vertex (the exit point). The origin of the entry edge and the destination of the exit edge are not considered to be part of the curve.
\end{definition}

We use prime symbols to distinguish extended curves from non-extended curves: when $B$ is a curve, we may use $B'$ to denote a particular extension of $B$, and when $B'$ is an extended curve, we use $B$ te denote the curve without the extensions. Note that by our definition, $B$ and $B'$ always have the same vertices; they only differ in the number of edges. In particular, if we extend an approximating curve $A_k$ that has edges $\langle e_{k,1},\ldots,e_{k,K-1}\rangle$, we denote the extended curve by $A'_k$, the entry edge by $e_{k,0}$ and the exit edge by $e_{k,K}$. If $B$ is a subcurve of $A'$, then the entry edge of $B'$ is the edge that leads to the entry vertex of $B$ in $A'$, and the exit edge of $B'$ is the edge that originates from the exit vertex of $B$ in $A'$. In particular, the extended child curve $C'_{k,i}$ of $A'_k$ would be $C_{k,i}$ extended with entry edge $\langle e_{k,i-1}\rangle$ and exit edge $\langle e_{k,i}\rangle$.

The definition of well-foldedness (Definition~\ref{def:wellfolded}) can be applied to extended curves, with the base case that an extended curve that consists of only an entry edge, a single vertex, and an exit edge, is well-folded.
The conditions for well-foldedness from Theorem~\ref{thm:wellfolded} are applicable as well. In that case it is natural to require that we would obtain a valid curve if we would add the origin of the entry edge and the destination of the exit edge to the curve. This can be ensured as follows: we take the entry edge into account by extending the second condition of Theorem~\ref{thm:wellfolded} to the case $i = 0$; given the first condition, the second condition can also be written as: $\sgn\big(\perm^{-1}_{i}(e_i)\big) = 1$ if and only if $|e_i| = |\perm_i(d)|$, and we take the exit edge into account by extending this form of the condition to the case $i = K$.

The definition and conditions of hyperorthogonality (Definition~\ref{def:hyperorthogonal} and Theorem~\ref{thm:hyperorthogonal}) can be applied to extended curves, if, in condition~1 of Theorem~\ref{thm:hyperorthogonal}, we also take the entry and exit edge into account. Concretely, this means condition~1 should be extended with $\permdp(\perm_{1}, e_0) = 0$ and $\permdp(\perm_K, e_K) = 0$. The definition of edge distance (Definition~\ref{def:edgedistance}) and its relation to hyperorthogonality (Lemma~\ref{lem:edgedistance}) can now be applied directly to extended curves.

We can now define a version of edge distance that only considers small subcurves:

\begin{definition}\label{def:localedgedistance}
Let $A'_k$ be an extended well-folded curve obtained by inflation from $A'_{k-1}$. Let $v$ be a vertex of $A_k$, let $a \in \unsigneddim$ be any axis, let the subcurve $C$ of $A_k$ be the child curve of $A'_{k-1}$ that contains $v$, and let $C'$ be the extension of $C$ within $A'_k$. We define the \emph{local} edge distance of the axis $a$ to the vertex $v$ within the curve $A'_k$, denoted $\localedgedist(A'_k,v,a)$, as $\edgedist(C',v,a)$.
\end{definition}

\subsection{Hyperorthogonal curves from inflation of extended curves}\label{sec:generalconstruction}

\begin{lemma}\label{lem:edgedistalgo}
Suppose we construct a sequence of extended well-folded curves $A'_0,A'_1,\ldots$ by inflation such that the elements of
each permutation $|\perm_{k,i}|$ are sorted by order of decreasing local edge distance to $v_{k,i}$ in $A'_k$. Then these permutations satisfy conditions 1 and 2 of Theorem~\ref{thm:hyperorthogonal}.
\end{lemma}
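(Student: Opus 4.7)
I would prove conditions 1 and 2 of Theorem~\ref{thm:hyperorthogonal} by induction on $k$, carrying the stronger hypothesis that $A'_j$ is itself hyperorthogonal for every $j\le k$ (propagated by chaining the present lemma with Theorem~\ref{thm:hyperorthogonal} at the preceding level, starting from the trivially hyperorthogonal $A'_0$). Condition~$1$ is essentially built into the sorting. At any vertex $v_{k,i}$, its extended child curve $C'$ has exactly two edges incident to $v_{k,i}$, and one of them is $e_{k,i}$ (an internal edge of $C$ when $v_{k,i}$ is not the exit of $C$, and the exit edge of $C'$ otherwise); hence $\localedgedist(A'_k,v_{k,i},e_{k,i})=0$, so sorting in decreasing order forces $|e_{k,i}|$ into one of the last two positions of $\perm_{k,i}$, giving $\permdp(\perm_{k,i},e_{k,i})=0$. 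The symmetric argument at $v_{k,i+1}$ closes condition~$1$.

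For condition~$2$, the analysis splits according to whether $v_{k,i}$ and $v_{k,i+1}$ lie in the same child curve of $A'_{k-1}$ or in adjacent ones. In \textbf{Case~A} (same child curve $C$), both local edge distances are measured in the same extended curve $C'$, so moving one edge along $C'$ changes each axis's distance by at most~$1$. In \textbf{Case~B}, where $v_{k,i}$ is the exit of $C^x=\perm_x(G(d))$ and $v_{k,i+1}$ is the entry of $C^y=\perm_y(G(d))$, the two extended curves $C^{x\prime}$ and $C^{y\prime}$ share only the edge $e_{k,i}$; here I would exploit the palindromicity of $G(d)$ in axes---a direct consequence of Definition~\ref{def:graycode}, since $G(d)$ and $\reverse{G(d)}$ have the same axis sequence---to derive that the distance from either endpoint of $G(d)$ to the nearest edge of axis~$b$ is the common, strictly monotone value $\delta(b)=2^{b-1}-1$. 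Consequently, for any axis $a\ne|e_{k,i}|$, $\localedgedist(A'_k,v_{k,i},a)=\delta(|\perm_x^{-1}(a)|)$ and $\localedgedist(A'_k,v_{k,i+1},a)=\delta(|\perm_y^{-1}(a)|)$, while for $a=|e_{k,i}|$ both distances are $0$ via the shared edge.

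Monotonicity of $\delta$ then makes sorting at $v_{k,i}$ by decreasing edge distance agree (up to the two depth-$0$ axes) with sorting by decreasing $|\perm_x^{-1}(a)|$, and analogously at $v_{k,i+1}$ with $\perm_y$; so the depth of any axis $a\ne|e_{k,i}|$ in $\perm_{k,i}$ is a simple, monotonic function of $|\perm_x^{-1}(a)|$, and likewise in $\perm_{k,i+1}$ of $|\perm_y^{-1}(a)|$. The inductive hyperorthogonality of $A'_{k-1}$ then gives, via condition~$2$ at level $k-1$, $||\perm_x^{-1}(a)|-|\perm_y^{-1}(a)||\le 1$ for every $a$ not at depth~$0$ in both $\perm_x$ and $\perm_y$, and condition~$1$ at level $k-1$ places $|e_{k,i}|$ at depth~$0$ in both; combining these yields $|\permdp(\perm_{k,i},a)-\permdp(\perm_{k,i+1},a)|\le 1$ in Case~B. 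Case~A follows by the same kind of rank-based argument, using only the bound $|\localedgedist(A'_k,v_{k,i},a)-\localedgedist(A'_k,v_{k,i+1},a)|\le 1$ within a single extended curve. The main obstacle I expect is the tiebreaking bookkeeping: when two or more axes share a local edge distance, the sort is not unique, and inconsistent tiebreaks between adjacent vertices could in principle swing an axis's depth by more than~$1$ (small examples in $G(4)$ with unfortunate tiebreak rules already exhibit this). I would resolve this by tiebreaking consistently---for instance, by inheriting the relative order of tied axes from the adjacent parent permutation $\perm_x$ (or $\perm_y$)---and verifying that under such a rule no spurious depth jumps survive.
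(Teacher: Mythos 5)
Your overall strategy---induction on $k$, reading condition~1 off the two distance-zero axes, splitting condition~2 into the within-child-curve case and the child-curve-boundary case, and in the latter computing the new permutation explicitly as a monotone function of the parent permutation's positions so that condition~2 at level $k-1$ transfers to level $k$---is essentially the paper's proof. The one place where you genuinely diverge is the issue you flag yourself at the end, and there your proposal has a real gap rather than a fixable bookkeeping nuisance.

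The step ``each local edge distance changes by at most one, hence each axis moves by at most one position in the sorted order'' is invalid in the presence of ties, exactly as you observe; but your proposed remedy (impose a consistent tiebreaking rule and ``verify that no spurious depth jumps survive'') is never carried out, and it would in any case prove a different statement: the lemma asserts that \emph{any} sorting by decreasing local edge distance works, and the surrounding text (and Theorem~\ref{thm:existence}, which builds on it) relies on the sort being unique except for the last two elements. The paper closes this gap not by tiebreaking but by proving that ties cannot occur. Let $\tau$ be the permutation of the child curve $C$ containing $v_{k,i}$. By Lemma~\ref{lem:graycodealternatingpattern}, edges with axis $|\tau(1)|$ alternate with edges of other axes inside $C$, and the induction hypothesis (condition~1 at level $k-1$) guarantees that the entry and exit edges of $C'$ have depth-zero axes in $\tau$, hence are not $|\tau(1)|$; so the alternation persists throughout the extended curve $C'$. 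If two axes $a\neq b$, both different from $|\tau(1)|$, had the same local edge distance $\ell$ to some vertex $v$, the two witnessing edges would sit at the ends of a subpath of $2\ell+2$ edges centred at $v$; in an alternating sequence those two end edges have opposite parity, so one of them would have to have axis $|\tau(1)|$---a contradiction. Hence all local edge distances other than the two zeros are pairwise distinct, the sort is unique where it matters, and the rank-shift argument becomes sound. Your claimed counterexamples in $G(4)$ cannot arise in this setting precisely because the entry and exit edges of every extended child curve avoid axis $|\tau(1)|$; without that alternation argument (or an equivalent substitute), your Case~A remains unproven.
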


\begin{proof}
The proof goes by induction on increasing $k$. As the base case we take $k = 0$, and observe that $A'_0$, which contains only a single vertex and two edges, trivially satisfies Theorem~\ref{thm:hyperorthogonal}.
Now suppose $k > 0$ and the permutations associated with the vertices of $A'_{k-1}$ satisfy the conditions of Theorem~\ref{thm:hyperorthogonal}. We will now show that, if we choose the permutations $\perm_1,\perm_2,\ldots$ associated with the vertices $v_1,v_2,\ldots$ of $A'_k$ in such a way that the elements of $|\perm_i|$ are sorted by order of decreasing local edge distance to $v_i$ in $A'_k$, then these permutations satisfy the conditions of Theorem~\ref{thm:hyperorthogonal} as well.

Consider any vertex $v_i$ in $A'_k$. Let $C'$ be the extended child curve of $A'_{k-1}$ that contains $v_i$, and let $\tau$ be the signed permutation such that $C$ is a translate of $\tau(G(d)$. Since $C'$ includes both edges of $A'_k$ that are incident on $v_i$, we have $\localedgedist(A'_k,v_i,a) = 0$ if and only if $a \in \{|e_{i-1}|,|e_i|\}$. Hence, these two axes will be placed at the last positions within $|\perm_i|$, so that $\permdp(\perm_i,a) = 0$, and condition~1 of Theorem~\ref{thm:hyperorthogonal} is satisfied.

For condition 2, observe that in $C$, being a transformation of $G(d)$, the edges with axis $|\tau(1)|$ and edges with other axes alternate, starting and ending with an edge with axis $|\tau(1)|$. By the induction hypothesis, $A'_{k-1}$ satisfies the conditions of Theorem~\ref{thm:hyperorthogonal}, which implies that the edges immediately preceding and following $C$ in $A'_k$ have axes with depth zero in $\tau$. Therefore these axes differ from $|\tau(1)|$, which has depth $d-2$ (remember that this section is concerned with $d$-dimensional curves for $d \geq 3$). Thus, also in $C'$ edges with axis $|\tau(1)|$ and edges with other axes alternate.

Now suppose, for the sake of contradiction, that there are two axes $a \neq b$, both different from $|\tau(1)|$, such that $\localedgedist(A'_k,v_i,a) = \localedgedist(A'_k,v_i,b)$ for some $v_i\in C'$. Then $C'$ must contain an edge sequence of even length, more precisely $2\ast \localedgedist(A'_k,v_i,a) + 2$, with $v_i$ in the middle, that starts with an edge with axis $a$ and ends with an edge with axis $b$. However, this contradicts the fact that edges with axis $|\tau(1)|$ and edges with other axes alternate. Hence, apart from a pair of axes with local edge distance zero (among which $|\tau(1)|$), no pair of axes has the same local edge distance.

If we increase $i$ by one while staying inside the same child curve $C$, each local edge distance changes by at most one, and therefore each axis can move up or down in the order of $\tau$ by at most one position. This establishes condition 2 of Theorem~\ref{thm:hyperorthogonal} as long as we stay in the same child curve, that is, as long as $\lceil i/2^d\rceil$ does not change, that is, for all $i \neq 0 \pmod{2^d}$.

If $i = 0 \pmod{2^d}$, we need to take more care, as $v_i$ lies in a child curve $C_j$ of $A'_{k-1}$ while $v_{i+1}$ lies in the next child curve $C_{j+1}$. 
Now, since $A'_{k-1}$ satisfies condition 1 of Theorem~\ref{thm:hyperorthogonal}, there must be $g, h \in \{d-1,d\}$ such that $|\perm_{k-1,j}(g)| = |\perm_{k-1,j+1}(h)| = |e_{k-1,j}|$ where $e_{k-1,j}$ is the direction $e_{k,i}$ of the edge $(v_i,v_{i+1})$ in $A'_k$. Define $g', h' \in \{d-1,d\}$ by $g' \neq g$ and $h' \neq h$.\\
Now, for $v_i$, sorting by decreasing edge distance within $C'_j$ results in $|\perm_{k,i}| =$\\$\big[|\perm_{k-1,j}(g')|,|\perm_{k-1,j}(d-2)|,|\perm_{k-1,j}(d-3)|,\ldots,|\perm_{k-1,j}(2)|,|\perm_{k-1,j}(1)|,|\perm_{k-1,j}(g)|\big]$,\\ where the order of the last two elements is undetermined, and likewise\\for $v_{i+1}$, sorting by decreasing edge distance within $C'_{j+1}$ results in $|\perm_{k,i+1}| =$\\$\big[|\perm_{k-1,j+1}(h')|,|\perm_{k-1,j+1}(d-2)|,|\perm_{k-1,j+1}(d-3)|,\ldots,|\perm_{k-1,j+1}(2)|,|\perm_{k-1,j+1}(1)|,|\perm_{k-1,j+1}(h)|\big]$,\\ where also the order of the last two elements is undetermined.
For $a = |e_{k-1,j}|$ we thus have $\permdp(\perm_{k,i}(a)) = \permdp(\perm_{k,i+1}(a)) = 0$, and for $a \neq |e_{k-1,j}|$ we have $\permdp(\perm_{k,i}(a)) = d - 2 - \permdp(\perm_{k-1,j}(a))$ and $\permdp(\perm_{k,i+1}(a)) = d - 2 - \permdp(\perm_{k-1,j+1}(a))$. By the induction hypothesis, $\perm_{k-1,j}$ and $\perm_{k-1,j+1}$ satisfy condition 2 of Theorem~\ref{thm:hyperorthogonal} and therefore we have $|\permdp(\perm_{k-1,j}(a)) - \permdp(\perm_{k-1,j+1}(a))| \leq 1$, and hence $|\permdp(\perm_{k,i}(a)) - \permdp(\perm_{k,i+1}(a))| \leq 1$, which establishes condition 2 of Theorem~\ref{thm:hyperorthogonal}.
\end{proof}

The above lemma still leaves the order of the last two elements of each $|\perm_i|$ undetermined, since these are always the two axes with edge distance zero. To prove that hyperorthogonal well-folded curves exist, it now suffices to show that we can order the last two elements and choose the signs of each $\perm_i$ such that the conditions of Theorem~\ref{thm:wellfolded} are satisfied. We obtain:

\begin{theorem}\label{thm:existence}
For each choice of an entry direction $e_{0,0}$ and an exit direction $e_{0,1}$ and for each choice for the signs of $\perm^{-1}_{k,1}(j)$ for all $k$ and $j$ such that $\sgn(\perm^{-1}_{k,1}(e_{k,0})) = 1$ for all $k$, there is a unique hyperorthogonal, well-folded space-filling curve $f$ approximated by $A'_0,A'_1,\ldots$, where each curve $A'_k$ with $k > 0$ is constructed by inflation from $A'_{k-1}$ and the elements of each permutation $|\perm_{k,i}|$ are sorted by order of decreasing local edge distance to $v_i$ in $A'_k$.
\end{theorem}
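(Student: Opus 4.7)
The plan is to combine the preceding results: Lemma~\ref{lem:edgedistalgo} already guarantees that the sorting rule satisfies conditions~1 and~2 of Theorem~\ref{thm:hyperorthogonal}, so what remains is to check that the given initial data (the directions $e_{0,0}$, $e_{0,1}$, and the signs of $\perm^{-1}_{k,1}$ with $\sgn(\perm^{-1}_{k,1}(e_{k,0})) = 1$ at every level) together with the well-foldedness conditions of Theorem~\ref{thm:wellfolded} uniquely fix the two degrees of freedom that the sorting rule leaves open, namely the order of the last two elements of each $|\perm_{k,i}|$ and the signs of $\perm_{k,i}$ for $i \geq 2$.

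I would proceed by induction on $k$. The base case $k=0$ is trivial: $A'_0$ is a single vertex with the prescribed entry and exit edges, and is vacuously well-folded and hyperorthogonal. For the inductive step, I would assume $A'_{k-1}$ has been uniquely constructed, inflate it to obtain $A_k$, and attach the extension edges $e_{k,0} := e_{0,0}$ and $e_{k,K} := e_{0,1}$, so that all vertices $v_{k,i}$ and edges $e_{k,i}$ are fixed. I would then determine $\perm_{k,i}$ in order $i = 1, \ldots, K$ through three interlocking substeps: first, by Lemma~\ref{lem:edgedistalgo}, the first $d-2$ positions of $|\perm_{k,i}|$ are forced by sorting the axes other than $\{|e_{k,i-1}|, |e_{k,i}|\}$ by decreasing local edge distance, while the last two positions must contain those two edge axes in some order; second, the signs of $\perm^{-1}_{k,i}$ are forced, for $i = 1$ by the input and for $i > 1$ by propagating from $\perm^{-1}_{k,i-1}$ via the first condition of Theorem~\ref{thm:wellfolded} (which uses the already-determined $|\perm_{k,i-1}(d)|$ and the known $|e_{k,i-1}|$); third, the remaining ambiguity $|\perm_{k,i}(d)| \in \{|e_{k,i-1}|, |e_{k,i}|\}$ is resolved by the rewritten second condition of Theorem~\ref{thm:wellfolded}, which forces $|\perm_{k,i}(d)| = |e_{k,i}|$ iff $\sgn(\perm^{-1}_{k,i}(e_{k,i})) = 1$, using the exit-edge extension when $i = K$.

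To close the induction I would verify both conditions of Theorem~\ref{thm:wellfolded} and both conditions of Theorem~\ref{thm:hyperorthogonal} for the constructed $A'_k$: hyperorthogonality is immediate from Lemma~\ref{lem:edgedistalgo}; well-foldedness holds by design, with the entry-edge extension ($i=0$) satisfied by the hypothesis $\sgn(\perm^{-1}_{k,1}(e_{k,0})) = 1$, and the exit-edge extension ($i=K$) satisfied by the choice made in the third substep. Uniqueness is automatic because every substep is forced. The space-filling curve $f$ is then obtained as the uniform limit of the rescaled approximating curves $A_0, A_1, \ldots$, using the standard argument that successive inflations move points by at most $O(1/2^k)$.

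The hard part will be the bookkeeping in the third substep: I have to verify that the rewritten second condition always selects an axis that is actually available in the last two positions of $|\perm_{k,i}|$ (that is, one of $|e_{k,i-1}|$ and $|e_{k,i}|$), and to handle the degenerate case in which these two axes coincide, where the sign forced by the well-foldedness equivalence must be shown to match the sign obtained by propagation. This is essentially the content of the equivalence between the two forms of the second condition that was derived inside the proof of Theorem~\ref{thm:wellfolded}, but carefully applied at each index in order, including the boundary indices $i = 0$ and $i = K$ where the entry/exit conventions kick in, and this is exactly where the hypothesis $\sgn(\perm^{-1}_{k,1}(e_{k,0})) = 1$ is needed.
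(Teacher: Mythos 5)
Your plan is correct and follows essentially the same route as the paper: Lemma~\ref{lem:edgedistalgo} supplies hyperorthogonality, and then a single forward pass over $i$ uses the rewritten second condition of Theorem~\ref{thm:wellfolded} to fix $|\perm_{k,i}(d)|$ (namely $|\perm_{k,i}(d)| = |e_{k,i}|$ iff $\sgn(\perm^{-1}_{k,i}(e_{k,i})) = 1$) and the first condition to propagate the signs to $\perm_{k,i+1}$, with the boundary cases $i=0$ and $i=K$ handled by the given entry/exit directions exactly as in the paper. The ``degenerate case'' you worry about, $|e_{k,i-1}| = |e_{k,i}|$, cannot occur here, since two consecutive edges with the same axis would already violate hyperorthogonality for $n=1$ when $d \geq 3$.
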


\begin{proof}
For each level $k$, we generate $A'_k$ as follows. We loop over all $i \in \{1,\ldots,K-1\}$, where $K = \twodk$, and proceed as follows. The conditions of Theorem~\ref{thm:wellfolded} require $\sgn(\perm^{-1}_{i+1}(e_i)) = 1$. We now choose $|\perm_i(d)|$ such that $|\perm_i(d)| = |e_i|$ if and only if $\sgn(\perm^{-1}_i(e_i)) = 1$: this is always possible since $|e_i|$ is among the last two elements of $|\perm_i|$ whose order was undetermined. Thus we satisfy the second condition of Theorem~\ref{thm:wellfolded} for $j = |e_i|$. With $|\perm_i|$ completely determined, we can now fill in the remaining signs of $\perm_{i+1}$ such that they fulfill the first condition of Theorem~\ref{thm:wellfolded}. Finally, we determine $|\perm_K(d)|$ as dictated by the exit direction $e_K$ in the same way as we determined $|\perm_i(d)|$ for $i < K$.
\end{proof}

Note that if $A'_0,A'_1,\ldots$ is a set of extended hyperorthogonal well-folded curves constructed by inflation, then they are approximating curves of a hyperorthogonal well-folded space-filling curve. Note, however, that not every hyperorthogonal well-folded space-filling curve can be described by such a set $A'_0,A'_1,\ldots$. There can also be hyperorthogonal well-folded space-filling curves that can be described by a set of \emph{non-extended} hyperorthogonal well-folded approximating curves $A_0,A_1,\ldots$ which cannot be extended to a set $A'_0,A'_1,\ldots$ of hyperorthogonal curves in such a way that each curve $A'_k$ is equal to $A_k$ extended with an entry edge $\langle e_{0,0}\rangle$ and an exit edge $\langle e_{0,1}\rangle$. Examples would include symmetric space-filling curves, closed space-filling curves (that is, curves that start end end in the same point), and space-filling curves that start in the interior of the unit cube.

\section{Self-similar curves in three and more dimensions}\label{sec:selfsimilar}

\subsection{The challenge}

By Observation~\ref{obs:startingpoint}, a choice of signs of $\perm^{-1}_{k,1}(j)$ for all $k$ and $j$ specifies the starting point $f(0)$ of the space-filling curve $f$ in Theorem~\ref{thm:existence}. Thus, the proof of Theorem~\ref{thm:existence} is a constructive proof that a hyperorthogonal, well-folded space-filling curve exists for any choice of starting point on the boundary of the unit hypercube.

In a practical setting, such as described in Section~\ref{sec:rtrees}, one may want to sort points in the order in which they appear along the curve. To this end we need a comparison operator that decides which of any two given points $p$ and $q$ comes first along the curve. We can do so by determining the largest $k$ such that there is a hypercube $H_{k,i}$, corresponding to a vertex $v_{k,i}$ of $A_k$, which contains both points. Then we can use $\perm_{k,i}$ to determine in which order the $2^d$ subcubes of this hypercube are traversed, and in particular, in which order this traversal visits the two subcubes containing $p$ and $q$. The efficiency of the comparison operator now depends on how efficiently we can determine $\perm_{k,i}$ for any $k$ and $i$. Unfortunately, straightforward application of Theorem~\ref{thm:existence} would require us to derive $\perm_{k,i}$ in an incremental fashion that explicitly constructs all $\perm_{k,j}$ for all $j < i$. In practice we will need a less time-consuming way to derive $\perm_{k,i}$. This seems rather difficult to achieve if we allow ourselves to choose the signs of the permutations $\perm_{k,1}$ arbitrarily.


To enable us to determine a permutation $\perm_{k,i}$ more efficiently, we will, in this section, restrict the curves to be \emph{self-similar}, that is, any approximating curve $A_{k+1}$ is the concatenation of $2^d$ isometric copies of~$A_k$. Recall that taking the reverse is also an isometric mapping.

\begin{figure}
\centering
\includegraphics{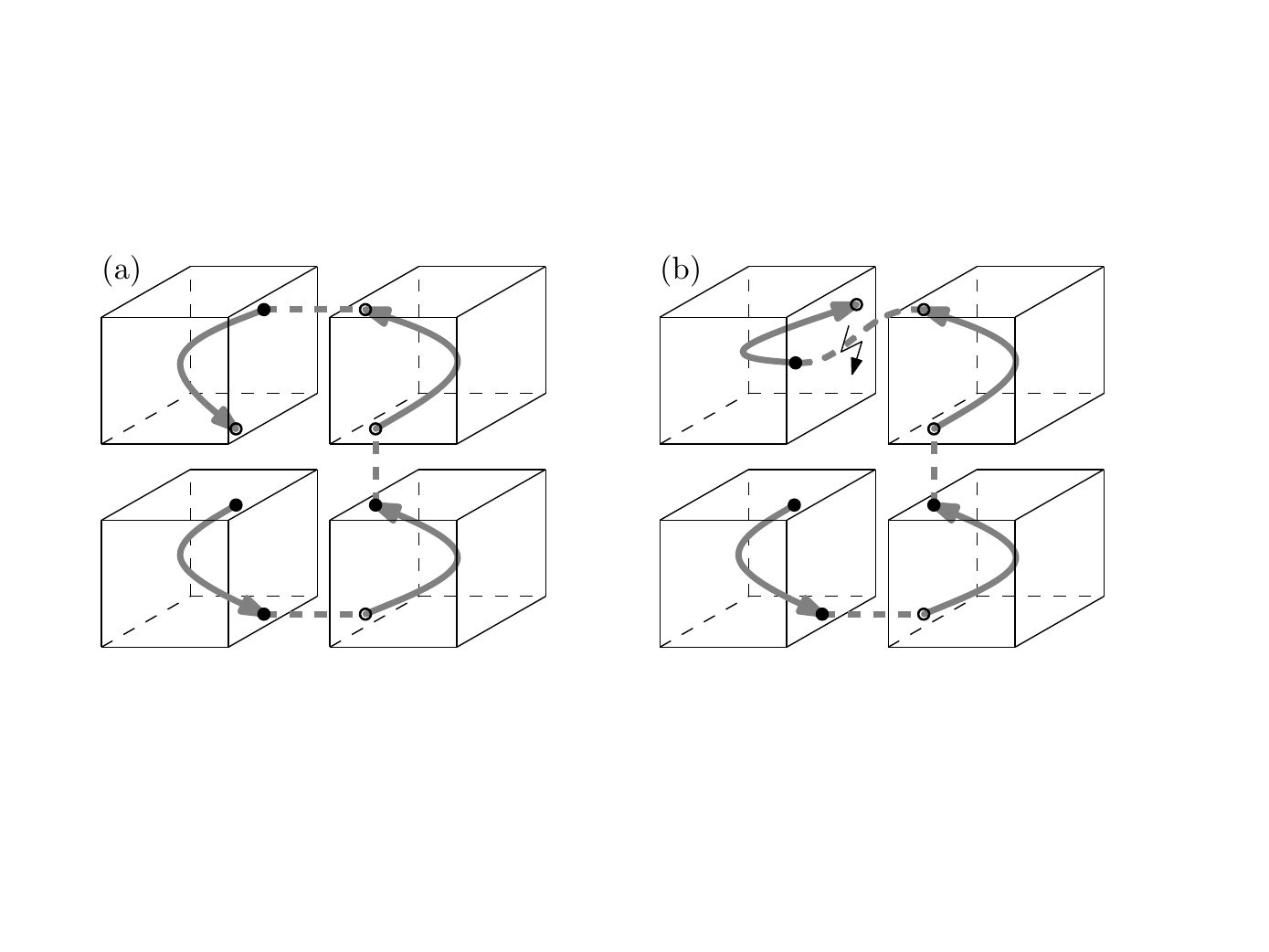}
\caption{Each cube in this figure shows a grey curve serving as a very rough sketch of $A_k$, with its entry and exit on the interior of two non-opposite $(d-1)$-dimensional faces of the cube. (a) It is easy to connect up four copies of such a curve. (b) Assembling more than four copies requires rotating the cubes to bend the path into other directions. In general, the rotations will break the connections between one copy and the next.}
\label{fig:mismatch}
\end{figure}

Note that for $d = 2$, Hilbert's original curve is the only self-similar well-folded curve. So for the purposes of Sections \ref{sec:extended1curves} to~\ref{sec:selfsimilarconstruction} we will assume $d \geq 3$.

In Section~\ref{sec:extended1curves} we find that all self-similar hyperorthogonal well-folded space-filling curves fit the framework of Theorem~\ref{thm:existence}, that is, they can be described by a series of \emph{extended} approximating curves. Moreover, we find that the only extended 1-curves that are relevant for the study of self-similar hyperorthogonal well-folded space-filling curves are isometries of one particular extension of $G(d)$. As shown in Section~\ref{sec:generalconstruction}, a (not necesssarily self-similar) hyperorthogonal well-folded space-filling curve does not need to start at a corner of the unit cube, but may start on the interior of a $(d-1)$-dimensional face. In Section~\ref{sec:relativecoordinates} we set up notation which is helpful in distinguishing different possible locations for entries and exits. In Section~\ref{sec:entrytoexit} we analyse how the choice of the entry of $C_{1,1}$ propagates to the other child curves $C_{1,2},\ldots,C_{1,D}$ of $A_1$. We will find that if the entry point of $A_1$ lies on the interior of a $(d-1)$-dimensional face, the exit point will also lie on the interior of a face. If the entry and exit of an approximating curve $A_k$ indeed lie on different but non-opposite faces, it is now trivial to connect up four copies of $A_k$ to make a cycle using only reflection and reversal transformations, see Figure~\ref{fig:mismatch}(a).

However, to get beyond four copies of $A_k$ and assemble $2^d$ copies to make a full $A_{k+1}$, we need to rotate some copies of $A_k$ in various ways to bend the path into all $d$ dimensions. The difficulty is to ensure that despite the various rotations, the connection points on the faces will still match up. This will not automatically be the case, see Figure~\ref{fig:mismatch}(b). As we will find in Section~\ref{sec:possibleentry}, this forces most of the coordinates of the entry and exit points to be equal, so that these points become invariant under the necessary transformations.

This strongly restricts the possible shapes of self-similar hyperorthogonal well-folded curves, but not too much: in Section~\ref{sec:selfsimilarconstruction}, we find that such curves do in fact exist. It turns out that for any $d \geq 3$, only two different starting points (modulo rotation and reflection) exist for such curves.

\subsection{Extensions in self-similar curves}\label{sec:extended1curves}

As noted in Section~\ref{sec:generalconstruction}, not all hyperorthogonal, well-folded space-filling curves may be approximated by \emph{extended} hyperorthogonal, well-folded curves. However, for self-similar curves this can always be done. But before proving this, we will first have a look at extensions of $G(d)$. In extended hyperorthogonal well-folded approximating curves, we find only one particular extension of $G(d)$:

\begin{definition}\label{def:extendedGraycode}
Let $G'(d)$ be the concatenation of $\langle d\rangle$, $G(d)$, and $\langle -(d-1)\rangle$.
\end{definition}

\begin{lemma}\label{lem:types}
Let $A'_0,\ldots,A'_{k+1}$ be a sequence of extended hyperorthogonal well-folded curves constructed by inflation. Then each extended child curve $C'_{k,i}$ is the image of an isometry of $G'(d)$.
\end{lemma}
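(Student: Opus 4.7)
The plan is to apply the inverse signed permutation $\pi^{-1}$, where $\pi=\perm_{k,i}$, to every edge direction of $C'_{k,i}$ and show that the result coincides with $G'(d)$ up to a standard isometry. Writing $e^-=e_{k,i-1}$ and $e^+=e_{k,i}$, the extended child curve has edge sequence $\langle e^-,\pi(G(d)),e^+\rangle$; the middle pulls back to $G(d)$, so the task reduces to identifying the two signed axes $\pi^{-1}(e^-)$ and $\pi^{-1}(e^+)$ and recognising $\langle\pi^{-1}(e^-),G(d),\pi^{-1}(e^+)\rangle$ as an isometric image of $G'(d)=\langle d,G(d),-(d-1)\rangle$.

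First I would apply condition~1 of Theorem~\ref{thm:hyperorthogonal}, in its extension to the entry and exit edges of $A'_k$, to get $\permdp(\pi,e^-)=\permdp(\pi,e^+)=0$; by Definition~\ref{def:depth} this forces both $|e^-|$ and $|e^+|$ to lie in $\{|\pi(d-1)|,|\pi(d)|\}$. Next I would invoke hyperorthogonality of the \emph{parent} curve $A'_k$ itself with $n=1$ (valid because $d\geq 3$): the consecutive edges $e^-,e^+$ of $A'_k$ must then lie on two distinct axes, so $\{|e^-|,|e^+|\}=\{|\pi(d-1)|,|\pi(d)|\}$. Finally, the second condition of Theorem~\ref{thm:wellfolded}, in the iff form given just after that theorem and extended to the endpoints, fixes the signs: $\sgn(\pi^{-1}(e^-))=1$ unconditionally, while $\sgn(\pi^{-1}(e^+))=1$ precisely when $|e^+|=|\pi(d)|$.

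Two cases remain. In \emph{Case}~$\alpha$, $|e^-|=|\pi(d)|$ and $|e^+|=|\pi(d-1)|$; the sign rules then give $\pi^{-1}(e^-)=d$ and $\pi^{-1}(e^+)=-(d-1)$, so $\pi^{-1}(C'_{k,i})=G'(d)$ verbatim and $C'_{k,i}=\pi(G'(d))$. In \emph{Case}~$\beta$, $|e^-|=|\pi(d-1)|$ and $|e^+|=|\pi(d)|$; the sign rules yield $\pi^{-1}(C'_{k,i})=\langle d-1,G(d),d\rangle$. Using the identity $\reverse{G(d)}=\rho(G(d))$, where $\rho$ is the reflection in axis~$d$ (observed in the paper just after the introduction of signed permutations), one verifies directly that $\langle d-1,G(d),d\rangle=\reverse{\rho(G'(d))}$, so $C'_{k,i}=\pi(\reverse{\rho(G'(d))})$ is again an isometric image of $G'(d)$.

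The main obstacle is Case~$\beta$: it is not automatic that matching the entry extension to the penultimate rather than the last axis of $\pi$ still yields an isometric copy of $G'(d)$. The essential verification is the identification $\langle d-1,G(d),d\rangle=\reverse{\rho(G'(d))}$, which rests entirely on the earlier observation that reversal of $G(d)$ coincides with reflection in coordinate~$d$.
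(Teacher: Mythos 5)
Your proposal is correct and follows essentially the same route as the paper's proof: both use the (extended) depth-zero condition of Theorem~\ref{thm:hyperorthogonal} to place the entry and exit axes in $\{|\perm_i(d-1)|,|\perm_i(d)|\}$ and then split into the same two cases, identifying the second case as a reversed reflection of $\perm_i(G'(d))$. You merely make explicit two points the paper leaves implicit (the distinctness of the two axes via $n=1$ hyperorthogonality of the parent, and the sign determination via the extended second condition of Theorem~\ref{thm:wellfolded}), and both of these verifications are sound.
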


\begin{proof}
Consider an extended child curve $C'_i$ of $A'_k$.
Since $A'_k$ is hyperorthogonal, by Theorem~\ref{thm:hyperorthogonal} the axes of $e_{i-1}$ and $e_i$ in $C'_i$ must be $|\perm_i(d-1)|$ and $|\perm_i(d)|$. This leaves two possibilities for matching axes to edges. The first possibility is to put the edge with axis $|\perm_i(d)|$ at the beginning and the edge with axis $|\perm_i(d-1)|$ at the end; the signs of the directions follow from the fact that the vertex preceding $C_i$ in $A_{k+1}$ and the vertex following $C_i$ in $A_{k+1}$ must lie outside $C_i$. The second possibility is to put the edge with axis $|\perm_i(d-1)|$ at the beginning and the edge with axis $|\perm_i(d)|$ at the end, so that we obtain a concatenation of $\langle \perm_i(d-1)\rangle$, $\perm_i(G(d))$ and $\langle \perm_i(d)\rangle$, which is the reverse of $\perm_i(G'(d))$ with reflection in coordinate $\perm_i(d)$.
\end{proof}

Note that the proof of Lemma~\ref{lem:types} does not require the approximated space-filling curve to be self-similar.

\begin{definition}\label{def:types}
If, in the above lemma, the isometry that maps $G'(d)$ to $C'_i$ is composed exclusively of rotation, reflection, and translation, then we say $C'_{k,i}$ is of type 0; otherwise, that is, if the isometry that maps $G'(d)$ to $C'_{k,i}$ involves reversing the curve, then we say $C'_{k,i}$ is of type 1. We denote the type of $C'_{k,i}$ by $T_{k,i}$.
\end{definition}
As always, when the first subscript to $T$ is clear from the context, or when a statement holds for any value of the first subscript, we may omit the subscript.

In the following observation we use the Iverson bracket notation: when $P$ is a expression that evaluates to true or false, then $\big[P\big] = 0$ if $P$ is false, and $\big[P\big] = 1$ if $P$ is true. The observation is the following: the type of an isometric image $C'_i$ of $G'(d)$ is zero if and only if $|\sigma_i(d)|$ is the axis of the entry edge $e_{i-1}$, and the type is one if and only if $|\sigma_i(d)|$ is the axis of the exit edge $e_{i}$. In other words:

\begin{observation}\label{obs:typeandorientation}
$T_i = \big[|\sigma_i(d)| \neq |e_{i-1}|\big] = \big[|\sigma_i(d)| = |e_{i}|\big]$.
\end{observation}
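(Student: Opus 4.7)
The plan is to read off the axes of the entry and exit edges of $C'_i$ in each of the two cases provided by Lemma~\ref{lem:types}, and then observe that these cases correspond exactly to $T_i = 0$ and $T_i = 1$ according to Definition~\ref{def:types}.

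First I would recall the shape of $G'(d) = \langle d\rangle, G(d), \langle -(d-1)\rangle$: its first edge has axis $d$ and its last edge has axis $d-1$. Applying the signed permutation $\perm_i$ therefore maps the first and last edges of $\perm_i(G'(d))$ to axes $|\perm_i(d)|$ and $|\perm_i(d-1)|$, respectively. For the reverse of $\perm_i(G'(d))$ (with the reflection identified in the proof of Lemma~\ref{lem:types}), the roles are swapped: the first edge has axis $|\perm_i(d-1)|$ and the last edge has axis $|\perm_i(d)|$.

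Next, by Lemma~\ref{lem:types}, $C'_i$ is either an isometric image of $G'(d)$ obtained by rotation/reflection alone (the type~0 case) or one that additionally involves reversal (the type~1 case). Since the entry edge of $C'_i$ is $e_{i-1}$ and the exit edge is $e_i$, this yields the two cases
\[
  T_i = 0 \ \Longrightarrow\ |e_{i-1}| = |\perm_i(d)|,\ |e_i| = |\perm_i(d-1)|,
\]
\[
  T_i = 1 \ \Longrightarrow\ |e_{i-1}| = |\perm_i(d-1)|,\ |e_i| = |\perm_i(d)|.
\]
In both cases $|e_{i-1}|$ and $|e_i|$ are the two distinct axes $\{|\perm_i(d-1)|,|\perm_i(d)|\}$ (this is also what the first condition of Theorem~\ref{thm:hyperorthogonal}, i.e.\ $\permdp(\perm_i,e_{i-1}) = \permdp(\perm_i,e_i) = 0$, already forces). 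Hence exactly one of the equalities $|\perm_i(d)| = |e_{i-1}|$ and $|\perm_i(d)| = |e_i|$ holds, and these two conditions are complementary.

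The observation then follows immediately: $T_i = 0$ iff $|\perm_i(d)| = |e_{i-1}|$, equivalently iff $|\perm_i(d)| \neq |e_i|$; and $T_i = 1$ iff $|\perm_i(d)| = |e_i|$, equivalently iff $|\perm_i(d)| \neq |e_{i-1}|$. Thus $T_i = [|\perm_i(d)| \neq |e_{i-1}|] = [|\perm_i(d)| = |e_i|]$. There is no real obstacle here; the only point requiring care is confirming that the two axes $|e_{i-1}|$ and $|e_i|$ are distinct so the two Iverson brackets genuinely coincide, but this is immediate from the shape of $G'(d)$ (whose first and last edges have different axes $d$ and $d-1$) together with the isometry acting on $C'_i$.
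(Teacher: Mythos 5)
Your proof is correct and takes essentially the same route as the paper: the paper states this as an unproved observation whose justification is exactly the case analysis in the proof of Lemma~\ref{lem:types} (entry/exit edge axes $|\perm_i(d)|$ and $|\perm_i(d-1)|$ in the non-reversed case, swapped in the reversed case), which is what you spell out. Your extra remark that $|e_{i-1}|\neq|e_i|$ because $|\perm_i(d)|\neq|\perm_i(d-1)|$ is a sensible touch of rigour that the paper leaves implicit.
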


We can now prove that all \emph{self-similar} hyperorthogonal, well-folded space-filling curves can be approximated by a series of \emph{extended} hyperorthogonal, well-folded curves:

\begin{lemma}\label{lem:selfsimilartypes}
Let $f$ be a self-similar, hyperorthogonal, well-folded space-filling curve. Then an isometric copy of $f$ is approximated by a series of extended hyperorthogonal, well-folded curves $A'_0,A'_1,\ldots$, where $A'_0 = \langle d, -(d-1)\rangle$, $\sigma_{0,1} = [1,\ldots,d]$, and each extended curve $A'_k$ with $k > 0$ is obtained by inflation from $A'_{k-1}$. \end{lemma}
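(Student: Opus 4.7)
My plan is to apply a global isometry to $f$ that normalizes the level-one approximating curve, and then extend every approximating curve consistently by inflation. First I normalize: since $f$ is well-folded, $A_1$ is an isometric image of $G(d)$, so there is a hyperoctahedral isometry composed with a translation, call it $\tau$, such that $\tau(A_1)=G^0(d)$. After replacing $f$ by $\tau\circ f$, I have $A_1=G^0(d)$, and hence $\sigma_{0,1}=[1,\ldots,d]$.

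Next, I take $A'_0=\langle d,-(d-1)\rangle$, that is, the single vertex of $A_0$ augmented with entry edge $\langle d\rangle$ and exit edge $\langle -(d-1)\rangle$. Inflating $A'_0$ using the identity permutation $\sigma_{0,1}$ immediately yields $A'_1=\langle d\rangle\cdot G(d)\cdot\langle-(d-1)\rangle=G'(d)$, which is extended well-folded and hyperorthogonal by direct inspection (the extended forms of Theorems~\ref{thm:wellfolded} and~\ref{thm:hyperorthogonal} are trivially satisfied when there is a single child). For $k>1$, I define $A'_k$ inductively as the inflation of $A'_{k-1}$ using the signed permutations $\sigma_{k-1,i}$ already determined by $f$'s non-extended inflation. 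The extension rules then propagate the entry and exit edges automatically, giving $e_{k,0}=d$ and $e_{k,K}=-(d-1)$ at every level (where $K=2^{dk}$ in each case).

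The main remaining task is to show that each $A'_k$ is extended well-folded and hyperorthogonal. The interior of $A'_k$ inherits both properties from $f$, so the only work lies at the entry and exit. By Lemma~\ref{lem:types}, each extended child curve of $A'_k$ must be an isometry of $G'(d)$, which translates into the compatibility conditions $|\sigma_{k-1,1}(d)|=d$ at the entry (so that the virtual entry edge $\langle d\rangle$ sits at the correct position in the isometric image of $G'(d)$ filling the first subcube) and an analogous condition at the exit. I expect this to be the main obstacle. Its resolution relies essentially on self-similarity: the isometry placing the first copy of $A_{k-1}$ inside $A_k$ is the same for every $k$, so a single alignment, pinned down by the normalization $A_1=G^0(d)$, simultaneously produces the correct entry direction at every level; a symmetric argument handles the exit. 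With these compatibilities established, the induction closes, and the sequence $A'_0,A'_1,\ldots$ has all the required properties.
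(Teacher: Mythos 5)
There is a genuine gap, and it sits exactly where you flag "the main obstacle." Your plan is to extend $f$'s \emph{own} approximating curves (after normalizing $A_1 = G^0(d)$) by decreeing entry edge $\langle d\rangle$ and exit edge $\langle -(d-1)\rangle$, and you then assert that the entry/exit compatibility at every level follows from "a single alignment, pinned down by the normalization." That assertion is not a proof, and nothing in the hypotheses supports it: the \emph{non-extended} hyperorthogonality and well-foldedness of $A_2, A_3, \ldots$ constrain each $\perm_{k,1}$ only relative to the genuine first edge $e_{k,1}$ of $A_k$; they say nothing about how $\perm_{k,1}$ relates to the axis $d$ of your invented entry edge. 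To close your induction you would need, for every $k$, that $\permdp(\perm_{k,1},d)=0$ and $\sgn(\perm_{k,1}^{-1}(d))=1$ (and symmetric conditions at the exit), i.e.\ infinitely many conditions that are not consequences of the normalization of $A_1$ alone. Worse, the paper explicitly warns (end of Section~\ref{sec:generalconstruction}) that hyperorthogonal well-folded curves exist whose approximating curves admit \emph{no} valid extension at all (symmetric curves, closed curves, curves entering in the interior of the cube); before this lemma is proved you have no tool to exclude that a self-similar curve has its entry at a point incompatible with any entry edge, since the classification of possible entry points (Theorem~\ref{thm:twoentries}) is proved \emph{using} this lemma. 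Your argument is therefore circular in spirit, or at best incomplete.

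The missing idea is the one the phrase "an isometric copy of $f$" in the statement is designed to enable. The paper does not extend $f$ itself: it picks an \emph{interior} vertex $v_i$ of $A_1$ with $1 < i < 2^d$ and considers the restriction $f_i$ of $f$ to the hypercube $H_i$. The $k$-curve $B_{1,i,k}$ obtained by inflating $v_i$ inside $A_{k+1}$ comes equipped with \emph{genuine} entry and exit edges of $A_{k+1}$ (refinements of $e_{i-1}$ and $e_i$), so the extended curves $B'_{1,i,k}$ are honest substructures of the hyperorthogonal well-folded curves $A_{k+1}$ and inherit the extended conditions for free --- no compatibility needs to be verified. Self-similarity is then used only to say that $f_i$ is an isometric copy of $f$, and Lemma~\ref{lem:types} normalizes $B'_{1,i,1}$ to $G'(d)$, hence $B'_{1,i,0}$ to $\langle d,-(d-1)\rangle$. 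Note that the isometry relating $f$ to $f_i$ is not the ambient rigid motion $\tau$ you apply; it is the similarity internal to $f$'s recursive structure, and that is precisely what endows the copy with usable entry and exit edges. Your proposal would become a proof if you replaced your normalization step by this passage to an interior child.
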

\begin{proof}
For any $k \geq 0$ and $1 < i < 2^d$, let $B_{1,i,k}$ be the $k$-curve that is a subcurve of the non-extended approximating curve $A_{k+1}$ and results from $k$ steps of inflation of the vertex $v_i$ of $A_1$. Since $A_{k+1}$ is hyperorthogonal and well-folded, the extended curve $B'_{1,i,k}$ that consists of $B_{1,i,k}$ with entry edge $\langle e_{i-1}\rangle$ and exit edge $\langle e_i\rangle$ must also be hyperorthogonal and well-folded. It follows that $B'_{1,i,0},B'_{1,i,1},B'_{1,i,2},\ldots$ is a sequence of extended hyperorthogonal, well-folded curves that approximate the space-filling curve $f_i$ which consists of $f$ restricted to the hypercube $H_i$ that corresponds to $v_i$. By Lemma~\ref{lem:types}, it follows that there is an isometric transformation that maps $B'_{1,i,1}$ to $G'(d)$, and thus, $B'_{1,i,0}$ to $\langle d, -(d-1)\rangle$.

Because $f$ is self-similar, the same series of curves that approximates $f_i$ also approximates $f$, up to isometric transformations.
\end{proof}

\begin{lemma}\label{lem:asymmetric}
Let $F'$ be an extended hyperorthogonal well-folded curve obtained by one step of inflation from $G'(d)$, and let $R'$ be an extended hyperorthogonal well-folded curve obtained by one step of inflation from $\reverse{G'(d)}$. Then no non-reverse isometry of $R'$ can visit its vertices in the same order as $F'$.
\end{lemma}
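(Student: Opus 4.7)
I would prove this by contradiction: suppose a non-reverse isometry $\pi$ with linear part $M$ (a signed permutation) makes the vertex sequence of $\pi(R')$ coincide with that of $F'$. Then, since each edge is determined by the two vertices it connects, $M$ must send the direction of the $j$-th edge of $R$ to that of the $j$-th edge of $F$ for every $j$.

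First, I would pin down $M$ by considering the inter-child connecting edges, which sit at edge positions $i\cdot 2^d$ in the inflated curves. For $F$ these are the edges of the parent $G(d)$; for $R$ they are the edges of $\reverse{G(d)}$. From the recursive definitions, $G(d) = G(d-1)\cdot\langle d\rangle\cdot\reverse{G(d-1)}$ and $\reverse{G(d)} = G(d-1)\cdot\langle -d\rangle\cdot\reverse{G(d-1)}$, so these two edge sequences agree except at the middle position $2^{d-1}$, where $F$ has $+d$ and $R$ has $-d$. Matching all non-middle positions forces $M$ to fix every direction occurring in $G(d-1)$ or $\reverse{G(d-1)}$, which for $d\geq 3$ is all of $\{\pm 1,\ldots,\pm(d-1)\}$. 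Matching at the middle position yields $M(-d)=d$, so $M(d)=-d$. Hence $M$ is forced to be the reflection along axis $d$.

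Next, I would apply this to the first child curve. Let $\sigma^F$ and $\sigma^R$ denote the signed permutations defining the first child curves of $F'$ and $R'$, respectively. The requirement $\pi(C^R_1)=C^F_1$ as ordered vertex sequences forces $M\circ\sigma^R = \sigma^F$ as signed permutations; because $M$ preserves every axis in absolute value, this gives $|\sigma^F|=|\sigma^R|$ as unsigned permutations. But the extended form of condition~1 of Theorem~\ref{thm:hyperorthogonal} (which also takes the entry edge into account) forces $\{|\sigma^F(d-1)|,|\sigma^F(d)|\}$ to be the axis set of the entry edge of $G'(d)$ together with the first edge of $G(d)$, namely $\{d,1\}$, while the analogous set for $R'$ is $\{d-1,1\}$. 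These two sets differ, contradicting $|\sigma^F|=|\sigma^R|$.

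The principal obstacle is recognizing that $G(d)$ and $\reverse{G(d)}$ differ only at their middle edges; once this is in hand, the rigidity of $M$ and the mismatch in the first child's permutation follow immediately from the extended hyperorthogonality constraints already established.
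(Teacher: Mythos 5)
Your proof is correct and takes essentially the same route as the paper's: both pin down the isometry as $[1,\ldots,d-1,-d]$ by matching the parent edge sequences $G(d)$ and $\reverse{G(d)}$ (which differ only in the middle edge), and then derive a contradiction in the first child curve from the extended hyperorthogonality condition tying $\{|\sigma_1(d-1)|,|\sigma_1(d)|\}$ to the entry-edge axis ($d$ versus $d-1$). Your write-up just makes the paper's final step more explicit.
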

\begin{proof}
Suppose $\tau$ is a non-reverse isometry, expressed by a signed permutation, such that $\tau(R')$ visits its vertices in the same order as $F'$, and hence the axes of the edges of $\tau(R')$ and $F'$ are the same, apart from, possibly the entry and the exit edge. Then $\tau\big(\reverse{G'(d)}\big)$ must also visit its vertices in the same order as $G'(d)$, so $\tau = [1,\ldots,d-1,-d]$, and $|\tau|$ is the identity permutation. However, since the entry edge of $\tau(R')$ has axis $d-1$ while the entry edge of $F'$ has axis $d$, the axes of the edges of $\tau(R')$ and $F'$ must differ in the first child curve of $\tau\big(\reverse{G'(d)}\big)$ and $G'(d)$, respectively.
\end{proof}

Note that another way to put the last line of the lemma is to say that any non-reverse isometry of $F'$ must differ from $R'$ in more than just the entry and/or exit edge.

\begin{corollary}\label{cor:asymmetric}
For $d \geq 3$, any $d$-dimensional self-similar hyperorthogonal well-folded space-filling curve is asymmetric.
\end{corollary}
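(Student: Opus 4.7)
My plan is to derive a contradiction from Lemma~\ref{lem:asymmetric} applied to the second-level approximation of a would-be symmetric curve. Suppose, for contradiction, that $f$ is a self-similar hyperorthogonal well-folded space-filling curve that is symmetric, meaning there is a non-reverse isometry $\tau$ with $\tau(\reverse{f}) = f$. By Lemma~\ref{lem:selfsimilartypes}, after replacing $f$ by an isometric copy, I may assume $f$ is approximated by $A'_0,A'_1,\ldots$ with $A'_0 = \langle d, -(d-1)\rangle$ and $\sigma_{0,1} = [1,\ldots,d]$, each $A'_k$ obtained by inflation from $A'_{k-1}$. In particular $A'_1 = G'(d)$ and $A'_2$ is an inflation of $G'(d)$, so $A'_2$ is an instance of the curve $F'$ appearing in Lemma~\ref{lem:asymmetric}.

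The next step is to identify $\reverse{A'_2}$ as an instance of $R'$. Both well-foldedness and hyperorthogonality depend only on the multiset pattern of consecutive-edge axes, which is preserved under reversing a curve. Moreover, reversing $A'_2$ reverses the order of its child curves and replaces each $C_i = \tau_i(G(d))$ with $\reverse{C_i}$; since $\reverse{G(d)}$ coincides with the reflection of $G(d)$ in coordinate $d$, each $\reverse{C_i}$ is again an isometric copy of $G(d)$. Hence $\reverse{A'_2}$ is exactly an inflation of $\reverse{A'_1} = \reverse{G'(d)}$ by isometric images of $G(d)$, as required by the hypothesis of Lemma~\ref{lem:asymmetric}.

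Finally, the assumption $\tau(\reverse{f}) = f$ implies, at level $k = 2$, that $\tau(\reverse{A'_2})$ visits its vertices in the same order as $A'_2$. But $\tau$ is a non-reverse isometry, $\reverse{A'_2}$ is an instance of $R'$, and $A'_2$ is an instance of $F'$, which directly contradicts Lemma~\ref{lem:asymmetric}.

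The main obstacle is the bookkeeping in the second paragraph: making precise that reversing an inflation yields an inflation of the reversed parent curve by isometric copies of $G(d)$, together with the preservation of well-foldedness and hyperorthogonality under reversal. Once that structural fact is nailed down, Lemma~\ref{lem:asymmetric} delivers the contradiction immediately, and no further case analysis is needed.
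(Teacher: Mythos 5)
Your argument is correct and is exactly the derivation the paper intends: the corollary is stated there without proof as an immediate consequence of Lemma~\ref{lem:asymmetric}, applied (after normalizing via Lemma~\ref{lem:selfsimilartypes}) with $F' = A'_2$ and $R' = \reverse{A'_2}$. Your second paragraph merely makes explicit the routine structural fact that reversing an inflation of $G'(d)$ yields an inflation of $\reverse{G'(d)}$, which the paper leaves implicit.
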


\subsection{Relative coordinates of entries and exits}\label{sec:relativecoordinates}

In the following subsections, the following notation will be helpful.

\begin{definition}\label{def:ent_ext}
Let $\absentr_{k,m}, \absexit_{k,m} : \unsigneddim \rightarrow \{0,\ldots,2^{k+1}-1\}$ be functions that give the coordinates of the entry and exit point of $C_{k,m}$,
that is, the entry point of $C_{k,m}$ has coordinates $\big(\absentr_m(1),\ldots,\absentr_m(d)\big)$ and the exit point has coordinates $\big(\absexit_m(1),\ldots,\absexit_m(d)\big)$.
\end{definition}
Note that $C_{k,m}$ is a 1-curve that is a subcurve of $A_{k+1}$, we have $\absentr_{k,m}(j) = \flipped(\perm^{-1}_{k,m}(j)) \pmod 2$, and $\absexit_{k,m}(j) = \absentr_{k,m}(j) \pmod 2$ if and only if $|\perm^{-1}_{k,m}(j)| \neq d$.

\begin{definition}\label{def:relativecoordinates}
The \emph{relative coordinate vector} of a vertex $v$ is the vector $r$ such that $r[j] = 0$ if $v[j] \bmod 4 \in \{0,3\}$, and $r[j] = 1$ if $v[j] \bmod 4 \in \{1,2\}$.
\end{definition}
The relative coordinates of a vertex $v_n$ of $A_{k+1}$ tell us, for each dimension, whether the vertex is on the outside (0) or on the inside (1) with respect to the 2-cube that results from inflating the inflation of the vertex $v_j$ of $A_{k-1}$, where $j = \lceil n/D^2\rceil$.

\begin{definition}\label{def:rlent_rlext}
Let $\relentr_{k,m}, \relexit_{k,m}: \unsigneddim \rightarrow \{0,1\}$ be functions that give us the \emph{relative} coordinates of the entry and exit point of $C_{k,m}$.
\end{definition}

\begin{observation}\label{obs:relationabsandrel}\leavevmode\\
$\relentr_{k,m}(j) = \big(\absentr_{k,m}(j) + v_{k,m}[j]\big) \bmod 2$, or equivalently,\\
$\relentr_{k,m}(j) = \big(\flipped(\perm^{-1}_{k,m}(j)) + v_{k,m}[j]\big) \bmod 2$, and\\
$\relexit_{k,m}(j) = \big(\absexit_{k,m}(j) + v_{k,m}[j]\big) \bmod 2$.
\end{observation}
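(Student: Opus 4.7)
The plan is to unfold every definition and chase the arithmetic modulo $4$. Recall that $C_{k,m}$ is a $1$-curve whose vertices are those of the unit cube translated by $2v_{k,m}$. By Corollary~\ref{cor:transform1curve} applied to $\perm_{k,m}(G(d))$, the entry of the untranslated copy has coordinate $j$ equal to $\flipped(\perm^{-1}_{k,m}(j))$. Translating by $2v_{k,m}$ therefore gives
\[
  \absentr_{k,m}(j) \;=\; 2v_{k,m}[j] + \flipped(\perm^{-1}_{k,m}(j)).
\]
This already shows the equivalence of the first two displayed expressions for $\relentr_{k,m}(j)$: since $2v_{k,m}[j]$ is even, $\absentr_{k,m}(j) \equiv \flipped(\perm^{-1}_{k,m}(j)) \pmod 2$, so $\absentr_{k,m}(j) + v_{k,m}[j] \equiv \flipped(\perm^{-1}_{k,m}(j)) + v_{k,m}[j] \pmod 2$.

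Next I would verify the first equality itself, namely $\relentr_{k,m}(j) = (\absentr_{k,m}(j)+v_{k,m}[j]) \bmod 2$. Write $a[j] := \flipped(\perm^{-1}_{k,m}(j)) \in \{0,1\}$, so the entry of $C_{k,m}$, viewed as a vertex $v_{k+1,n}$ of $A_{k+1}$, has $v_{k+1,n}[j] = 2v_{k,m}[j] + a[j]$. Case-split on $v_{k,m}[j] \bmod 2$: if $v_{k,m}[j]$ is even, then $2v_{k,m}[j] \equiv 0 \pmod 4$, so $v_{k+1,n}[j] \bmod 4 = a[j] \in \{0,1\}$; if $v_{k,m}[j]$ is odd, then $2v_{k,m}[j] \equiv 2 \pmod 4$, so $v_{k+1,n}[j] \bmod 4 = 2 + a[j] \in \{2,3\}$. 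Matching these against Definition~\ref{def:relativecoordinates}, in all four sub-cases the relative coordinate equals $a[j] \oplus (v_{k,m}[j] \bmod 2) = (a[j] + v_{k,m}[j]) \bmod 2$, which is what we wanted.

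For the exit statement I would apply exactly the same argument, using the second half of Corollary~\ref{cor:transform1curve}: the exit of the untranslated $\perm_{k,m}(G(d))$ differs from the entry only in coordinate $|\perm_{k,m}(d)|$, where it is flipped in $\{0,1\}$. Translating by $2v_{k,m}$ as before gives $\absexit_{k,m}(j) = 2v_{k,m}[j] + b[j]$ with $b[j] \in \{0,1\}$, and the same case analysis on the parity of $v_{k,m}[j]$ yields $\relexit_{k,m}(j) = (b[j] + v_{k,m}[j]) \bmod 2 = (\absexit_{k,m}(j) + v_{k,m}[j]) \bmod 2$.

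There is no real obstacle here; the only point requiring any care is that the parity of $v_{k,m}[j]$ is precisely what determines whether $2v_{k,m}[j]$ sits in $\{0\}$ or $\{2\}$ modulo $4$, and this parity is exactly the shift needed to turn the absolute offset $a[j]$ inside the child cube into the ``inside/outside'' bit of the enclosing $2$-cube. Once this observation is isolated, the proof is a one-line modular computation for each of entry and exit.
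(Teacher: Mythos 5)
Your proposal is correct: the paper states this as an unproved observation, and the remark immediately following it in the paper ($v_{k,m}[j] = \lfloor\absentr_{k,m}(j)/2\rfloor$, so $\absentr_{k,m}(j)$ is either $2v_{k,m}[j]$ or $2v_{k,m}[j]+1$) is exactly the fact you isolate and then finish off with the mod-4 case analysis against Definition~\ref{def:relativecoordinates}. Your argument is the intended one, just written out in full.
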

Note that in the above observation, $\absentr_{k,m}(j)$ is a coordinate of the entry of $C_{k,m}$, which is a vertex of $A_{k+1}$, while $v_{k,m}[j]$ is a coordinate of a vertex of $A_k$. In fact, $v_{k,m}[j] = \lfloor\absentr_{k,m}(j) / 2\rfloor$ and $\absentr_{k,m}(j)$ must be either $2 \ast v_{k,m}[j]$ or $2 \ast v_{k,m}[j] + 1$. Thus, if $\relentr_{k,m}$ and $v_{k,m}$ are given, this determines $\absentr_{k,m}$ and hence, the signs of $\perm^{-1}_{k,m}$.

\begin{observation}\label{obs:relationrelentryandexit}\leavevmode\\
$\relexit_{k,m}(j) = \relentr_{k,m}(j) + \big[|\perm_{k,m}(d)| = j\big] \pmod 2$;\\
for $m < D^k$ we have $\relexit_{k,m} = \relentr_{k,m+1}$.
\end{observation}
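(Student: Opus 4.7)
The plan is to derive both equations from Observation~\ref{obs:relationabsandrel}, combined with two elementary geometric facts: within each child curve $C_{k,m}=\perm_{k,m}(G(d))$, the entry and exit are related by Corollary~\ref{cor:transform1curve}; and between consecutive child curves, the exit of $C_{k,m}$ and the entry of $C_{k,m+1}$ form an edge of $A_{k+1}$ whose direction is $e_{k,m}$, matching the direction of the parent edge $(v_{k,m},v_{k,m+1})$ in $A_k$. Everything then reduces to mod $2$ bookkeeping.

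For the first equation, I would apply Observation~\ref{obs:relationabsandrel} to both $\relexit_{k,m}(j)$ and $\relentr_{k,m}(j)$. Since the entry and exit of $C_{k,m}$ both sit in the 1-cube around $v_{k,m}$, the $v_{k,m}[j]$ contributions cancel, and the claim reduces to $\absexit_{k,m}(j)\equiv \absentr_{k,m}(j)+\big[|\perm_{k,m}(d)|=j\big]\pmod 2$. Corollary~\ref{cor:transform1curve} supplies this at once: the entry and exit of $\perm_{k,m}(G(d))$ agree outside the orientation axis $|\perm_{k,m}(d)|$, and differ by exactly $\pm 1$ along it, so the same holds for $\absentr_{k,m}$ and $\absexit_{k,m}$ (which are obtained from those local coordinates by adding $2v_{k,m}$).

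For the second equation, fix $j$ and compare $\absexit_{k,m}(j)+v_{k,m}[j]$ with $\absentr_{k,m+1}(j)+v_{k,m+1}[j]$. When $j\neq|e_{k,m}|$, both pairs are individually equal and the result is immediate. When $j=|e_{k,m}|$, each of the two pairs shifts by $\sgn(e_{k,m})$, once inside $A_{k+1}$ and once inside $A_k$, so the total shift is $2\sgn(e_{k,m})\equiv 0\pmod 2$, and Observation~\ref{obs:relationabsandrel} again gives equality of the relative coordinates. The whole observation is really bookkeeping rather than anything deeper; the one point to watch is that the two shifts along axis $|e_{k,m}|$ point in the \emph{same} direction, which is exactly what makes them reinforce to an even value instead of cancelling to $\pm 1$.
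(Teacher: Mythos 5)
Your proof is correct and is exactly the bookkeeping the paper intends: the paper states this as an unproved Observation, to be read off from Observation~\ref{obs:relationabsandrel} together with Corollary~\ref{cor:transform1curve} (for the first identity) and the fact that the connecting edge in $A_{k+1}$ has the same direction $e_{k,m}$ as the parent edge, so the coordinate shift and the shift of $v_{k,m}$ add to $2\sgn(e_{k,m})\equiv 0 \pmod 2$ (for the second). Nothing is missing.
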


As always, when the first subscript to $\absentr$, $\absexit$, $\relentr$ or $\relexit$ is clear from the context, or when a statement holds for any value of the first subscript, we may omit the subscript.

\subsection{Relation between entry and exit of a 2-curve}\label{sec:entrytoexit}

A direct consequence of Lemma~\ref{lem:selfsimilartypes} is that for a self-similar curve we may assume, without loss of generality (modulo reflection, rotation and reversal), that $A'_1 = C'_{0,1} = G'(d)$, so with type $T_{0,1} = 0$, entry edge $\langle d\rangle$ and exit edge $\langle -(d-1)\rangle$. Moreover, in $A_2$, we should have $v_1[d] = 0$ and $v_K[d-1] = 0$, where $K = D^2 = (2^d)^2$, so that the child curves $C_{1,1}$ and $C_{1,D}$ in $A_2$ can be extended with, respectively, the same entry edge $\langle d\rangle$ and the same exit edge $\langle -(d-1)\rangle$ as $A_1$. Note that we can rewrite the conditions on $v_1[d]$ and $v_K[d-1]$ as $\relentr_{1,1}(d) = 0$ and $\relexit_{1,D}(d-1) = 0$.

In this subsection we consider extended hyperorthogonal well-folded approximating curves $A'_1$ and $A'_2$ that fulfill these basic conditions, that is, $A'_1 = G'(d)$, $\relentr_{1,1}(d) = 0$ and $\relexit_{1,D}(d) = 0$, without assuming, at this point, that $A'_1$ and $A'_2$ are indeed approximations of a self-similar space-filling curve. In particular, in this subsection we analyse how the choice of the entry point of $C_{1,1}$ propagates to the other child curves $C_{1,2},\ldots,C_{1,D}$ of $A'_1$. Because the whole subsection focuses on the child curves of $A'_1$ that constitute $A'_2$, we will omit the first subscripts to $C$, $\perm$, $T$, $\relentr$ and $\relexit$: they would always be 1.

\begin{definition}\label{def:omega}
Let $\omega$ be the permutation $[d-1, 2, \ldots, d-2, d, 1]$.
\end{definition}

By tracing the relative coordinates of the entry and exit points through the child curves of $A'_1$ that make up $A'_2$, using the conditions of Theorems \ref{thm:wellfolded} and~\ref{thm:hyperorthogonal}, we find $\relexit_D = \relentr_1 \compose \omega$ (in Lemma \ref{lem:omega}). To prove this we need the following three lemmas from which the proofs can be skipped at first reading.

\begin{lemma}\label{lem:opening}\leavevmode
\begin{itemize}
\item $\relentr_1(d) = 0$.
\item If $\relentr_1(1) = 0$,
then $|\perm_1(d-1)| = d$ and $|\perm_1(d)| = 1$,\\
otherwise $|\perm_1(d-1)| = 1$ and $|\perm_1(d)| = d$.
\item $T_1 = 1 - \relentr_1(1)$.
\end{itemize}
\end{lemma}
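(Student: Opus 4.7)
The first bullet is essentially a restatement of one of the standing assumptions introduced at the beginning of the subsection, namely $\relentr_{1,1}(d) = 0$ (equivalently, $v_1[d] = 0$, so that $C_{1,1}$ can be extended with the same entry edge $\langle d\rangle$ as $A'_1$). With the first-subscript convention in force throughout this subsection, there is nothing further to prove.

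For the second bullet, the plan is to pin down the two axes that must sit at the end of $|\perm_1|$. Because $A'_1 = G'(d)$ begins with the entry edge $\langle d\rangle$, we have $e_{1,0} = d$, so the entry edge of $C'_1$ has axis $d$. By Lemma~\ref{lem:graycodealternatingpattern}, the first edge of $G(d)$ is $\langle 1\rangle$, hence $e_{1,1} = 1$ and the exit edge of $C'_1$ has axis $1$. The extended-curve form of condition~1 of Theorem~\ref{thm:hyperorthogonal} forces both these axes to have depth zero in $\perm_1$, so $\{|\perm_1(d-1)|,\,|\perm_1(d)|\} = \{1,d\}$. Lemma~\ref{lem:types} combined with Observation~\ref{obs:typeandorientation} then says that which of the two orderings occurs is encoded exactly by $T_1$: $T_1 = 0$ gives $|\perm_1(d)| = d$ and $|\perm_1(d-1)| = 1$, whereas $T_1 = 1$ swaps these and produces $|\perm_1(d-1)| = d$ and $|\perm_1(d)| = 1$. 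The second bullet is therefore equivalent to the third.

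For the third bullet, the plan is a short case analysis on $T_1$, using that $v_{1,1} = (0,\ldots,0)$ (because $A_1 = G^0(d)$), which makes $\relentr_1 = \absentr_1$ — i.e.\ the relative entry coordinates coincide with the actual coordinates of the entry of $C_1$ inside $\{0,1\}^d$. The key geometric constraint is that the exit edge of $C_1$ has direction $+1$, so the exit of $C_1$ must satisfy $x_1 = 1$. Writing $a$ and $b$ for the entry and exit of $\perm_1(G(d))$, Corollary~\ref{cor:transform1curve} tells us that $a$ and $b$ agree in every coordinate except coordinate $|\perm_1(d)|$, where they differ by $1$. In the type-0 case, $C_1 = \perm_1(G(d))$ itself and $|\perm_1(d)| = d \neq 1$, so the entry and exit share their first coordinate; the exit constraint then gives $\relentr_1(1) = a[1] = 1$. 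In the type-1 case, $C_1 = \perm_1(\reverse{G(d)})$, so the entry of $C_1$ is $b$ and the exit is $a$, and since $|\perm_1(d)| = 1$ we have $b[1] = 1 - a[1]$; the exit constraint $a[1] = 1$ then yields $\relentr_1(1) = b[1] = 0$. Combining the two cases gives $T_1 = 1 - \relentr_1(1)$.

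The only genuinely delicate point is not to confuse endpoints in the type-1 (reversed) case: Corollary~\ref{cor:transform1curve} is phrased for $\perm_1(G(d))$, and one must remember that reversing swaps the roles of $a$ and $b$ while the flipped coordinate remains the one on axis $|\perm_1(d)|$. Aside from this bookkeeping, the argument is direct.
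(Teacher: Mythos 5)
Your proof is correct and uses essentially the same ingredients as the paper's: the standing assumption $\relentr_{1,1}(d)=0$ for the first bullet, the constraint that the exit of $C_1$ must lie adjacent to $C_2$ along the first edge $\langle 1\rangle$ of $G(d)$ (the paper phrases this as $\relexit_1(1)=1$), the extended depth-zero condition forcing $\{|\perm_1(d-1)|,|\perm_1(d)|\}=\{1,d\}$, and Observation~\ref{obs:typeandorientation} linking the ordering to $T_1$. The only difference is organizational --- you establish the third bullet by a case analysis on $T_1$ and read off the second, while the paper derives the second bullet first and gets the third as a one-line corollary --- and your phrasing ``$C_1=\perm_1(\reverse{G(d)})$'' in the type-1 case is a slight abuse of the paper's convention that $C_1=\perm_1(G(d))$ always holds (the reversal lives in the extension, not the 1-curve), but since $|\perm_1|$ is unaffected and you track the entry/exit correctly, the computation is sound.
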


\begin{proof}
The first item, $\relentr_1(d) = 0$, follows from the fact that the entry edge is $\langle d\rangle$, by the assumptions of this subsection.
The second item follows from the fact that we need $\relexit_1(1)=1$ to be able to connect $C_1$ to $C_2$ with the first edge of $G(d)$, which has direction~$1$.
The third item follows from Observation~\ref{obs:typeandorientation} by $T_1 = \big[|\perm_1(d)| = |e_1|\big] = \big[|\perm_1(d)| = 1\big] = \big[\relentr_1(1) = 0\big] = 1 - \relentr_1(1)$.
\end{proof}

\begin{lemma}\label{lem:developmenteven}
For even $i < 2^d$ we have:
\begin{itemize}
\item $\relentr_i(1) = 1$; for $1 < j < d$, $\relentr_i(j) = \relentr_1(j)$; $\relentr_i(d) = \relentr_1(1)$.
\item If $\relentr_2(|e_i|) = 0$ then $|\perm_i(d-1)| = |\perm_{i+1}(d-1)| = 1$ and $|\perm_i(d)| = |\perm_{i+1}(d)| = |e_i|$,
otherwise $|\perm_i(d-1)| = |\perm_{i+1}(d-1)| = |e_i|$ and $|\perm_i(d)| = |\perm_{i+1}(d)| = 1$.
\item $T_{i+1} = \relentr_2(|e_i|)$; $T_i = 1 - T_{i+1}$.
\end{itemize}
\end{lemma}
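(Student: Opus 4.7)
The plan is to establish all three items simultaneously by induction on even $i \in \{2, 4, \ldots, D-2\}$, exploiting the standing assumption $A'_1 = G'(d)$, which forces $|e_{i-1}| = 1$ for every odd $i-1$ and $|e_i| \neq 1$ for every even $i < D$. Before the induction begins, I would record a universal boundary identity: for all $1 \leq m < D$,
\[
\relexit_m(|e_m|) \,=\, \relentr_{m+1}(|e_m|) \,=\, 1.
\]
This follows by evaluating $\absexit_m(|e_m|)$ explicitly in the two sub-cases $\sgn(e_m) = \pm 1$ and applying Observation~\ref{obs:relationabsandrel}; the calculation is essentially the one already performed inside the proof of Lemma~\ref{lem:opening} to obtain $\relexit_1(1) = 1$. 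Combined with the recurrence $\relexit_m(j) \equiv \relentr_m(j) + [|\perm_m(d)| = j] \pmod{2}$ from Observation~\ref{obs:relationrelentryandexit}, this identity yields $[|\perm_m(d)| = |e_m|] = 1 - \relentr_m(|e_m|)$, and hence by Observation~\ref{obs:typeandorientation} the \emph{type formula} $T_m = 1 - \relentr_m(|e_m|)$.

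For the base case $i = 2$, I would compute $\relentr_2 = \relexit_1$ using the fact that the single axis flipped across $C_1$ is $|\perm_1(d)| \in \{1, d\}$ (Lemma~\ref{lem:opening}); a short case split on $\relentr_1(1) \in \{0, 1\}$, combined with $\relentr_1(d) = 0$, yields $\relentr_2(1) = 1$, $\relentr_2(d) = \relentr_1(1)$, and $\relentr_2(j) = \relentr_1(j)$ for $1 < j < d$ in both branches. The type formula at $m = 2$ then gives $T_2 = 1 - \relentr_2(|e_2|)$. For $T_3$, I would first derive
\[
\relentr_3(1) \,=\, \relentr_2(1) + [|\perm_2(d)| = 1] \,\equiv\, 1 + (1 - T_2) \,\equiv\, 1 + \relentr_2(|e_2|) \pmod{2},
\]
and, using $|e_3| = 1$, plug this into the type formula at $m = 3$ to obtain $T_3 = 1 - \relentr_3(1) = \relentr_2(|e_2|)$. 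The third item at $i = 2$ follows, and the second item is then a reading-off: Theorem~\ref{thm:hyperorthogonal}(1) forces $\{|\perm_i(d-1)|, |\perm_i(d)|\} = \{|e_{i-1}|, |e_i|\}$ for $i \in \{2, 3\}$, and the types determine which axis occupies position $d$.

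For the inductive step (even $i \leq D - 4$), the second item at $i$ gives $|\perm_i(d)| = |\perm_{i+1}(d)|$ -- both equal $|e_i|$ when $\relentr_2(|e_i|) = 0$ and both equal $1$ otherwise -- so
\[
\relentr_{i+2}(j) \,\equiv\, \relentr_i(j) + [|\perm_i(d)| = j] + [|\perm_{i+1}(d)| = j] \,\equiv\, \relentr_i(j) \pmod{2}.
\]
Hence $\relentr_{i+2} = \relentr_i$, and the first item at $i + 2$ is inherited from that at $i$. The second and third items at $i + 2$ then follow by replaying the base-case boundary argument with $m = i+2$ and $m = i+3$ in place of $m = 2$ and $m = 3$.

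The main obstacle is the bookkeeping in the base case: carefully verifying both sign cases of the boundary identity and tracking which coordinate is flipped by $C_1$ in each branch of Lemma~\ref{lem:opening}. Once the boundary identity and the type formula are on the table, the inductive step collapses to a short cancellation powered by $|\perm_i(d)| = |\perm_{i+1}(d)|$.
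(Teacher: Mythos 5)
Your proposal is correct and follows essentially the same route as the paper's proof: both rest on the connectivity constraint $\relexit_m(|e_m|)=1$ (your ``boundary identity''), the resulting relation $\big[|\perm_m(d)|=|e_m|\big]=1-\relentr_m(|e_m|)$ combined with Observation~\ref{obs:typeandorientation} to read off the types, Theorem~\ref{thm:hyperorthogonal} to pin down the last two permutation entries, and the cancellation $|\perm_i(d)|=|\perm_{i+1}(d)|$ giving $\relentr_{i+2}=\relentr_i$ for the induction. The only difference is presentational: you name the boundary identity and type formula as standing sublemmas, whereas the paper carries out the same computations inline.
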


\begin{proof}
Throughout this proof, all additions and subtractions are to be interpreted modulo 2.

We first handle the case $i = 2$.

It is straightforward to calculate the relative entry function $\relentr_2$ from $\relentr_1$ and $|\perm_1(d)|$ using Observation~\ref{obs:relationrelentryandexit}. In particular, with the second item of Lemma~\ref{lem:opening} we get $\relentr_2(1) = \relentr_1(1) + \big[|\perm_1(d)| = 1\big] = \relentr_1(1) + (1 - \relentr_1(1)) = 1$; for $1 < j < d$ we have $\relentr_2(j) = \relentr_1(j) + \big[|\perm_1(d)| = j\big] = \relentr_1(j)$; and $\relentr_2(d) = \relentr_1(d) + \big[|\perm_1(d)| = d\big] = 0 + \relentr_1(1)$.

Now, because $i$ is even, by Lemma~\ref{lem:graycodealternatingpattern} we have $|e_i| \neq 1$ and $|e_{i-1}| = |e_{i+1}| = 1$.
With respect to axis $|e_i|$, the exit point of $C_i$ must be on the inside of the 2-cube traversed by $A'_2$, otherwise it cannot be connected to the next child curve $C_{i+1}$ by an edge with axis $|e_i|$. In other words: we must have $\relexit_i(|e_i|) = 1$, and therefore, by Observation~\ref{obs:relationrelentryandexit}, $\big[|\perm_i(d)| = |e_i|\big] = 1 - \relentr_i(|e_i|)$.\\
Therefore, if $\relentr_i(|e_i|) = 0$, then $|\perm_i(d)| = |e_i|$ and therefore, by Theorem~\ref{thm:hyperorthogonal},\break $|\perm_i(d-1)| = |e_{i-1}| = 1$; moreover, $\relentr_{i+1}(1) = \relentr_i(1) + \big[|\perm_i(d)| = 1\big] = \relentr_i(1) = 1$, so $\big[|\perm_{i+1}(d)| = 1\big] = 1 - \relentr_{i+1}(1) = 0$, in other words, $|\perm_{i+1}(d)| \neq 1 = |e_{i+1}|$, and, by Theorem~\ref{thm:hyperorthogonal}, $|\perm_{i+1}(d)| = |e_i|$ and $|\perm_{i+1}(d-1)| = 1$.\\
Otherwise, if $\relentr_i(|e_i|) = 1$, then $|\perm_i(d)| \neq |e_i|$, so, by Theorem~\ref{thm:hyperorthogonal}, $|\perm_i(d)| = 1$ and $|\perm_i(d-1)| = |e_i|$; moreover, $\relentr_{i+1}(1) = \relentr_i(1) + \big[|\perm_i(d)| = 1\big] = \relentr_i(1) + 1 = 0$, so $\big[|\perm_{i+1}(d)| = 1\big] = 1 - \relentr_{i+1}(1) = 1$, in other words, $|\perm_{i+1}(d)| = 1 = |e_{i+1}|$, and, by Theorem~\ref{thm:hyperorthogonal}, $|\perm_{i+1}(d-1)| = |e_i|$.

The third item of the lemma now follows from Observation~\ref{obs:typeandorientation}:
$T_{i+1} = \big[|\perm_{i+1}(d)| \neq |e_i|\big] = \relentr_i(|e_i|)$, and
$T_i = \big[|\perm_i(d)| = |e_i|\big] = 1 - \relentr_i(|e_i|) = 1 - T_{i+1}$.

The cases $i > 2$, for even $i$, then follow by induction, using that $\perm_{i+1}(d) = \perm_i(d)$, so that, by Observation~\ref{obs:relationrelentryandexit}, $\relentr_{i+2} = \relexit_{i+1} = \relentr_i$.
\end{proof}

\begin{lemma}\label{lem:finale}\leavevmode
\begin{itemize}
\item $\relentr_D(1) = 1$; for $1 < j < d$, $\relentr_D(j) = \relentr_1(j)$; $\relentr_D(d) = \relentr_1(1)$.
\item If $\relentr_D(d-1) = 0$ then $|\perm_D(d-1)| = d-1$ and $\perm_D(d) = -1$,\\
otherwise $|\perm_D(d-1)| = 1$ and $\perm_D(d) = -(d-1)$.
\item $T_D = \relentr_D(d-1) = \relentr_1(d-1)$.
\end{itemize}
\end{lemma}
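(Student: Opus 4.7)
The strategy will be to imitate the three-step analysis of Lemma~\ref{lem:developmenteven}, but for the boundary index $i=D$, where $C_D$ must be extended by the fixed exit edge $\langle -(d-1)\rangle$ of $A'_1$ instead of continuing into a next child curve.

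For the first bullet, I would extend the inductive propagation at the end of the proof of Lemma~\ref{lem:developmenteven} by one more step. Whenever that lemma applies to an even $i$ we have $|\perm_i(d)| = |\perm_{i+1}(d)|$, so by Observation~\ref{obs:relationrelentryandexit} the same coordinate is flipped twice when passing from $\relentr_i$ via $\relexit_i = \relentr_{i+1}$ to $\relexit_{i+1} = \relentr_{i+2}$; hence $\relentr_{i+2} = \relentr_i \pmod 2$. Applying this all the way from $i=2$ up through $i = D-2$ gives $\relentr_D = \relentr_{D-2} = \cdots = \relentr_2$, and the values of $\relentr_2$ are precisely those asserted, as already calculated at the start of the proof of Lemma~\ref{lem:developmenteven}.

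For the second bullet, I would first use Lemma~\ref{lem:graycodealternatingpattern} to see that the entry edge $e_{D-1}$ of $C_D$ has axis $1$ while its exit edge (the exit edge of $A'_1$) has axis $d-1$. By the first condition of Theorem~\ref{thm:hyperorthogonal}, extended to include the exit edge as described in Section~\ref{sec:existence}, this forces $\{|\perm_D(d-1)|,|\perm_D(d)|\} = \{1,d-1\}$. To decide which is which, I would appeal to the requirement $\relexit_D(d-1)=0$ built into the assumptions of this subsection: Observation~\ref{obs:relationrelentryandexit} combined with the first bullet gives
\[
0 = \relexit_D(d-1) = \relentr_1(d-1) + \big[\,|\perm_D(d)|=d-1\,\big] \pmod 2,
\]
which selects $|\perm_D(d)|=1$ (hence $|\perm_D(d-1)|=d-1$) if $\relentr_1(d-1)=0$, and $|\perm_D(d)|=d-1$ (hence $|\perm_D(d-1)|=1$) otherwise. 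The signs then come from Theorem~\ref{thm:wellfolded} in its extended form: the original second condition applied to the connection edge $e_{D-1}=-1$ gives $\sgn(\perm^{-1}_D(-1))=1$, and the rewritten second condition applied to the exit edge $e_D=-(d-1)$ gives $\sgn(\perm^{-1}_D(-(d-1)))=1$ iff $|\perm_D(d)|=d-1$. Inverting these through $\perm_D$ yields $\perm_D(d)=-1$ in the first case and $\perm_D(d)=-(d-1)$ in the second.

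The third bullet is then immediate from Observation~\ref{obs:typeandorientation}: $T_D = \big[\,|\perm_D(d)| = |e_D|\,\big] = \big[\,|\perm_D(d)|=d-1\,\big]$, which by the case analysis equals $\relentr_D(d-1)$, and this in turn equals $\relentr_1(d-1)$ by the first bullet. The main obstacle I expect is the sign bookkeeping in the second bullet: one must be careful about the difference between $\perm_D$ and $\perm^{-1}_D$, about how the minus sign on $-(d-1)$ propagates through $\perm^{-1}_D$, and about choosing between the original and rewritten form of the second condition of Theorem~\ref{thm:wellfolded}, since a single slip would interchange the two cases or corrupt the asserted signs of $\perm_D(d)$.
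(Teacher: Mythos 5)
Your proposal is correct and follows essentially the same route as the paper: the paper likewise obtains the first item from the final inductive step of Lemma~\ref{lem:developmenteven} with $i = D-2$, and disposes of the second and third items as ``straightforward calculations, similar to those of the previous lemmas'' using $e_{D-1} = \langle -1\rangle$, $e_D = \langle -(d-1)\rangle$, and $\relexit_D(d-1)=0$. You have simply written out those calculations explicitly (including the sign bookkeeping via the extended forms of the second condition of Theorem~\ref{thm:wellfolded}), and they check out.
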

\begin{proof}
The first item is actually proven in the last line of the proof of Lemma~\ref{lem:developmenteven}, with $i = D-2$. The second and third item follow from straightforward calculations, similar to those of the previous lemmas, where we use the fact that, by the assumptions of this subsection, we have $e_{D-1} = \langle -1\rangle$ and $e_D = \langle-(d-1)\rangle$, and therefore $\relentr_D(1) = 1$ and $\relexit_D(d-1) = 0$.
\end{proof}

\begin{lemma}\label{lem:omega}\leavevmode
$\relexit_D = \relentr_1 \compose \omega$.
\end{lemma}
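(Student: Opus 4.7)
The plan is to compute $\relexit_D$ coordinate by coordinate directly from $\relentr_D$ using Observation~\ref{obs:relationrelentryandexit}, which says $\relexit_D(j) = \relentr_D(j) + \big[|\perm_D(d)| = j\big] \pmod 2$, and combine this with Lemma~\ref{lem:finale}, which already tells us both $\relentr_D$ and the value of $|\perm_D(d)|$ (split into two cases on $\relentr_D(d-1)$). Since $\omega$ swaps $1 \leftrightarrow d-1$ through $d$ (more precisely, $\omega(1) = d-1$, $\omega(d-1) = d$, $\omega(d) = 1$, and $\omega(j) = j$ for $2 \leq j \leq d-2$), the target identity reduces to four separate numerical checks, one per coordinate group.

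First I would dispose of the ``middle'' coordinates $2 \leq j \leq d-2$. For these, Lemma~\ref{lem:finale} gives $\relentr_D(j) = \relentr_1(j)$, and in both of its two cases $|\perm_D(d)| \in \{1, d-1\}$, hence $|\perm_D(d)| \neq j$, so the Iverson bracket vanishes and $\relexit_D(j) = \relentr_D(j) = \relentr_1(j) = \relentr_1(\omega(j))$. Next I handle the coordinate $d$: again $|\perm_D(d)| \neq d$ in both cases, so $\relexit_D(d) = \relentr_D(d) = \relentr_1(1) = \relentr_1(\omega(d))$ by Lemma~\ref{lem:finale}.

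The remaining two coordinates, $j = 1$ and $j = d-1$, are the only ones where the bracket can fire, and they are exactly the ones permuted nontrivially by $\omega$; this is where the case split on $\relentr_D(d-1) = \relentr_1(d-1)$ actually does work. If $\relentr_D(d-1) = 0$, Lemma~\ref{lem:finale} gives $\perm_D(d) = -1$, so the bracket fires for $j=1$ but not $j=d-1$; computing $\relexit_D(1) = \relentr_D(1) + 1 = 1+1 = 0 = \relentr_1(d-1)$ and $\relexit_D(d-1) = \relentr_D(d-1) + 0 = 0 = \relentr_1(d)$ (using the standing assumption $\relentr_1(d) = 0$) closes this case. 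If $\relentr_D(d-1) = 1$, then $\perm_D(d) = -(d-1)$, the bracket fires for $j=d-1$ but not $j=1$, and symmetrically $\relexit_D(1) = \relentr_D(1) + 0 = 1 = \relentr_1(d-1)$ and $\relexit_D(d-1) = 1 + 1 = 0 = \relentr_1(d)$.

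Combining the four coordinate computations yields $\relexit_D(j) = \relentr_1(\omega(j))$ for all $j \in \unsigneddim$, i.e.\ $\relexit_D = \relentr_1 \compose \omega$. There is no real obstacle: Lemma~\ref{lem:finale} has already done the heavy lifting of propagating the entry data through all the child curves $C_2, \ldots, C_{D-1}$, so the present lemma is just a bookkeeping computation at the exit of $C_D$, and the only thing one needs to notice is that the nontrivial action of $\omega$ on coordinates $\{1, d-1, d\}$ corresponds exactly to the two axes that can appear in $|\perm_D(d-1)|$ and $|\perm_D(d)|$ together with the assumption-forced coordinate $d$.
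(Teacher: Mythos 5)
Your proof is correct and follows essentially the same route as the paper's: a coordinate-by-coordinate evaluation of $\relexit_D$ via Observation~\ref{obs:relationrelentryandexit} combined with the values of $\relentr_D$ and $|\perm_D(d)|$ supplied by Lemma~\ref{lem:finale}. The only cosmetic difference is that you spell out the two cases of $\relentr_D(d-1)$ explicitly for coordinates $1$ and $d-1$, where the paper compresses them into a single Iverson-bracket identity and invokes the standing assumption $\relexit_D(d-1)=0$ directly.
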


\begin{proof}
Straightforward rewriting of the equations in Lemma~\ref{lem:finale}, using Observation~\ref{obs:relationrelentryandexit}, yields:
\begin{itemize}
\item $\relexit_D(1) =
  \relentr_D(1) + \big[|\perm_D(d)| = 1\big] \bmod 2 =
  \relentr_D(1) + \big[\relentr_D(d-1) = 0\big] \bmod 2 =
  \relentr_D(1) + \big[\relentr_D(d-1) = 0\big] \bmod 2 =
  1 + (\relentr_1(d-1) + 1) \bmod 2 = \relentr_1(d-1)$;
\item for $1 < i < d-1$, we have $\relexit_D(i) = \relentr_D(i) + \big[|\perm_D(d)| = i\big] \bmod 2 = \relentr_1(i) + 0 = \relentr_1(i)$;
\item by the assumptions of this subsection, $\relexit_D(d-1) = 0$, which equals $\relentr_1(d)$;
\item $\relexit_D(d) = \relentr_D(d) + \big[|\perm_D(d)| = i\big] \bmod 2 = \relentr_1(1) + 0 = \relentr_1(1)$.
\end{itemize}
This establishes $\relexit_D = \relentr_1 \compose \omega$ with $\omega$ as in Definition~\ref{def:omega}.
\end{proof}

\subsection{Possible entry points of self-similar curves}\label{sec:possibleentry}

In this subsection we will first use the similarity between the 2-curves that make up $A_3$ to prove Lemma~\ref{lem:tworelativeentries}, which states that $\relentr_1(j)$ should be the same for all $j \in \{1,...,d-1\}$. After that, we will use the similarity between $A_2$ and the 2-curve that forms the beginning of any approximating curve $A_k$ ($k \geq 2$), to prove Lemma~\ref{lem:sameonalllevels}, which states that $\relentr_k = \relentr_1$ for all $k \geq 1$. From that we will derive Theorem~\ref{thm:twoentries}, which essentially says that for any fixed $d$, there are only two points that may be the starting point of a $d$-dimensional self-similar, hyperorthogonal, well-folded space-filling curve.

Let $A'_1, A'_2, A'_3$ be extended hyperorthogonal well-folded approximating curves of a self-similar space-filling curve, fulfilling the assumptions which we made, without loss of generality, in Section~\ref{sec:entrytoexit}. When we inflate $A'_2$ to obtain $A'_3$, so that a 2-curve replaces each vertex of $A'_1$, the relative coordinates of each 2-curve's exit point should equal the relative coordinates of the next 2-curve's entry point---otherwise the 2-curves would not be connected by an edge.

\begin{observation}\label{obs:(not)reverse}
Because of self-similarity, the 2-curve replacing $v_i$ of $A_1$ must itself be an non-reverse isometry of either $A_2$ if $T_i = 0$, or $\reverse{A_2}$ if $T_i = 1$.
\end{observation}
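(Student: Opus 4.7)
The plan is to combine the definition of self-similarity with Definition~\ref{def:types}. By self-similarity, $A_3$ is a concatenation of $D$ isometric copies of $A_2$, one per vertex of $A_1$; the copy attached to $v_i$ is precisely the 2-curve $B_{1,i,2}$ considered in the observation. Since ``isometric'' in this paper includes reversal, $B_{1,i,2}$ is either a non-reverse isometry of $A_2$ or a non-reverse isometry of $\reverse{A_2}$, so the two alternatives in the statement are exhaustive; it remains to show that $T_i$ selects the correct one.

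Next, I would use that $A_1 = C_{0,1} = G'(d)$ by the standing assumption of Section~\ref{sec:entrytoexit}. By Definition~\ref{def:types}, $T_i = 0$ means $C_{1,i} = \tau(A_1)$ for some non-reverse isometry $\tau$, while $T_i = 1$ means $C_{1,i} = \tau(\reverse{A_1})$ for such a $\tau$. Moreover, $B_{1,i,2}$ is by construction the further inflation of $C_{1,i}$ (i.e., the two-step inflation of the single vertex $v_i$).

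The crux is then that self-similar inflation commutes with isometries of the ambient space and with reversal of the parent curve: if $A_1$ inflates to $A_2$, then $\tau(A_1)$ inflates to $\tau(A_2)$ and $\reverse{A_1}$ inflates to $\reverse{A_2}$. The first assertion is immediate from the coordinate-free nature of the inflation rule. For the second, one reverses the order of the child curves and reverses each child curve individually; the well-foldedness conditions of Theorem~\ref{thm:wellfolded} are symmetric under simultaneously negating all directions $e_i$ and reversing the traversal order, so the construction remains valid, and by self-similarity (which uses the same template at every scale) the reversed inflation is exactly $\reverse{A_2}$. Applying this compatibility to $C_{1,i}$ yields $B_{1,i,2} = \tau(A_2)$ when $T_i = 0$ and $B_{1,i,2} = \tau(\reverse{A_2})$ when $T_i = 1$, as claimed.

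I expect the main obstacle to be making the commutativity of inflation with reversal fully rigorous; the remainder is essentially unwinding definitions. I would address it by checking directly that the pair (reversed parent, reversed-and-reordered child curves) satisfies both conditions of Theorem~\ref{thm:wellfolded}, which follows by inspection since those conditions treat forward and backward traversal symmetrically.
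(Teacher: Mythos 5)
There is a genuine gap in the step that links the type $T_i$ to the choice between $A_2$ and $\reverse{A_2}$. You argue ``forwards'': $C_{1,i} = \tau(A_1)$ (or $\tau(\reverse{A_1})$), inflation commutes with isometries and reversal, hence $B_{1,i,2} = \tau(A_2)$ (or $\tau(\reverse{A_2})$). But inflation is not a deterministic operation in this paper: a given 1-curve admits many valid inflations (the permutations $\perm_{k,i}$ are free parameters, constrained only by Theorems~\ref{thm:wellfolded} and~\ref{thm:hyperorthogonal}). Showing that $\tau(A_2)$ \emph{is a} valid inflation of $\tau(A_1)$ does not show that it is \emph{the} 2-curve sitting inside $A_3$ over $v_i$. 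Your appeal to ``self-similarity uses the same template at every scale'' to close this gap is circular: the definition of self-similarity says only that this 2-curve is \emph{some} isometric copy of $A_2$, and the content of the observation is precisely the identification of \emph{which kind} of copy it is. The argument has to run in the opposite direction: start from the isometric copy $\mu(A_2)$ or $\mu(\reverse{A_2})$ guaranteed by self-similarity, and \emph{coarsen} it back to level 1 (coarsening, unlike inflation, is deterministic and does commute with isometries and with reversal); the coarsening, together with its entry and exit edges, is $C'_{1,i}$, whose type is $T_i$ by Definition~\ref{def:types}.

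Second, you never address the exclusivity of the two alternatives, which the paper flags explicitly in the sentence following the observation and for which it proves Lemma~\ref{lem:asymmetric} immediately beforehand. Without knowing that the 2-curve cannot be a non-reverse isometry of both $A_2$ and $\reverse{A_2}$ (equivalently, that no non-reverse isometry of an inflation of $\reverse{G'(d)}$ visits its vertices in the same order as the corresponding inflation of $G'(d)$), the correspondence with $T_i$ is not a well-defined dichotomy: a priori both descriptions could apply simultaneously, and the coarsening argument above could not conclude anything about $T_i$. Any complete proof of the observation must invoke Lemma~\ref{lem:asymmetric} or reprove its content; your proposal omits this entirely.
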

Note the either-or in the above observation: by Lemma~\ref{lem:asymmetric}, the 2-curve replacing $v_i$ cannot be a non-reverse isometry of both $A_2$ and $\reverse{A_2}$ at the same time.

As a result of the transformation $\perm_{i-1}$, the relative coordinates of the exit point of the 2-curve replacing $v_{i-1}$ of $A_1$ are given by the function $\relentr_1 \compose \omega \compose |\perm^{-1}_{i-1}|$ if $T_{i-1} = 0$, and by $\relentr_1 \compose |\perm^{-1}_{i-1}|$ if $T_{i-1} = 1$. The relative coordinates of the entry point of the 2-curve replacing $v_{i}$ are given by the function $\relentr_1 \compose |\perm^{-1}_i|$ if $T_i = 0$, and by $\relentr_1 \compose \omega \compose |\perm^{-1}_i|$ if $T_i = 1$. Thus we get:

\begin{lemma}\label{lem:matchinglocations}\leavevmode
\begin{itemize}
\item If $T_{i-1} = 0$ and $T_i = 0$, we have $\relentr_1 \compose \omega \compose |\perm^{-1}_{i-1}| = \relentr_1 \compose |\perm^{-1}_i|$
\item If $T_{i-1} = 0$ and $T_i = 1$, we have $\relentr_1 \compose \omega \compose |\perm^{-1}_{i-1}| = \relentr_1 \compose \omega \compose |\perm^{-1}_i|$
\item If $T_{i-1} = 1$ and $T_i = 0$, we have $\relentr_1 \compose |\perm^{-1}_{i-1}| = \relentr_1 \compose |\perm^{-1}_i|$
\item If $T_{i-1} = 1$ and $T_i = 1$, we have $\relentr_1 \compose |\perm^{-1}_{i-1}| = \relentr_1 \compose \omega \compose |\perm^{-1}_i|$
\end{itemize}
\end{lemma}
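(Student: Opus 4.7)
The plan is to identify each of the four equations as a matching condition between the relative coordinates of two adjacent 2-curves in $A'_3$. Write $B_{i-1}$ and $B_i$ for the 2-curves of $A_3$ that replace $v_{i-1}$ and $v_i$ of $A_1$; they are joined by an edge of direction $e_{i-1}$. The first step is to verify that the exit of $B_{i-1}$ and the entry of $B_i$ carry \emph{identical} relative-coordinate functions. In any direction $j \neq k := |e_{i-1}|$ the two endpoints share the same actual coordinate, hence the same relative coordinate. In direction $k$ they lie on the boundary between the two 2-cubes occupied by $B_{i-1}$ and $B_i$, so their $k$-coordinates modulo $4$ are $3$ and $0$ (or vice versa), both giving relative coordinate $0$. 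So these two functions on $\unsigneddim$ must coincide.

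The second step is to express each of these two functions in closed form in terms of $\relentr_1$, $\omega$, and $|\perm^{-1}_{i-1}|$ or $|\perm^{-1}_i|$. By Observation~\ref{obs:(not)reverse} and self-similarity, $B_j$ is a non-reverse isometric image, via $\perm_j$ combined with a suitable translation by a multiple of $4$, of $A_2$ when $T_j = 0$ and of $\reverse{A_2}$ when $T_j = 1$. The key ingredient is that applying a signed permutation $\perm$ transforms any relative-coordinate function $r$ into $r \compose |\perm^{-1}|$: an axis permutation merely renumbers the coordinate indices, and a reflection in axis $m$ sends $c$ to $3-c$ on $\{0,1,2,3\}$, an involution that preserves the partition $\{\{0,3\},\{1,2\}\}$ and hence the relative coordinate. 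Combined with Lemma~\ref{lem:omega}, which gives entry $\relentr_1$ and exit $\relentr_1 \compose \omega$ for $A_2$ itself, this yields for $B_j$ the entry $\relentr_1 \compose |\perm^{-1}_j|$ and exit $\relentr_1 \compose \omega \compose |\perm^{-1}_j|$ if $T_j = 0$, with entry and exit swapped if $T_j = 1$.

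Substituting these expressions into the equality established in the first step, separately for each of the four $(T_{i-1}, T_i)$ combinations, immediately reads off the four identities claimed in the lemma.

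The main obstacle is the transformation rule for relative coordinates under a signed permutation; once the small observation about the partition $\{\{0,3\},\{1,2\}\}$ is in hand, the rest of the argument is routine bookkeeping. A minor point worth checking is that the translation by $4v_{1,j}$ which places $B_j$ inside $A_3$ does not disturb relative coordinates, but this is automatic since $4v_{1,j}$ is a multiple of~$4$ in every coordinate.
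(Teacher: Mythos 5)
Your proposal is correct and follows essentially the same route as the paper: equate the relative coordinates of the exit of the 2-curve replacing $v_{i-1}$ with those of the entry of the 2-curve replacing $v_i$ (forced by the connecting edge), and express each via Observation~\ref{obs:(not)reverse}, Lemma~\ref{lem:omega}, and the rule that a signed permutation $\perm$ turns a relative-coordinate function $r$ into $r \compose |\perm^{-1}|$. The paper asserts these two steps without elaboration, whereas you justify them explicitly (the mod-4 boundary argument and the invariance of the partition $\{\{0,3\},\{1,2\}\}$ under reflection), which is a welcome addition but not a different proof.
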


\noindent We will now analyse the possible successions of types $T_i$ and permutations $\perm_i$ for the vertices $v_i$ of $A_1$, where $i \in \{1,\ldots,2^d\}$. We will do so in four lemmas, concluding with Lemma~\ref{lem:tworelativeentries}, which states that $\relentr_1(j)$ should be the same for all $j \in \{1,...,d-1\}$.

\begin{lemma}\label{lem:noalternation}
There is a $j \in \{2,3,\ldots,D\}$ such that $T_j = T_{j-1}$.
\end{lemma}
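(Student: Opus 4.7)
The plan is to argue by contradiction: suppose $T_i \neq T_{i-1}$ for every $i \in \{2,\ldots,D\}$, so that the types strictly alternate along the vertices of $A_1$. Then $T_i = T_1$ when $i$ is odd and $T_i = 1 - T_1$ when $i$ is even. I will derive a contradiction using Lemmas~\ref{lem:opening} and~\ref{lem:developmenteven}.

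First I would apply Lemma~\ref{lem:developmenteven} to every even index $i \in \{2,4,\ldots,D-2\}$: it gives $T_{i+1} = \relentr_2(|e_i|)$. Under the alternation assumption, $T_{i+1} = T_1$ (since $i+1$ is odd), so $\relentr_2(|e_i|) = T_1$ for every such even~$i$.

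Next I would locate an even index whose edge has axis exactly $d$. By Definition~\ref{def:graycode}, $G(d)$ is the concatenation of $G(d-1)$, $\langle d\rangle$ and $\reverse{G(d-1)}$, and $G(d-1)$ has $2^{d-1}-1$ edges, so the middle edge $e_{2^{d-1}}$ of $G(d)$ has axis $d$. Since $d \geq 3$, the index $2^{d-1}$ is even and lies in $\{2,\ldots,D-2\}$. The previous paragraph therefore yields $\relentr_2(d) = T_1$.

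Finally I would combine two identities already available to close the argument. Lemma~\ref{lem:developmenteven} also states $\relentr_2(d) = \relentr_1(1)$, while Lemma~\ref{lem:opening} states $\relentr_1(1) = 1 - T_1$. Hence $T_1 = 1 - T_1$, which is impossible, and alternation must fail somewhere in $\{2,\ldots,D\}$. The main delicate point is selecting the right axis: for $j \in \{2,\ldots,d-1\}$ Lemma~\ref{lem:developmenteven} only gives $\relentr_2(j) = \relentr_1(j)$, which carries no direct tie to $T_1$, whereas axis $d$ is linked back to $\relentr_1(1)$ and thereby to $T_1$. The recursive structure of $G(d)$ happens to place the edge of axis $d$ at the even index $2^{d-1}$, which is exactly what is needed for this link to produce the clash.
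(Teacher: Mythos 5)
Your proof is correct and follows essentially the same strategy as the paper: both arguments apply the identity $T_{i+1} = \relentr_2(|e_i|)$ from Lemma~\ref{lem:developmenteven} at a well-chosen even index and derive a parity contradiction with strict alternation. The only difference is cosmetic: the paper picks an even $i$ with $|e_i| = d-1$ and anchors the clash at the exit end via Lemma~\ref{lem:finale} ($T_{i+1} = T_D$ with an even-length alternating stretch), whereas you pick $i = 2^{d-1}$ with $|e_i| = d$ and anchor it at the entry end via Lemma~\ref{lem:opening} ($T_1 = \relentr_2(d) = \relentr_1(1) = 1 - T_1$).
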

\begin{proof}
There is an even $i < D = 2^d$ (specifically, we may choose $i = D/4$ or $i = 3D/4$) such that $|e_i| = d-1$. By Lemma~\ref{lem:developmenteven}, we have $T_{i+1} = \relentr_1(d-1)$, which, by Lemma~\ref{lem:finale}, equals $T_D$. Thus the sequence $T_{i+1},T_{i+2},\ldots,T_D$ consists of an even number of types where the last equals the first. This implies that a strictly alternating type sequence is not possible.
\end{proof}

\begin{lemma}\label{lem:1equalsd-1}
$\relentr_1(1) = \relentr_1(d-1)$.
\end{lemma}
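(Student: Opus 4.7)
The plan is to turn the alternation failure provided by Lemma~\ref{lem:noalternation} into an explicit constraint on $\relentr_1$, by combining Lemma~\ref{lem:matchinglocations} with Theorem~\ref{thm:hyperorthogonal} condition~2. First I would fix $j \in \{2,\ldots,D\}$ with $T_{j-1}=T_j$, and observe that $T_i+T_{i+1}=1$ for every even $i \in \{2,\ldots,D-2\}$ (third bullet of Lemma~\ref{lem:developmenteven}), so $j$ must be even and $j-1$ odd. Next I would read off $\{|\perm_{j-1}(d-1)|,|\perm_{j-1}(d)|\}$ and $\{|\perm_j(d-1)|,|\perm_j(d)|\}$ from Lemmas~\ref{lem:opening},~\ref{lem:developmenteven}, and~\ref{lem:finale}: adopting the uniform convention $|e_0|=d$ and $|e_D|=d-1$, the former pair equals $\{1,a\}$ and the latter $\{1,b\}$, where $a := |e_{j-2}|$ and $b := |e_j|$, and the precise positions of $a$ and $b$ inside each pair are dictated by the shared value of $T_{j-1}$.

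A short check on the Gray-code structure of $G(d)$ shows that consecutive even-indexed edges have distinct axes, so $a \neq b$ except in the single case $(d,j) = (3,D)$, which I deal with separately below. In the configuration arising from $T_{j-1}=T_j=0$ (the $T=1$ case is symmetric via the corresponding branch of Lemma~\ref{lem:matchinglocations}), one has $a = |\perm_{j-1}(d)|$ and $b = |\perm_j(d-1)|$, so evaluating Lemma~\ref{lem:matchinglocations} at $k = a$ applies $\omega$ at position $d$; since $\omega(d)=1$, this gives $\relentr_1(1) = \relentr_1(|\perm^{-1}_j(a)|)$. Theorem~\ref{thm:hyperorthogonal} condition~2, combined with $a \notin \{1,b\}$, forces $|\perm^{-1}_j(a)| = d-2$; by the mirror argument $|\perm^{-1}_{j-1}(b)| = d-2$, and $k=b$ yields $\relentr_1(\omega(d-2)) = \relentr_1(d-1)$. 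For $d \geq 4$ we have $\omega(d-2) = d-2$, and the two identities compose to $\relentr_1(1) = \relentr_1(d-2) = \relentr_1(d-1)$.

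The main obstacle is the degenerate case $d=3$, in which $d-2=1$ and $\omega(1)=d-1$, so the chain above breaks. For $d=3$, however, $|e_{D-2}|=2=d-1$, making $T_{D-1}=\relentr_2(2)=\relentr_1(2)=\relentr_1(d-1)=T_D$ automatic and $j=D$ a valid witness. Here $a = 2 = d-1$, so the axis $a$ already occupies position $d-1$ of $|\perm_D|$ and no hyperorthogonality-based positioning is required: the $k=2$ instance of Lemma~\ref{lem:matchinglocations}, via $\omega(d)=1$, delivers $\relentr_1(1) = \relentr_1(d-1)$ directly.
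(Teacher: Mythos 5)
Your proof is correct, and while it leans on the same pillars as the paper's argument---Lemma~\ref{lem:noalternation} for a witness index with equal consecutive types, the functional identities of Lemma~\ref{lem:matchinglocations}, and condition~2 of Theorem~\ref{thm:hyperorthogonal} to control how the depth of an axis changes between consecutive permutations---it evaluates those identities at different places. You track the depth-zero axes $a=|e_{j-2}|$ and $b=|e_j|$ and use $\omega(d)=1$ and $\omega(d-2)=d-2$, so your case split is on whether $a=b$ (correctly pinned down as occurring only for $(d,j)=(3,D)$) and your chain is $\relentr_1(1)=\relentr_1(d-2)=\relentr_1(d-1)$. The paper instead tracks the deep axes $|\perm_i(1)|$ and $|\perm_i(2)|$, uses $\omega(1)=d-1$ and $\omega(2)=2$, splits on whether $|\perm_{i-1}(1)|=|\perm_i(1)|$, and chains through $\relentr_1(2)$. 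The two are essentially dual; both degenerate at $d=3$ and both rescue that case by noting that the witness $j=D$ (where $T_{D-1}=T_D$ is forced by Lemmas \ref{lem:developmenteven} and~\ref{lem:finale}) collapses the relevant axes into one, so a single evaluation of Lemma~\ref{lem:matchinglocations} suffices. Your version arguably needs slightly less case analysis, at the price of one claim you leave as a ``short check'': that consecutive even-indexed edges of $G'(d)$ have distinct axes except at $(d,j)=(3,D)$. This is true, but for $4\le j\le D-2$ it is really a consequence of hyperorthogonality of $G(d)$ (the four edges $e_{j-3},\ldots,e_j$ would otherwise use only two axes, impossible for $d\ge 4$), with the boundary indices $j=2$ and $j=D$ and the whole of $d=3$ checked by hand; spelling that out would make the proof self-contained. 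One cosmetic slip: in the $d=3$, $T_{D-1}=T_D=1$ branch the coinciding axis occupies position $d$ of $\perm_D$ rather than $d-1$, though the evaluation via $\omega(d)=1$ on the other side of the identity goes through identically.
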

\begin{proof}
Let $i$ be the largest $i$ from $\{2,3,\ldots,D\}$ such that $T_i = T_{i-1}$ (there is always such an $i$, by Lemma~\ref{lem:noalternation}). We distinguish three cases: (i) $|\perm_{i-1}(1)| = |\perm_i(1)|$; (ii) $|\perm_{i-1}(1)| \neq |\perm_i(1)|$ and $d \geq 4$; (iii) $|\perm_{i-1}(1)| \neq |\perm_i(1)|$ and $d = 3$.

In the first case, let $x$ be $\perm_i(1)$, so we have $|\perm^{-1}_{i-1}(x)| = |\perm^{-1}_i(x)| = 1$. Then, from Lemma~\ref{lem:matchinglocations}, evaluating the functions on both sides for $x$, we find, both in the case of $T_i = T_{i-1} = 0$ and the case of $T_i = T_{i-1} = 1$, the following: $\relentr_1(1) = \relentr_1(\omega(1)) = \relentr_1(d-1)$.

In the second case, we have, by Theorem~\ref{thm:hyperorthogonal}, $|\perm_{i-1}(1)| = |\perm_i(2)|$ and $|\perm_{i-1}(2)| = |\perm_i(1)|$. Let $x=|\perm_i(1)|; y=|\perm_i(2)|$, so we have $|\perm^{-1}_{i-1}(y)| = |\perm^{-1}_i(x)| = 1$ and $|\perm^{-1}_{i-1}(x)| = |\perm^{-1}_i(y)| = 2$.
Since $d\geq 4$ we have $\omega(2)=2$. Now, if $T_{i-1} = T_i = 0$, Lemma~\ref{lem:matchinglocations} gives us $\relentr_1(\omega(|\perm^{-1}_{i-1}(x)|)) = \relentr_1(|\perm^{-1}_i(x)|) \leftrightarrow \relentr_1(2) = \relentr_1(1)$ and
$\relentr_1(\omega(|\perm^{-1}_{i-1}(y)|)) = \relentr_1(|\perm^{-1}_i(y)|) \leftrightarrow \relentr_1(d-1) = \relentr_1(2)$, so $\relentr_1(1) = \relentr_1(2) = \relentr_1(d-1)$.
Otherwise we must have $T_{i-1} = T_i = 1$ and Lemma~\ref{lem:matchinglocations} gives us $\relentr_1(|\perm^{-1}_{i-1}(x)|) = \relentr_1(\omega(|\perm^{-1}_i(x)|)) \leftrightarrow \relentr_1(2) = \relentr_1(d-1)$ and
$\relentr_1(|\perm^{-1}_{i-1}(y)|) = \relentr_1(\omega(|\perm^{-1}_i(y)|)) \leftrightarrow \relentr_1(1) = \relentr_1(2)$, so, again, $\relentr_1(1) = \relentr_1(2) = \relentr_1(d-1)$.

The third case does not occur, since for $d = 3$, the proof of Lemma~\ref{lem:noalternation} yields $T_8 = T_7$, so $i = 8$. Since $G'(3)$ ends with $\langle \ldots, -2, -1, -2\rangle$, both $v_7$ and $v_8$ are incident on edges with axes $1$ and $2$, and both $\perm_7$ and $\perm_8$ must have the remaining axis, 3, at depth 1, thus $|\perm_{i-1}(1)| = |\perm_i(1)| = 3$.
\end{proof}

\begin{lemma}\label{lem:2equalsd-2}
$\relentr_1(j) = \relentr_1(j-1)$ for all $j \in \{3,4,\ldots,d-2\}$.
\end{lemma}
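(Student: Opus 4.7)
The plan is to run the same style of argument as in Lemma~\ref{lem:1equalsd-1}: locate a pair of consecutive vertices $v_{i-1}, v_i$ of $A_1 = G(d)$ for which $T_{i-1} = T_i$ and in which two adjacent axes exchange their positions between $|\perm_{i-1}|$ and $|\perm_i|$, and then read off the desired identity from Lemma~\ref{lem:matchinglocations}. The key observation that makes the present case easier than Lemma~\ref{lem:1equalsd-1} is that, by Definition~\ref{def:omega}, $\omega$ fixes every index in $\{2, 3, \ldots, d-2\}$. Consequently, whenever the axis that shifts occupies positions $j-1$ and $j$ with $j-1, j \in \{2, \ldots, d-2\}$, the occurrences of $\omega$ on either side of the four equations in Lemma~\ref{lem:matchinglocations} become the identity on the arguments of interest, and all four type-combinations collapse to $\relentr_1(j-1) = \relentr_1(j)$ with a one-line computation.

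To produce such a pair for a given $j \in \{3, \ldots, d-2\}$ I would exploit the recursive structure $G(d) = G(d-1), \langle d\rangle, \reverse{G(d-1)}$. Inside the outer $G(d)$ there are nested $G(m)$ sub-blocks, and near the central edge $\langle m\rangle$ of any $G(m)$ sub-block the local edge-distance pattern is very tight. Lemma~\ref{lem:edgedistance} bounds the depth of each axis by its edge distance in $A_1$, and condition~2 of Theorem~\ref{thm:hyperorthogonal} only allows the depth to drift by one per step. Choosing $m = j+1$ and looking at a vertex $v_i$ whose two incident edges have axes in $\{1, j+1\}$, these two constraints will force the axes $\{2,\ldots,j\}$ (the ones first introduced inside the $G(j+1)$ sub-block) to occupy positions $\{d-j,\ldots,d-2\}$ and the axes $\{j+2,\ldots,d\}$ (the ones introduced only by the outer levels) to occupy positions $\{1,\ldots,d-j-1\}$, both in $\perm_{i-1}$ and in $\perm_i$. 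The step from $v_{i-1}$ to $v_i$ then acts as an adjacent transposition of positions $j-1$ and $j$. Concurrently, the type equality $T_{i-1} = T_i$ can be arranged by invoking Lemma~\ref{lem:noalternation} restricted to the sub-block, noting that Lemma~\ref{lem:developmenteven} pins down where type alternation must break.

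The main obstacle will be the bookkeeping: exhibiting, for every $j \in \{3,\ldots,d-2\}$, a concrete index $i$ where all three conditions (adjacent swap, positions $j-1$ and $j$, matching types) hold simultaneously, and ruling out degenerate behaviours analogous to the $d=3$ exclusion in case (iii) of Lemma~\ref{lem:1equalsd-1}. The cleanest implementation is probably to pick for each $j$ the largest index $i\in \{2,\ldots,D\}$ of a type-coincidence that lies inside the innermost $G(j+1)$ sub-block containing a $\langle j+1\rangle$ edge, then repeat verbatim the case analysis of Lemma~\ref{lem:1equalsd-1} with $(1,2,d-1)$ replaced by $(j-1,j,\text{their }\omega\text{-images})$, noting that the latter are themselves since $\omega$ fixes $\{2,\ldots,d-2\}$. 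Once this is done, the final step is an immediate application of Lemma~\ref{lem:matchinglocations} with $x = |\perm_i(j-1)|$ and $y = |\perm_i(j)|$, yielding $\relentr_1(j-1) = \relentr_1(j)$ as desired.
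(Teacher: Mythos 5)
Your closing step is right, and it coincides with the paper's: once you have an index $i$ and an axis $x$ with $|\perm_{i-1}(j-1)| = |\perm_i(j)| = x$ and both $j-1,j \in \{2,\ldots,d-2\}$, the fact that $\omega$ fixes $\{2,\ldots,d-2\}$ collapses all four cases of Lemma~\ref{lem:matchinglocations} to $\relentr_1(j-1)=\relentr_1(j)$ --- which, as you yourself note, makes your insistence on arranging $T_{i-1}=T_i$ superfluous. The gap is in how you produce the pair $(i,x)$. First, Lemma~\ref{lem:edgedistance} gives only \emph{upper} bounds $\permdp(\perm_i,a)\leq\edgedist(A_1,v_i,a)$, so your claim that the axes $\{2,\ldots,j\}$ are \emph{forced} into positions $\{d-j,\ldots,d-2\}$ fails for $d\geq 5$: at the vertex just past the central edge of the leading $G(j+1)$ block, axis $3$ has edge distance $3$ and may legitimately sit at depth $2$ or $3$, with a farther axis taking the other slot. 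This freedom is precisely what Observation~\ref{obs:nopermchoice} and Section~\ref{sec:lowerbounds} say exists for $d\geq 5$, and the lemma must hold for \emph{every} self-similar hyperorthogonal well-folded curve, not only the canonical one whose permutations are sorted by local edge distance. Second, even granting your block structure, the positions $j-1$ and $j$ do not line up with the boundary between your two blocks of positions, so no ``adjacent transposition of positions $j-1$ and $j$'' has been exhibited: you never identify which axis moves, nor why it moves by exactly one position between $v_{i-1}$ and $v_i$. Third, Lemma~\ref{lem:noalternation} guarantees a type coincidence \emph{somewhere} in $\{2,\ldots,D\}$, not at a location of your choosing, so it cannot be ``restricted to the sub-block''.

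The paper obtains the pair $(i,x)$ by a one-line intermediate-value argument that needs none of this bookkeeping. Take $x=|\perm_1(1)|$, so $\permdp(\perm_1,x)=d-2$. Every axis occurs as the axis of some edge $e_h$ of $G(d)$, so condition~1 of Theorem~\ref{thm:hyperorthogonal} forces $\permdp(\perm_h,x)=0$ for some $h$; condition~2 says the depth of $x$ changes by at most one per step. Hence the depth sequence of $x$ descends from $d-2$ to $0$ through every intermediate value by unit decrements, which yields, for each $j\in\{2,\ldots,d-2\}$, an index $i$ with $|\perm_{i-1}(j-1)|=|\perm_i(j)|=x$. Feeding this into your own final observation finishes the proof; I would replace your second and third paragraphs entirely by this argument.
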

\begin{proof}
By Theorem~\ref{thm:hyperorthogonal}, we must have $\permdp(\perm_h,x) = 0$ for some $h$ and $x = |\perm_1(1)|$, so $\permdp(\perm_1,x) = d-2$. Since depth differs by at most one between successive permutations $\perm_i$, there must be, for any $j \in \{2,\ldots,d-2\}$, an $i$ such that $|\perm_i(j)| = |\perm_{i-1}(j-1)| = x$. Note also that for $j \in \{2,\ldots,d-2\}$, we have $\omega(j) = j$. Hence, from Lemma~\ref{lem:matchinglocations}, evaluating the functions on both sides for $x$, we find $\relentr_1(j-1) = \relentr_1(j)$.
\end{proof}

\begin{lemma}\label{lem:tworelativeentries}
$\relentr_1(j) = \relentr_1(j-1)$ for all $j \in \{2,\ldots,d-1\}$.
\end{lemma}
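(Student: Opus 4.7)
My plan is to reduce the statement to the single missing equality $\relentr_1(1)=\relentr_1(2)$, since Lemma~\ref{lem:1equalsd-1} gives $\relentr_1(1)=\relentr_1(d-1)$ and Lemma~\ref{lem:2equalsd-2} gives $\relentr_1(2)=\relentr_1(3)=\cdots=\relentr_1(d-2)$. For $d=3$ there is nothing to do beyond Lemma~\ref{lem:1equalsd-1}, since the claim then only asks for $\relentr_1(2)=\relentr_1(1)$ and $d-1=2$. So the real work is for $d\geq 4$, and I will prove $\relentr_1(1)=\relentr_1(2)$ in that case.

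The idea is the same axis-tracking argument already used in Lemma~\ref{lem:2equalsd-2}, but applied to the boundary position $j=2$. Take $x=|\perm_1(1)|$, so that $\permdp(\perm_1,x)=d-2$, i.e.\ $|\perm_1^{-1}(x)|=1$. Because $x$ is an axis of $\{1,\ldots,d\}$, it must appear as the axis of at least one edge of $A_1=G(d)$, and for such an edge's endpoint we have depth~$0$. By condition~2 of Theorem~\ref{thm:hyperorthogonal}, the depth of $x$ changes by at most one between $\perm_{i-1}$ and $\perm_i$, hence the first time it drops below $d-2$ it drops exactly to $d-3$. Equivalently, there is an index $i$ with $|\perm_{i-1}^{-1}(x)|=1$ and $|\perm_i^{-1}(x)|=2$.

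Now apply Lemma~\ref{lem:matchinglocations} at this $i$, evaluating both sides at $x$. For $d\geq 4$ one has $\omega(1)=d-1$ and $\omega(2)=2$, so the four cases give, respectively, $\relentr_1(d-1)=\relentr_1(2)$, $\relentr_1(d-1)=\relentr_1(2)$, $\relentr_1(1)=\relentr_1(2)$, and $\relentr_1(1)=\relentr_1(2)$, depending on $(T_{i-1},T_i)$. In every case, combining with Lemma~\ref{lem:1equalsd-1} yields $\relentr_1(1)=\relentr_1(2)$ as required.

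The mild obstacle here, compared with Lemma~\ref{lem:2equalsd-2}, is precisely that $\omega$ does not fix position~$1$, so one cannot conclude the equality from Lemma~\ref{lem:matchinglocations} alone without knowing $T_{i-1}$ and $T_i$; the role of Lemma~\ref{lem:1equalsd-1} is to bridge between $\relentr_1(1)$ and $\relentr_1(d-1)$ in the two type-combinations where $\omega$ sends position~$1$ to position~$d-1$. Otherwise the proof is entirely parallel to Lemma~\ref{lem:2equalsd-2} and requires no further structural input.
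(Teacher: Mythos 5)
Your proof is correct and follows essentially the same route as the paper's: the same choice of $i$ and $x$ with $|\perm_{i-1}(1)|=|\perm_i(2)|=x$ (justified exactly as in the proof of Lemma~\ref{lem:2equalsd-2}), the same evaluation of Lemma~\ref{lem:matchinglocations} at $x$, and the same use of Lemma~\ref{lem:1equalsd-1} to bridge $\relentr_1(1)$ and $\relentr_1(d-1)$. The only (cosmetic) difference is that you treat all four type combinations uniformly via Lemma~\ref{lem:matchinglocations}, whereas the paper handles the case $T_{i-1}=T_i$ by re-invoking the internals of the second case of Lemma~\ref{lem:1equalsd-1}; your version is slightly cleaner but not a different argument.
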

\begin{proof}
If $d = 3$, the lemma is equivalent to Lemma~\ref{lem:1equalsd-1}. Otherwise, choose $i$ and $x$ such that $|\perm_i(2)| = |\perm_{i-1}(1)| = x$ (such $i$ and $x$ exist, as observed in the proof of Lemma~\ref{lem:2equalsd-2}). Now, if $T_i = T_{i-1}$, the proof of the second case of Lemma~\ref{lem:1equalsd-1} tells us that $\relentr_1(1) = \relentr_1(2) = \relentr_1(d-1)$, and Lemma~\ref{lem:tworelativeentries} follows by combining this fact with Lemma~\ref{lem:2equalsd-2}. It remains to discuss the cases in which $T_i \neq T_{i-1}$.

If $T_{i-1} = 0$ and $T_i = 1$, by Lemma~\ref{lem:matchinglocations}, we have $\relentr_1(d-1) = \relentr_1(\omega(1)) = \relentr_1(\omega(\perm^{-1}_{i-1}(x))) = \relentr_1(\omega(\perm^{-1}_i(x))) = \relentr_1(\omega(2)) = \relentr_1(2)$. Combining this with Lemmas \ref{lem:1equalsd-1} and~\ref{lem:2equalsd-2} establishes $\relentr_1(j) = \relentr_1(j-1)$ for all $j \in \{2,3,\ldots,d-1\}$.

If $T_{i-1} = 1$ and $T_i = 0$, by Lemma~\ref{lem:matchinglocations}, we have $\relentr_1(1) = \relentr_1(\perm^{-1}_{i-1}(x)) = \relentr_1(\perm^{-1}_i(x)) = \relentr_1(2)$. Combining this with Lemmas \ref{lem:1equalsd-1} and~\ref{lem:2equalsd-2} establishes $\relentr_1(j) = \relentr_1(j-1)$ for all $j \in \{2,3,\ldots,d-1\}$.
\end{proof}

We can now use the similarity between $A'_2$ and the 2-curve that forms the beginning of any approximating curve $A'_k$, to prove the following:

\begin{lemma}\label{lem:sameonalllevels}
$\relentr_{k,1} = \relentr_{1,1}$ for all $k \geq 1$.
\end{lemma}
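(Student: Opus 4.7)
The plan is to use self-similarity to express the first $D^2$ vertices of $A'_{k+1}$ as an isometric image of $A'_2$, and then argue that relative coordinates transform in a controlled way. First I would establish, from the self-similar IFS structure applied recursively, the identity $\perm_{k,i} = \perm_{k-1,1} \compose \perm_{1,i}$ for $1 \leq i \leq D^{k-1}$, and hence by induction $\perm_{k,1} = \perm_{1,1}^k$. Geometrically this yields that the first $D^2$ vertices of $A'_{k+1}$ form an isometric copy of $A'_2$, obtained by applying the signed permutation $\perm_{k-1,1}$ to $A'_2$ and positioning the result so that its enclosing 2-cube starts at $4 v_{k-1,1}$.

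Next I would exploit two observations about the 2-cube $\{0,1,2,3\}^d$ equipped with the relative-coordinate function of Definition~\ref{def:relativecoordinates}: (i) because $v_{k-1,1} \in \mathbb{Z}^d$, the translation $4 v_{k-1,1}$ has every coordinate divisible by $4$, so it has no effect on relative coordinates; and (ii) the isometry $\perm_{k-1,1}$ acts on vertices of the 2-cube by permuting axes and, in sign-flipped axes, by replacing $x[j]$ with $3 - x[j]$. Both of these operations preserve the partition of $\{0,1,2,3\}$ into $\{0,3\}$ and $\{1,2\}$, so the relative coordinates of a vertex under $\perm_{k-1,1}$ are merely permuted by the absolute permutation $|\perm_{k-1,1}^{-1}|$. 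Applied to the entry vertex of $A'_2$, this yields
\[
\relentr_{k,1}(j) \;=\; \relentr_{1,1}\bigl(|\perm_{k-1,1}^{-1}(j)|\bigr).
\]

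Finally I would combine this with Lemma~\ref{lem:tworelativeentries} and Lemma~\ref{lem:opening}. By Lemma~\ref{lem:tworelativeentries} and the assumption $\relentr_{1,1}(d) = 0$ from Section~\ref{sec:entrytoexit}, $\relentr_{1,1}$ takes some common value $c \in \{0,1\}$ on $\{1,\ldots,d-1\}$ and vanishes at $d$. If $c = 0$ then $\relentr_{1,1}$ is identically zero, so the displayed identity gives $\relentr_{k,1} \equiv 0 = \relentr_{1,1}$. If $c = 1$ then $\relentr_{1,1}(1) = 1$, and Lemma~\ref{lem:opening} yields $|\perm_{1,1}(d)| = d$; hence $|\perm_{1,1}|$ fixes axis $d$ and restricts to a permutation of $\{1,\ldots,d-1\}$, and both properties are inherited by $|\perm_{k-1,1}^{-1}| = |\perm_{1,1}|^{-(k-1)}$. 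Combined with the constant-on-$\{1,\ldots,d-1\}$ structure of $\relentr_{1,1}$, this forces $\relentr_{k,1}(j) = \relentr_{1,1}(j)$ for every $j$.

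I expect the main obstacle to be the first step: deducing the recursive identity $\perm_{k,i} = \perm_{k-1,1} \compose \perm_{1,i}$ rigorously from the paper's abstract definition of self-similarity. This calls for an induction that carefully matches the self-similar decomposition of $A'_{k+1}$ into $D$ copies of $A'_k$ with the inflation-based construction, and uses Lemma~\ref{lem:asymmetric} to guarantee that the signed permutation producing any given vertex ordering of a child curve is unique. Once this identity is in place, the remaining arguments are the short computation above on the hyperoctahedral action on the 2-cube, combined with the structural lemmas already at our disposal.
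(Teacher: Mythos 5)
Your overall strategy coincides with the paper's: use self-similarity to write $\relentr_{k,1}$ as $\relentr_{1,1}$ composed with an unsigned permutation, then use the rigidity supplied by Lemma~\ref{lem:tworelativeentries} to conclude that the composition changes nothing. Your computation on relative coordinates is sound (translation by multiples of $4$ and the reflection $x \mapsto 3-x$ both preserve the partition of $\{0,1,2,3\}$ into $\{0,3\}$ and $\{1,2\}$), and so is the endgame in the case $c = \relentr_{1,1}(1) = 1$, where Lemma~\ref{lem:opening} indeed gives $|\perm_{1,1}(d)| = d$.

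However, your first step contains a genuine error. The identity $\perm_{k,1} = \perm_{1,1}^k$, and the claim that the first $D^2$ vertices of $A'_{k+1}$ are the image of $A'_2$ under the signed permutation $\perm_{k-1,1}$ alone, fail in one of the two cases. By Observation~\ref{obs:(not)reverse}, the 2-curve replacing $v_{1,1}$ is a non-reverse isometry of $A_2$ only when $T_{1,1} = 0$; when $T_{1,1} = 1$ it is a non-reverse isometry of $\reverse{A_2}$, and Lemma~\ref{lem:asymmetric} --- the very lemma you invoke for uniqueness --- shows these two possibilities are mutually exclusive, so in the latter case no signed permutation maps $A'_2$ onto the initial segment. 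By Lemma~\ref{lem:opening}, $T_{1,1} = 1 - \relentr_{1,1}(1)$, so the reversed case occurs exactly when $c = 0$, i.e., for the corner-entry curve. There the entry of the initial copy corresponds to the \emph{exit} of $A'_2$, and the correct relation is $\relentr_{k,1} = \relexit_{1,D} \compose (\cdots) = \relentr_{1,1} \compose \omega \compose (\cdots)$ via Lemma~\ref{lem:omega}, not your displayed identity. You are rescued only because in that same case $\relentr_{1,1} \equiv 0$ is invariant under every permutation, so the conclusion survives the broken derivation. The paper avoids the issue entirely: it records only that $\relentr_{k,1}$ is \emph{some} permutation of the values of $\relentr_{1,1}$ (with or without the extra $\omega$), and then pins the function down using the constraint $\relentr_{k,1}(d) = 0$ forced by the entry edge $\langle d\rangle$. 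To repair your argument, either split on $T_{1,1}$ and handle the reversed branch with Lemma~\ref{lem:omega}, or weaken your key identity to the paper's ``is a permutation of'' form.
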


\begin{proof}
For $k=1$ the Lemma is trivial. Now consider the case $k \geq 2$. By self-similarity, $A'_k$ starts with a non-reverse isometry of either $A'_2$ or $\reverse{A'_2}$. In the first case we have $\relentr_{k,1} = \relentr_{1,1} \compose |\perm^{-1}_{k,1}|$; in the second case we have $\relentr_{k,1} = \relexit_{1,D} = \relentr_{1,1} \compose \omega \compose |\perm^{-1}_{k,1}|$. In either case, $\relentr_{k,1}(1,\ldots,d)$ is a permutation of $\relentr_{1,1}(1,\ldots,d)$, which, by Lemma~\ref{lem:tworelativeentries}, can have only two values: it is either all zeros, or $\relentr_{1,1}(d) = 0$ and otherwise it is all ones. In the first case, any permutation is without effect so $\relentr_{k,1} = \relentr_{1,1}$ for any $k$. In the second case, we must have $\relentr_{k,1}(d) = 0$ for any $k$ because $e_{k,0} = d$, and it follows that $\relentr_{k,1}(1,\ldots,d-1)$ is all ones for any $k$.
\end{proof}

\begin{lemma}\label{lem:twoentries}
The combination of Lemmas \ref{lem:tworelativeentries} and~\ref{lem:sameonalllevels} is equivalent to:
\begin{itemize}
\item $\flipped(\sigma^{-1}_{k,1}(j)) = 0$ for all $k$ and all $j$; or
\item $\flipped(\sigma^{-1}_{k,1}(j)) = 0$ if $k$ is even or $j = d$, and $\flipped(\sigma^{-1}_{k,1}(j)) = 1$ if $k$ is odd and $j < d$.
\end{itemize}
\end{lemma}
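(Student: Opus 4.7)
The plan is to translate the two lemmas into a coordinate-wise linear recurrence over $\mathbb{F}_2$ for $\flipped(\sigma^{-1}_{k,1}(j))$ and then solve it. First I would pin down how $v_{k,1}$ depends on the earlier permutations. With $A'_0$ as fixed in Lemma~\ref{lem:selfsimilartypes} (so $v_{1,1}=0$) and the convention $\sigma_{0,1}=[1,\ldots,d]$, the first vertex of $A_k$ is the coordinatewise truncation of $2^k f(0)$; combining this with Observation~\ref{obs:startingpoint} yields
\[
 v_{k,1}[j] \equiv \flipped(\sigma^{-1}_{k-1,1}(j)) \pmod 2 \qquad (k\geq 1).
\]
Substituting into Observation~\ref{obs:relationabsandrel} converts the definition of relative entry into the two-term identity
\[
 \relentr_{k,1}(j) \equiv \flipped(\sigma^{-1}_{k,1}(j)) + \flipped(\sigma^{-1}_{k-1,1}(j)) \pmod 2,
\]
valid for every $k\geq 1$ (and consistent with $\relentr_{1,1}(j) = \flipped(\sigma^{-1}_{1,1}(j))$ in the base case, where the $\flipped(\sigma^{-1}_{0,1}(j))$ term is~$0$).

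Next I would feed in the content of the two lemmas. Lemma~\ref{lem:sameonalllevels} says the left-hand side above is a value $\relentr_{1,1}(j)$ independent of $k$. Lemma~\ref{lem:tworelativeentries}, together with $\relentr_{1,1}(d)=0$ from the first item of Lemma~\ref{lem:opening}, restricts $\relentr_{1,1}$ to exactly two profiles: the zero vector, or the vector that is $1$ on $\{1,\ldots,d-1\}$ and $0$ at $d$. For each profile, the identity above reduces, coordinate by coordinate, to a one-step $\mathbb{F}_2$-recurrence with initial value $\flipped(\sigma^{-1}_{0,1}(j))=0$.

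Finally I would solve the two recurrences by induction on $k$. The zero profile gives $\flipped(\sigma^{-1}_{k,1}(j)) = \flipped(\sigma^{-1}_{k-1,1}(j))$, so the value is identically $0$, producing the first bullet. The other profile keeps $\flipped(\sigma^{-1}_{k,1}(d))=0$ for all $k$, while at each $j<d$ the recurrence reads $\flipped(\sigma^{-1}_{k,1}(j)) = 1-\flipped(\sigma^{-1}_{k-1,1}(j))$, which alternates and hence equals the parity of $k$; this is exactly the second bullet. For the reverse implication, either bullet makes $\relentr_{k,1}$ a $k$-invariant function of the two admissible shapes, recovering both Lemmas~\ref{lem:tworelativeentries} and~\ref{lem:sameonalllevels}. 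The one step that deserves genuine care is the identity $v_{k,1}[j]\equiv\flipped(\sigma^{-1}_{k-1,1}(j)) \pmod 2$, which is really a bookkeeping statement about how the binary digits of $f(0)[j]$ in Observation~\ref{obs:startingpoint} line up with successive refinements of the hypercube containing $f(0)$; once that identification is clean, the rest is routine arithmetic over $\mathbb{F}_2$.
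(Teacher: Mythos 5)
Your proposal is correct and follows essentially the same route as the paper: both reduce the statement to the mod-2 recurrence $\flipped(\sigma^{-1}_{k,1}(j)) = \relentr_{k,1}(j) + \flipped(\sigma^{-1}_{k-1,1}(j))$ via Observation~\ref{obs:relationabsandrel}, then solve it by induction from the identity permutation at $k=0$ using the two admissible profiles of $\relentr_{1,1}$. The only cosmetic difference is that you justify $v_{k,1}[j] \equiv \flipped(\sigma^{-1}_{k-1,1}(j)) \pmod 2$ through the binary expansion of $f(0)$ in Observation~\ref{obs:startingpoint}, whereas the paper gets it more directly from the fact that the entry point of $C_{k-1,1}$ is by definition $v_{k,1}$, whose coordinates are $\absentr_{k-1,1}(j)$.
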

\begin{proof}
Recall Observation~\ref{obs:relationabsandrel}: $\relentr_{k,1}(j) = (\absentr_{k,1}(j) + v_{k,1}[j]) \bmod 2$,
where $\absentr_{k,1}(j) = \flipped(\sigma^{-1}_{k,1}(j)) \pmod 2$.
Therefore, $\flipped(\sigma^{-1}_{k,1}(j)) = \absentr_{k,1}(j) = \relentr_{k,1}(j) + v_{k,1}[j] \pmod 2$.
Since the entry point of $C_{k,1}$ is, by definition, $v_{k+1,1}$, we obtain
$\flipped(\sigma^{-1}_{k+1,1}(j)) = \relentr_{k+1,1}(j) + v_{k+1,1}[j] = \relentr_{k+1,1}(j) + \flipped(\sigma^{-1}_{k,1}(j)) \pmod 2$, and therefore
$\sgn(\sigma^{-1}_{k+1,1}(j)) = \sgn(\sigma^{-1}_{k,1}(j))$ if and only if $\relentr_{k+1,1}(j) = 0$.

The lemma now follows by straightforward induction from the base case $k = 0$ (in which case $\sigma^{-1}_{k,1}$ is the identity permutation) and the possible values of $\relentr_{k,1}$ as given by Lemmas \ref{lem:tworelativeentries} and~\ref{lem:sameonalllevels}.
\end{proof}

\begin{theorem}\label{thm:twoentries}
If $f$ is a self-similar, hyperorthogonal, well-folded, space-filling curve mapping $[0,1]$ to $[0,1]^d$, then, modulo reflection, reversal and rotation, the entry $f(0)$ is either $(0,\ldots,0,0)$ or $(\frac13,\ldots,\frac13,0)$.
\end{theorem}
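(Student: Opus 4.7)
The plan is to derive Theorem~\ref{thm:twoentries} directly from Lemma~\ref{lem:twoentries} and Observation~\ref{obs:startingpoint}, modulo an explanation of the ``modulo reflection, reversal and rotation'' clause. Recall that Observation~\ref{obs:startingpoint} expresses each coordinate of the starting point as the binary number whose digits are the values of $\flipped(\perm^{-1}_{k,1}(j))$ for $k = 0, 1, 2, \ldots$. Lemma~\ref{lem:twoentries} classifies the possible sequences of those values into exactly two patterns, so Theorem~\ref{thm:twoentries} reduces to plugging each pattern into the geometric series.

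First I would set up the normalisation. By Lemma~\ref{lem:selfsimilartypes}, after applying an isometry of the unit hypercube we may assume that $A'_0 = \langle d, -(d-1)\rangle$ and $\sigma_{0,1} = [1,\ldots,d]$, which is exactly the setup in which Lemma~\ref{lem:twoentries} was proved. This normalisation is the ``modulo reflection, reversal and rotation'' caveat, and any self-similar, hyperorthogonal, well-folded $f$ is brought to this form by a unique isometry of $[0,1]^d$.

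Next I would evaluate the two cases of Lemma~\ref{lem:twoentries} using Observation~\ref{obs:startingpoint}. In the first case, $\flipped(\perm^{-1}_{k,1}(j)) = 0$ for all $k$ and $j$, so
\[
f(0)[j] \;=\; \sum_{k=0}^{\infty} \frac{0}{2^{k+1}} \;=\; 0
\]
for every coordinate $j$, giving $f(0) = (0,\ldots,0)$. In the second case, $\flipped(\perm^{-1}_{k,1}(d)) = 0$ for all $k$, so $f(0)[d] = 0$, while for $j < d$ the digit is $0$ for even $k$ and $1$ for odd $k$, so
\[
f(0)[j] \;=\; \sum_{k \text{ odd}} \frac{1}{2^{k+1}} \;=\; \frac{1}{4} + \frac{1}{16} + \frac{1}{64} + \cdots \;=\; \frac{1/4}{1 - 1/4} \;=\; \frac{1}{3}.
\]
Hence $f(0) = (\tfrac13, \ldots, \tfrac13, 0)$, as claimed.

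There is no real obstacle here; the work has already been absorbed into Lemma~\ref{lem:twoentries}. The only subtlety I would double-check is that the ``normalisation'' step really exhausts the symmetry group we need to mod out by: the choice of which coordinate axis plays the role of $d$, the choice of orientation of the entry edge $\langle d\rangle$, the choice of the exit edge $\langle -(d-1)\rangle$, and the option to reverse $f$ are exactly the degrees of freedom absorbed in passing from an arbitrary self-similar curve to one satisfying $A'_0 = \langle d, -(d-1)\rangle$ and $\sigma_{0,1} = [1,\ldots,d]$, and Lemma~\ref{lem:selfsimilartypes} guarantees this normalisation is always available.
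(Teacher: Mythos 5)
Your proposal is correct and follows essentially the same route as the paper, whose proof of Theorem~\ref{thm:twoentries} is literally the one-line statement that it is a direct translation of Lemma~\ref{lem:twoentries} via Observation~\ref{obs:startingpoint}; you merely make the geometric-series evaluation and the normalisation via Lemma~\ref{lem:selfsimilartypes} explicit, and both are carried out correctly.
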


\begin{proof}
This is a direct translation of Lemma~\ref{lem:twoentries} using Observation~\ref{obs:startingpoint}.
\end{proof}

\subsection{Construction of self-similar curves}\label{sec:selfsimilarconstruction}

We will now show that curves with the entry points that may exist according to Theorem~\ref{thm:twoentries} do indeed exist for any $d \geq 3$:

\begin{theorem}\label{thm:selfsimilarexistence}
For any $d \geq 3$, there is a self-similar, hyperorthogonal, well-folded $d$-dimensional space-filling curve starting at $(0,\ldots,0,0)$ and there is a self-similar, hyperorthogonal, well-folded $d$-dimensional space-filling curve starting at $(\frac13,\ldots,\frac13,0)$.
\end{theorem}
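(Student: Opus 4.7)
The plan is to apply Theorem~\ref{thm:existence} with the two sign patterns identified in Lemma~\ref{lem:twoentries} and then prove that the resulting curves are self-similar. In both cases, set the entry direction $e_{0,0} = d$ and the exit direction $e_{0,1} = -(d-1)$. For the origin, take $\sgn(\perm^{-1}_{k,1}(j)) = +1$ for all $k \geq 0$ and all $j$; the precondition $\sgn(\perm^{-1}_{k,1}(e_{k,0})) = 1$ of Theorem~\ref{thm:existence} is trivially satisfied, and Observation~\ref{obs:startingpoint} yields starting point $(0,\ldots,0)$. For the point $(\tfrac13,\ldots,\tfrac13,0)$, take $\sgn(\perm^{-1}_{k,1}(d)) = +1$ for all $k$ and $\sgn(\perm^{-1}_{k,1}(j)) = (-1)^k$ for $j < d$; the precondition again holds, and Observation~\ref{obs:startingpoint} gives $x[j] = 0.0101\ldots_2 = \tfrac13$ for $j < d$ and $x[d] = 0$.

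It remains to show that each of the two space-filling curves $f$ so constructed is self-similar. I would proceed by induction on $k \geq 1$, proving that $A'_{k+1}$ is the concatenation of $D = 2^d$ isometric copies of $A'_k$, one copy per extended child curve $C'_{1,i}$ of $A'_1$. The base case $k = 1$ is immediate from Lemma~\ref{lem:types}, since each $C'_{1,i}$ is an isometric image of $G'(d) = A'_1$. For the inductive step, fix $i \in \{1,\ldots,D\}$ and let $S_i$ be the subcurve of $A'_{k+1}$ obtained by $k$ successive inflations starting from $v_{1,i}$, extended with its incident edges in $A'_{k+1}$. Because Theorem~\ref{thm:existence} constructs each extended approximating curve uniquely from (i)~its entry and exit edges and (ii)~the sign patterns of the first-vertex permutations at every level, it suffices to verify that, after undoing the isometry $\tau_i$ that maps $G'(d)$ to $C'_{1,i}$ (and applying reversal if $T_{1,i} = 1$), these data for $S_i$ match those for $A'_k$. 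The entry and exit edges match by Observation~\ref{obs:typeandorientation} together with the continuity of $A'_{k+1}$, so the whole induction reduces to a statement about the first-vertex sign patterns inside $S_i$.

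The main obstacle is establishing this sign-pattern invariance. The signs of the first-vertex permutation at each level of $S_i$ are determined by the propagation of relative entry coordinates through inflation, which is controlled by Lemma~\ref{lem:omega}. I would argue that the two patterns in Lemma~\ref{lem:twoentries} are precisely the patterns invariant under this propagation once we compose with an arbitrary isometry $\tau_i$: the all-positive pattern is trivially invariant under any signed permutation, while the alternating-by-parity pattern is invariant once we account for the action of $\omega$ (via Lemma~\ref{lem:omega}) when $T_{1,i} = 1$, and for the parity shift that occurs when descending one inflation level. This amounts to an extension of Lemmas~\ref{lem:sameonalllevels} and~\ref{lem:twoentries} from the first vertex of $f$ to the first vertex of each sub-curve $S_i$; once it is in place, the induction closes and $f$ is self-similar.
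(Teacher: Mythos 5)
Your overall strategy coincides with the paper's: invoke Theorem~\ref{thm:existence} with entry direction $\langle d\rangle$, exit direction $\langle -(d-1)\rangle$, and the two sign patterns of Lemma~\ref{lem:twoentries}; check via Observation~\ref{obs:startingpoint} that these give the two claimed starting points; and then argue that the uniqueness in Theorem~\ref{thm:existence}, applied to each subtree of the inflation, forces every inflated vertex of $A_1$ to carry an isometric (possibly reversed) copy of $A_k$. The reduction is sound, including the implicit point that the local-edge-distance sorting rule only looks inside the extended child curve containing a vertex and therefore behaves identically inside each subtree.

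However, the decisive step --- what you call establishing the ``sign-pattern invariance'' --- is asserted (``I would argue that\ldots'') rather than proven, and it is exactly the step the paper's proof actually carries out. What is needed is the following. Let $x = \relentr_{1,1}(1)$. Reading Lemmas~\ref{lem:tworelativeentries} and~\ref{lem:sameonalllevels} in the other direction, both sign patterns give $\relentr_{k,1}(j) = x$ for $j < d$ and $\relentr_{k,1}(d) = 0$ at every level $k$. Lemma~\ref{lem:omega} says that within the one-step inflation of any child curve the relative exit coordinates are the relative entry coordinates composed with $\omega$, and chaining this with Observation~\ref{obs:relationrelentryandexit} ($\relexit_{k,m} = \relentr_{k,m+1}$) through all children shows that the relative entry and exit coordinates of \emph{every} child curve at every level are permutations of $(x,\ldots,x,0)$. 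Such a vector is either constant (if $x = 0$) or determined by the position of its single zero, and continuity of the approximating curve forces that zero onto the axis of the incident connecting edge; hence $\relentr_{k,i}(j) = 0$ iff $j = |e_{k,i-1}|$ and $\relentr_{k,i}(j) = x$ otherwise, and similarly for the exits. Via Observation~\ref{obs:relationabsandrel} this pins down the signs of every $\perm_{k,i}$ from the incident edges alone, so the entry (or, for type-1 children, exit) of the $k$-fold inflation of $v_{1,i}$ depends on $\perm_{1,i}$ in exactly the way the entry of $A_k$ depends on the identity permutation, and the uniqueness of Theorem~\ref{thm:existence} then makes that inflation a translate of $\perm_{1,i}(A_k)$ or its reverse. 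Without this chain your appeal to ``invariance under composition with an arbitrary isometry $\tau_i$'' is only a plausible claim: for the $x = 1$ pattern it is not enough that an isometry permutes the pattern --- you must show that the permuted zero lands on the coordinate that continuity requires, which is the content of the argument above.
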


\begin{proof}
It suffices to show that the construction of Theorem~\ref{thm:existence}, with entry direction $\langle d\rangle$, exit direction $\langle -(d-1)\rangle$, and signs of $\perm^{-1}_{k,1}$ corresponding to either $(0,\ldots,0,0)$ or $(\frac13,\ldots,\frac13,0)$, results in a self-similar curve.

Let $x$ be $\relentr_{1,1}(1)$. Applying the translation of Lemma~\ref{lem:twoentries} and Theorem~\ref{thm:twoentries} in the other direction, we find that both starting points satisfy Lemma~\ref{lem:tworelativeentries} and Lemma~\ref{lem:sameonalllevels}, so the relative entry coordinates of the first child curve $C_{k,1}$ on any level $k$ are given by $\relentr_{k,1}(j) = x$ for $1 \leq j \leq d-1$, and $\relentr_{k,1}(d) = 0$.

By Lemma~\ref{lem:types}, all child curves of the constructed approximating curves $A'_0,A'_1,...$ are an image of an isometry of $G'(d)$. By Lemma~\ref{lem:omega}, the relative coordinates of the entries and exits of the one-step inflation of each such child curve are permutations of each other, and by Observation~\ref{obs:relationrelentryandexit}, a child curve's relative entry coordinates are the previous child curve's relative exit coordinates. Thus, the relative entry and exit coordinates of the one-step inflations of \emph{all} child curves are permutations of $(x,\ldots,x,0)$. Because Theorem~\ref{thm:existence} guarantees the continuity of the approximating curves, we have $\relentr_{k,i}(j) = 0$ if $|e_{k,i-1}| = j$ and $\relentr_{k,i}(j) = x$ if $|e_{k,i-1}| \neq j$; similarly, we have $\relexit_{k,i}(j) = 0$ if $|e_{k,i}| = j$ and $\relexit_{k,i}(j) = x$ if $|e_{k,i}| \neq j$.

Thus, any extended child curve's inflation, to any depth of recursion, consists of child curves of type 0 and 1, with the entry and exit points determined by the fact that all relative entry and exit coordinates are equal to $x$, except that we have $\relentr_i\big(|e_{i-1}|\big) = 0$ and $\relexit_i\big(|e_i|\big) = 0$. Thus, the entry point (or, in the case of reversal, the exit point) of the inflation of any vertex $v_{1,i}$ to a depth of $k$ levels is completely  determined by $\perm_{1,i}$ in the same way in which the entry point of $A_k$ is determined by $\perm_{0,1}$ (which is the identity permutation). As a result, the inflation of $v_{1,i}$ must be a translation of $\perm_{1,i}(A_k)$ or its reverse; hence the space-filling curve is self-similar.
\end{proof}

\noindent It turns out that there are actually very few such curves for $d = 3$ and $d = 4$:
\begin{observation}\label{obs:nopermchoice}
If $d = 3$ or $d = 4$, Lemma~\ref{lem:edgedistance} leaves no choice with respect to the last two elements, the third-last element, and the first element of the permutations $|\perm_{k,i}|$ in a self-similar curve.
\end{observation}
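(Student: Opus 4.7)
The plan is to combine Lemmas~\ref{lem:selfsimilartypes}, \ref{lem:types}, and~\ref{lem:edgedistalgo} with Theorem~\ref{thm:existence} to reduce the statement to a short edge-distance calculation inside the canonical extended curve $G'(d)$. Lemma~\ref{lem:selfsimilartypes} provides an approximation of the self-similar curve by extended curves $A'_0, A'_1, \ldots$; Lemma~\ref{lem:types} shows that each extended child curve $C'_{k,i}$ is an isometry of $G'(d)$; and Lemma~\ref{lem:edgedistalgo} (used in the construction of Theorem~\ref{thm:existence}) forces the elements of $|\perm_{k,i}|$ to be listed in decreasing order of local edge distance to $v_{k,i}$ within $C'_{k,i}$. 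Since edge distances are preserved by isometries, I can carry out the rest of the analysis once and for all inside a canonical copy of $G'(d)$.

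Next I would invoke Lemma~\ref{lem:graycodealternatingpattern} to observe that in $G'(d)$ the edges of axis~$1$ strictly alternate with the edges of other axes. In particular, at every interior vertex $v_i$ of $G'(d)$, axis~$1$ has local edge distance~$0$, and the axis of the non-axis-$1$ edge among $\{e_{i-1}, e_i\}$ also has local edge distance~$0$. This already settles the case $d = 3$: the permutation has depths $0, 0, 1$, the two axes at distance~$0$ fill the last two positions, and the unique remaining axis -- forced by Lemma~\ref{lem:edgedistance} to be the only axis of positive local edge distance -- is placed in the first (= third-last) position.

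For $d = 4$ the depths are $0, 0, 1, 2$, and one additional ingredient is needed. Expanding $G'(4)$ via Definitions~\ref{def:graycode} and~\ref{def:extendedGraycode}, the non-axis-$1$ edges read in order along $G'(4)$ form the nine-term sequence $\langle 4, 2, 3, 2, 4, 2, 3, 2, 3\rangle$, and a direct inspection shows that no two consecutive entries of this sequence are equal. Because the two non-axis-$1$ edges closest to any interior vertex $v_i$ (one on each side of $v_i$) are precisely two consecutive entries of this sequence, they have distinct axes; moreover, these are the only non-axis-$1$ edges at local edge distance at most~$1$. Therefore the three non-axis-$1$ axes receive local edge distances $0$, $1$, and $\ge 2$ respectively, and by Lemma~\ref{lem:edgedistance} the sorting prescribed by Lemma~\ref{lem:edgedistalgo} uniquely places the axis at distance~$\ge 2$ in the first position, the axis at distance~$1$ in the third-last position, and the two axes at distance~$0$ in the last two positions.

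The main obstacle is the combinatorial statement that the nine-term axis sequence above has no two consecutive equal entries. Without it, two non-axis-$1$ axes could share local edge distance~$1$ at some vertex, leaving a genuine choice in the sort of Lemma~\ref{lem:edgedistalgo}. This is easy to verify directly from Definition~\ref{def:graycode}, but it is the single nontrivial input on which the whole $d = 4$ argument rests.
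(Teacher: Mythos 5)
Your argument is correct and follows the same skeleton as the paper's proof: the two axes at edge distance $0$ from $v_{k,i}$ (distinct, by the alternation of Lemma~\ref{lem:graycodealternatingpattern}) are forced by Lemma~\ref{lem:edgedistance} into the two depth-$0$ slots, a unique third axis at edge distance $1$ is forced into the depth-$1$ slot, and for $d=4$ the remaining axis lands in the first position by elimination. The one step you justify differently is the existence and uniqueness of that third axis: the paper obtains it directly from Definition~\ref{def:hyperorthogonal} applied to the four consecutive edges of $A'_k$ surrounding $v_{k,i}$ (they cannot have only two distinct axes), whereas you reduce to the canonical $G'(d)$ via Lemma~\ref{lem:types} and check by inspection that consecutive non-axis-$1$ edges of $G'(4)$ have distinct axes. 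Both work; the paper's route is shorter, avoids the explicit enumeration, and yields the third-last element for every $d>3$ at once, while yours is more concrete but tied to $d=4$.

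Two imprecisions in your write-up are worth fixing, though neither breaks the argument. First, Lemma~\ref{lem:edgedistalgo} does not ``force'' the permutations of a given self-similar curve to be sorted by local edge distance---it is a sufficiency statement about one particular construction. The forcing must come, as it does in your final step, from the necessary condition $\permdp(\perm_i,a)\leq\edgedist(\cdot,v_i,a)$ of Lemma~\ref{lem:edgedistance}. Second, that lemma bounds depths by the edge distance in the full (extended) approximating curve, whereas you compute local edge distances inside $G'(d)$; for the first and last vertices of a child curve there are edges of $A'_k$ at distance $1$ lying outside the extended child curve. This is harmless because the global distance is at most the local one and your elimination argument only ever needs upper bounds on edge distances (an unexpected extra axis at small global distance would make a hyperorthogonal assignment impossible rather than ambiguous), but the distinction should be stated.
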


\begin{proof}
By Lemma~\ref{lem:selfsimilartypes}, we may assume that the space-filling curve is approximated by extended hyperorthogonal well-folded curves $A'_0,A'_1,\ldots$ with the entry and exit direction fixed at $\langle d\rangle$ and $\langle -(d-1)\rangle$, respectively.

The last two elements of any permutation $|\perm_{k,i}|$ must be the two different axes of the edges incident on $v_{k,i}$.

If $d = 3$, the third-last (and first) element must be the only remaining axis.

If $d > 3$, the third-last element must be the third axis that is within edge distance 1 from $v_{k,i}$. For $i = 1$, this third axis is $|e_{k,2}|$, which must differ from $|e_{k,0}|$ and $|e_{k,1}| = |e_{k,3}|$, otherwise $\langle e_{k,0}, e_{k,1}, e_{k,2}, e_{k,3}\rangle$ would constitute a sequence of four edges with only two different axes, contradicting Definition~\ref{def:hyperorthogonal}. By a symmetric argument, for $i = \twodk$, the third axis is $|e_{k,i-2}|$. For $1 < i < \twodk$, a third axis must also exist, otherwise the two edges preceding $v_{k,i}$ and the two edges following $v_{k,i}$ would constitute a sequence of four edges with only two axes. If $d = 4$, with the last two elements and the third-last element fixed, the first element must be the only remaining axis.
\end{proof}


\begin{figure}[t]
\centering
\hbox to \hsize{%
\includegraphics[scale=0.25]{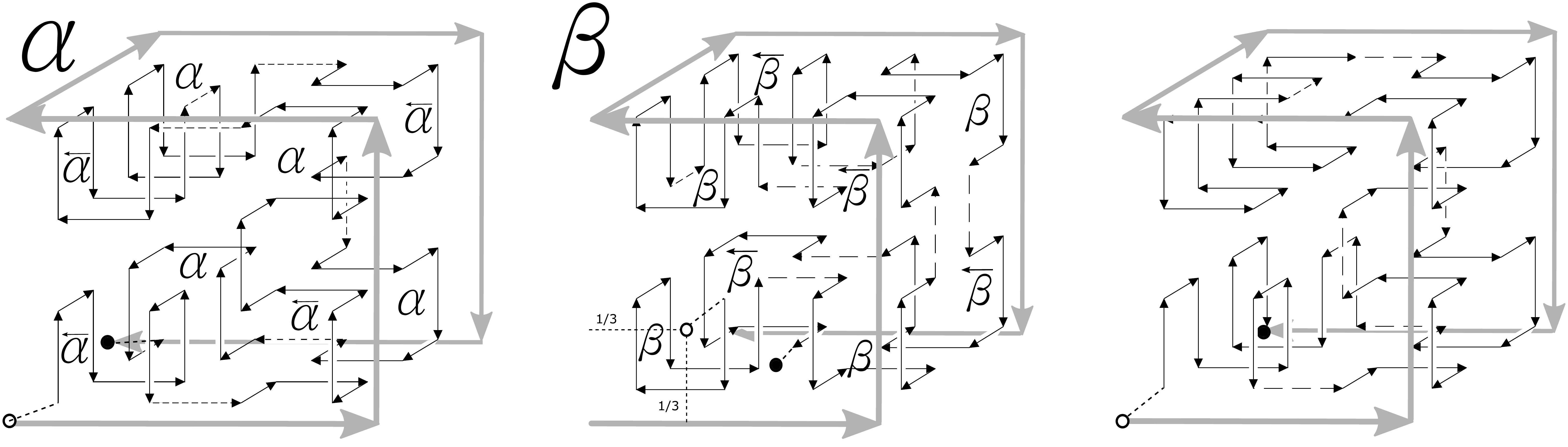}
}\caption{The three-dimensional, self-similar, hyperorthogonal, well-folded space-filling curve with starting points $(0,0,0)$ ($\alpha$, left) and $(\frac13,\frac13,0)$ ($\beta$, centre), and the three-dimensional curve by Butz and Moore (right). The bold grey curve shows $A_1$. The solid black curves depict the child curves of $A_1$, the dashed lines between them indicate how they are connected. The symbols next to the child curves indicate whether they are reversed, with arrow, or not, without arrow. For the Butz-Moore curve, no such indications are given, because the curve is symmetric and there is no need to distinguish between reflections and reversals. The white and black dots indicate the location of the entry $f(0)$ and the exit $f(1)$.}
\vspace{-.2cm}
\label{fig:3dcurves}
\end{figure}

\begin{corollary}\label{cor:onlytwoselfsimilar}
If $d = 3$ or $d = 4$, there are exactly two self-similar, hyperorthogonal, well-folded $d$-dimensional space-filling curves.
\end{corollary}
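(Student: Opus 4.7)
The plan is to combine three results already established: Theorem~\ref{thm:twoentries}, Theorem~\ref{thm:selfsimilarexistence}, and Observation~\ref{obs:nopermchoice}. By Theorem~\ref{thm:twoentries}, modulo reflection, rotation, and reversal, any self-similar hyperorthogonal well-folded space-filling curve has its entry $f(0)$ equal to either $(0,\ldots,0,0)$ or $(\frac13,\ldots,\frac13,0)$. Theorem~\ref{thm:selfsimilarexistence} asserts that at least one such curve exists for each of these two entries. It therefore suffices to show that, for $d \in \{3,4\}$, the curve is \emph{uniquely} determined by its entry point, and that the two resulting curves are genuinely inequivalent under the allowed symmetries.

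For uniqueness given the entry point, I would invoke Observation~\ref{obs:nopermchoice}. For $d=3$, all three positions of each permutation $|\perm_{k,i}|$ are forced: the last two by the two axes incident on $v_{k,i}$, and the first (third-last) by the remaining axis. For $d=4$, the last two, the third-last, and the first are likewise forced, which again exhausts all four positions. Hence $|\perm_{k,i}|$ is completely determined for every $k$ and every $i$. The signs of the permutations are then fixed by Theorem~\ref{thm:wellfolded}: its conditions propagate the signs of $\perm_{k,1}^{-1}$ through all the $\perm_{k,i}^{-1}$, and the signs of $\perm_{k,1}^{-1}$ are read off directly from the binary expansion of $f(0)$ via Observation~\ref{obs:startingpoint}. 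Thus the whole sequence $A'_0, A'_1, \ldots$, and with it the space-filling curve $f$, is completely determined by $f(0)$.

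It remains to check that the two curves are distinct modulo the symmetries in question. The point $(0,\ldots,0,0)$ is a corner of the unit cube, whereas $(\frac13,\ldots,\frac13,0)$ lies in the relative interior of a $(d{-}1)$-face; reflections and rotations of the unit cube preserve this combinatorial type, so they cannot map one entry to the other. For reversal, one observes from the recurrences for $\relentr_{k,i}$ and $\relexit_{k,i}$ (in particular Lemma~\ref{lem:omega} together with Lemma~\ref{lem:sameonalllevels}) that the exit of each of the two constructed curves has the same combinatorial type as its entry, so reversing a curve starting at a corner again yields a curve starting at a corner, and similarly for the face-interior case. Hence the two constructed curves lie in distinct orbits, and the count is exactly two.

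The main obstacle is the bookkeeping in the uniqueness step: verifying that Observation~\ref{obs:nopermchoice} really exhausts every position of $|\perm_{k,i}|$ once $d \leq 4$, and that no additional freedom creeps in through the order of the last two elements of each permutation (this last point is handled by Theorem~\ref{thm:existence}, which shows that the order is itself forced by the well-foldedness condition and the signs already chosen). The distinctness argument is a comparatively minor verification and follows directly from the preceding lemmas.
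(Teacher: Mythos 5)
Your proposal is correct and follows essentially the same route as the paper: restrict to the two possible entry points via Theorem~\ref{thm:twoentries}, get existence and uniqueness of the curve for each entry point from Theorems \ref{thm:selfsimilarexistence}/\ref{thm:existence}, and rule out any remaining freedom in the permutations via Observation~\ref{obs:nopermchoice}. Your explicit verification that the two curves are inequivalent under reflection, rotation and reversal (corner entry versus face-interior entry, preserved by reversal via Lemma~\ref{lem:omega}) is a small addition the paper's proof leaves implicit, but it is correct and does not change the overall argument.
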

\begin{proof}
For self-similar curves, by Lemma~\ref{lem:selfsimilartypes}, we may assume the entry and exit direction to be fixed at $\langle d\rangle$ and $\langle -(d-1)\rangle$, respectively. For the starting point, that is, the signs of $\perm^{-1}_{k,1}(j)$ for all $k$ and $j$, only two combinations are possible (Theorem~\ref{thm:twoentries}). Theorem~\ref{thm:existence} states that this leads to two unique hyperorthogonal, well-folded space-filling curves in which the elements of each $|\perm_{k,i}|$ are sorted by order of decreasing local edge distance to $v_{k,i}$ in $A'_k$. By Observation~\ref{obs:nopermchoice}, for $d = 3$ and $d = 4$, there is no other way to order the elements of each $|\perm_{k,i}|$.
\end{proof}

The two three-dimensional self-similar, hyperorthogonal, well-folded space-filling curves are illustrated in Figure~\ref{fig:3dcurves}, left ($\alpha$), and centre ($\beta$).

\section{Implementation in software}\label{sec:implementation}

\subsection{Typical operations}

In order to apply hyperorthogonal well-folded space-filling curves in practical applications, one needs to implement one or more operators based on these curves. Recall that the space-filling curves under consideration in this paper are functions $f: [0,1] \rightarrow [0,1]^d$, with approximating curves $A_0, A_1, \ldots$. Common operators for such curves include:\begin{itemize}
\item \emph{discrete index-to-point conversion}: given a resolution parameter $k$ and an index $i \in \{1,...,\twodk\}$, compute the coordinates of vertex $v_i$ of $A_k$;
\item \emph{continuous index-to-point conversion}: given a number $x \in [0,1]$, calculate $f(x)$;
\item \emph{discrete point-to-index conversion}: given a resolution parameter $k$ and the coordinates of a vertex $v$ of $A_k$, compute the index $i$ such that $v_i = v$;
\item \emph{continuous point-to-index conversion}: given a point $p \in [0,1]^d$, calculate $f^{-1}(p)$;
\item \emph{discrete comparison}: given a resolution parameter $k$ and the coordinates of two vertices $u, v$ of $A_k$, compute which of the two appears earlier along $A_k$;
\item \emph{continuous comparison}: given the coordinates of two points $p, q \in [0,1]^d$, compute which of the two appears earlier along the curve $f$, that is, decide whether $f^{-1}(p) < f^{-1}(q)$ or $f^{-1}(p) > f^{-1}(q)$.
\end{itemize}
There is a catch here: the inverse $f^{-1}$ of a space-filling curve $f$ is not immediately well-defined. For a given point $p$, there may be an approximating curve $A_k$ such that two (or more) hypercubes $H_i$ and $H_j$, corresponding to vertices $v_i$ and $v_j$ on $A_k$, each have $p$ on their boundary, where $j - i > 1$. That implies that there will be a value $x \in [(i-1)/\twodk, i/\twodk]$ and a different value $y \in [(j-1)/\twodk, j/\twodk]$ such that $f(x) = f(y) = p$. A common solution to obtain a unique value for $f^{-1}(p)$ is to ``err on the far side'': for any level $k$, assign each point $p$ to the vertex $v_i$ of $A_k$ whose corresponding hypercube $H_i$ contains the immediate vicinity of $p$ in the direction away from the origin. In other words, we define $f^{-1}(p)$ as the limit of the elements of $\{x \in [0,1] \mid f(x) = p'\}$ as $p'$ approaches $p$ in a straight line directed towards the origin. A drawback of this solution is that $f^{-1}(p)$ is undefined when one or more of the coordinates of $p$ are equal to~1. An alternative solution could be to define $f^{-1}(p)$ as the smallest value $x$ such that $f(x) = p$.

In the context of this publication, it would go too far to go into the details of the optimal implementation of each of the operators mentioned above, with various definitions of $f^{-1}$. Fortunately, the implementations of these operators share the same global structure: starting from the unit hypercube, the operator zooms in onto successively smaller hypercubes until the required output can be delivered. In Section~\ref{sec:implementcomparison} we sketch briefly how to implement the continuous comparison operator with the err-on-the-far-side definition of $f^{-1}$ for the $d$-dimensional self-similar hyperorthogonal well-folded space-filling curves that underlie Theorem~\ref{thm:selfsimilarexistence}. Further details are provided in Appendix~\ref{sec:pseudocode}. A good understanding of our implementation should enable the reader to implement any of the other operators.

\subsection{Implementation of a comparison operator}\label{sec:implementcomparison}

It is relatively easy to implement an efficient comparison operator that decides which of any two given points comes first along a $d$-dimensional, self-similar, hyperorthogonal, well-folded space-filling curve. For a fixed choice of space-filling curve $f$, a recursive implementation would take as input two points $p, q \in [0,1)^d$ that need to be compared, along with a signed permutation $\sigma$ that specifies how the given curve is placed in the unit cube, and the direction of the curve (forward or reversed). Let $S(p)$ and $S(q)$ be the subcubes of width $1/2$ that contain $p$ and $q$, respectively.

If $p = q$, one point does not precede the other. Otherwise, if $S(p) \neq S(q)$, one can decide immediately which point comes first, based on the relative order of the vertices that represent $S(p)$ and $S(q)$ along the approximating 1-curve $\sigma(G(d))$. Finally, if $S(p) = S(q)$, that is, $p$ and $q$ lie in the same subcube of width $1/2$, then their relative order can be decided by a recursive call with:\begin{itemize}
\item the points $p$ and $q$, scaled and translated according to the transformation that maps $S(p)$ to the unit cube;
\item the signed permutation and direction that specifies how the space-filling curve traverses~$S(p)$.
\end{itemize}
In fact, thanks to the structure of the approximating curve $\sigma(G(d))$, one can examine the coordinates of $p$ and $q$ one by one, from the coordinate in dimension $|\sigma(d)|$ down to the coordinate in dimension $|\sigma(1)|$: as soon as a coordinate is found in which the binary representations of the fractional parts of $p$ and $q$ differ in the first bit, one can decide which of the two points precedes the other. Only if $p$ and $q$ are equal in the first bits of all coordinates, the algorithm needs to go in recursion.

To be able to make the recursive call, the algorithm needs to determine the permutation to use in recursion, that is, the transformation that maps the complete space-filling curve $f$ to the section within $S(p)$, modulo scaling and translation. For the curves described by the constructions of Lemma~\ref{lem:edgedistalgo} and Theorem~\ref{thm:selfsimilarexistence} this is relatively straightforward. To determine the unsigned permutation to be used in recursion, we sort the $d$ coordinate axes by decreasing local edge distance $S(p)$. This sorted list of axes can be constructed on the fly in $\Theta(d)$ time while examining the $d$ coordinates of $p$ and $q$ to decide in which subcube they lie. By Lemma~\ref{lem:edgedistalgo}, the sorted list of axes gives us the (unsigned) permutation to use in recursion. The signs of the permutation to use in recursion now follow from applying the observations on relative entry points and permutation signs calculated in the previous section. For further details and pseudocode of a (non-recursive) implementation, see the appendix.

If the binary representations of the coordinates of $p$ and $q$ consist of $k$ bits per coordinate, and we can extract these bits in order of
decreasing significance in constant time per bit, then the complete comparison operator runs in $O(d \ast k)$ time.

\section{Evaluation}\label{sec:discussion}

\subsection{Comparing to the Butz-Moore curves}\label{sec:comparison}

The generalization of Hilbert's curve to $d$ dimensions by Butz~\cite{Butz}, as implemented by Moore~\cite{Moore}, is a self-similar well-folded curve with starting point in the origin, in which the orientations (and therefore, the signs of the inverse permutations) of the child curves of $A_1$ are the same as in our hyperorthogonal well-folded curves. Concretely, $|\perm_i(d)| = 1$ for $i \in \{1, 2^d\}$, and $|\perm_i(d)| = \max(|e_{i-1}|,|e_i|)$ for $1 < i < 2^d$. However, otherwise the permutations are different: all permutations in the Butz-Moore curves are rotations (in the permutation sense of the word), so $|\perm_i(j)| = |\perm_i(d)| + j \pmod d$. For a graphical description of the 3-dimensional curve, see Figure~\ref{fig:3dcurves} (right).

\begin{theorem}\label{thm:butz}
The $d$-dimensional Butz-Moore curve contains subcurves with box-to-curve ratio $\Omega(2^{d/2})$.
\end{theorem}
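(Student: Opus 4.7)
Plan:

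The approach is to exhibit an explicit subcurve of $A_2$ in the Butz-Moore curve whose bounding box is exponentially larger than the volume of the section of $f$ it represents. The key leverage is that Butz-Moore uses cyclic-rotation permutations, so $|\perm_{1,i}|$ and $|\perm_{1,i+1}|$ can differ by a cyclic shift of as many as $\lceil d/2\rceil$ positions; I target such a maximally-bad transition.

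To locate the transition, set $m=\lceil d/2\rceil$ and $i=2^{m+1}-1$. A direct inspection of $G(d)$ gives that $e_{1,i-1}$ has axis $2$, $e_{1,i}$ has axis $1$, and $e_{1,i+1}$ has axis $m+2$; hence the Butz-Moore prescription $|\perm_{1,i}(d)|=\max(|e_{1,i-1}|,|e_{1,i}|)$ yields $|\perm_{1,i}(d)|=2$ and $|\perm_{1,i+1}(d)|=m+2$, so the cyclic rotations $|\perm_{1,i}|$ and $|\perm_{1,i+1}|$ differ by an offset of exactly $m$ positions. By Lemma~\ref{lem:prefixspan}, the last $n_1=2^{m-1}$ edges of $C_{1,i}$ and the first $n_2=2^{\lfloor d/2\rfloor-1}$ edges of $C_{1,i+1}$ use axis sets equal to the first $m$ and first $\lfloor d/2\rfloor$ entries of $|\perm_{1,i}|$ and $|\perm_{1,i+1}|$; because of the offset $m$, these two sets are disjoint and together exhaust $\{1,\ldots,d\}$. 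Let $E$ be the subcurve consisting of these $n_1+n_2$ edges plus the connecting edge $e_{1,i}$.

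For the bounding-box estimate, observe that $v_{1,i}$ and $v_{1,i+1}$ agree in every coordinate other than $|e_{1,i}|$, so the child cubes $C_{1,i}$ and $C_{1,i+1}$ occupy the same grid slots in every non-connecting axis; the bounding box of $E$ therefore has extent at most one cell-width in each such axis, and by the axis coverage above the extent is exactly one. In the connecting axis $|e_{1,i}|$ the head of $E$ varies (since $|e_{1,i}|$ belongs to the head's axis set) while the tail is confined to the face of $C_{1,i}$ adjacent to $C_{1,i+1}$, contributing extent $2$ cell-widths. Thus the bounding box of $E$ contains at least $3\cdot 2^{d-1}$ cells at the $A_2$ resolution, while $E$ has $n_1+n_2+2=\Theta(2^{d/2})$ vertices; the corresponding section of $f$ has volume $\Theta(2^{d/2})$ cells in the same units, yielding $\BCR(E)=\Omega(3\cdot 2^{d-1}/2^{d/2})=\Omega(2^{d/2})$.

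The delicate step is the bounding-box analysis: one must verify, by propagating the sign rules of Theorem~\ref{thm:wellfolded} through the Butz-Moore construction, that the exit of $C_{1,i}$ really lies on the face of $C_{1,i}$ adjacent to $C_{1,i+1}$ and that $|e_{1,i}|$ genuinely appears among the first $\lfloor d/2\rfloor$ entries of $|\perm_{1,i+1}|$, so that the head of $E$ actually varies in the connecting axis rather than being pinned to one of its faces. Both facts follow from the specific offsets computed above, but writing out this book-keeping carefully is the main effort of the proof.
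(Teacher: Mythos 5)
Your proof is correct and follows essentially the same route as the paper's: locate a transition in $A_1$ where the Butz--Moore cyclic rotations $|\perm_i|$ and $|\perm_{i+1}|$ are offset by about $d/2$, take the last $\Theta(2^{d/2})$ edges of $C_i$ and the first $\Theta(2^{d/2})$ edges of $C_{i+1}$, use Lemma~\ref{lem:prefixspan} to show their axis sets together cover (essentially) all of $\{1,\ldots,d\}$, and conclude a bounding box of volume $\Theta(2^d)$ against a curve volume of $\Theta(2^{d/2})$. The only quibble is that your explicit index $i = 2^{\lceil d/2\rceil+1}-1$ falls outside the range of $A_1$'s edges when $d=3$ (the paper's choice, effectively $i = 2^{\lfloor d/2\rfloor+1}-1$, avoids this), but since the claim is asymptotic this is immaterial.
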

\begin{proof}
Assume $d \geq 3$. Then $G(d)$ contains a sequence $\langle 1, 2, -1, (2 + \lfloor d/2\rfloor), 1\rangle$ or a sequence $\langle 1, -2, -1, (2 + \lfloor d/2\rfloor), 1\rangle$. Hence, for the child curves of $A_1$, there is an $i$ such that $|\perm_i(d)| = 2$, $|e_i| = 1$, and $|\perm_{i+1}(d)| = 2 + \lfloor d/2\rfloor$. Now consider the last $2^{\lfloor d/2\rfloor - 1}$ edges of $C_i$ and the first $2^{\lfloor d/2\rfloor - 1}$ edges of $C_{i+1}$. By Lemma~\ref{lem:prefixspan}, each of these two sets of edges has $\lfloor d/2\rfloor$ different axes. As a result of the rotations $|\perm_i|$ and $|\perm_{i+1}|$, these sets of axes include $\{3,\ldots,2+\lfloor d/2\rfloor\}$ and $\{3+\lfloor d/2\rfloor,\ldots,d,1\}$, respectively, where the latter set reduces to $\{1\}$ if $d < 5$. Together these sets constitute at least the set $\{1,\ldots,d\}\setminus\{2\}$. Thus the curve through the last $2^{\lfloor d/2\rfloor - 1} + 1$ vertices of $C_i$ and the first $2^{\lfloor d/2\rfloor - 1} + 1$ vertices of $C_{i+1}$ has bounding box volume at least $2^{d-1}$, and hence the worst-case box-to-curve ratio is at least $2^{d-1} / (2^{d/2} + 2) = \Omega(2^{d/2})$.
\end{proof}

The worst-case box-to-curve ratio of the Butz-Moore curves is thus in sharp contrast with the worst-case box-to-curve ratio of our hyperorthogonal, well-folded curves, which have \BCR at most 4 for any $d$. For verification we also calculated the actual worst-case \BCR values for $d \in \{2,3,4,5,6\}$ with the software from Sasburg~\cite{Sasburg} (Table~\ref{tab:actualBCR}). Further investigations may be done into average \BCR values over curve sections of a given size, both for the hyperorthogonal and the Butz curves.

It should be noted, however, that \BCR may not be the only relevant measure of bounding-box quality. Haverkort and Van Walderveen~\cite{Haverkort2009} argued that, at least for $d = 2$, the size of the \emph{boundary} of a bounding box may be as important as its volume---although volume and boundary size are usually correlated.
Using Sasburg's software with a generalization of the worst-case bounding box perimeter ratio from Haverkort and Van Walderveen to higher dimensions, we found that by this measure, already for $d = 3$, the self-similar hyperorthogonal well-folded curve with starting point $(\frac13,\frac13,0)$ is better than the Butz curve.

\begin{table}\def\arraystretch{1.4}
\caption{Worst-case box-to-curve ratios for various curves in up to 6 dimensions.}\label{tab:actualBCR}
\begin{tabularx}{\textwidth}{l@{\ }|r@{\ \ }r@{\ \ }r@{\ \ }r@{\ \ }r@{\ \ }r@{}}
\hline
curve & $d = 2$ & $= 3$ & $= 4$ & $= 5$ & $= 6$ & $\geq 7$ \\
\hline
\textit{lower bound face-continuous} & \textit{2.00} & \textit{2.54} & \textit{3.15} & \textit{3.54} & \textit{3.76} & \textit{4--16/(2$^d$+3)}
\\
best claimed non-self-sim. & 2.22\rlap{$^{\mathrm{a}}$} & 2.89\rlap{$^{\mathrm{b}}$} &&&&
\\
self-sim.\ hyp.\ well-fld.\ $f(0) = (0,\ldots,0,0)$ & 2.40\rlap{$^{\mathrm{c}}$} & 3.11 & 3.53 & 3.76 & 3.88 & $\leq 4$
\\
self-sim.\ hyp.\ well-fld.\ $f(0) = (\frac13,\ldots,\frac13,0)$
& & 3.14 & 3.67 & 3.83 & 3.92 & $\leq 4$
\\
\textit{lower bound non-face-continuous}
& \textit{3.00} & \textit{3.50} & \textit{3.75} & \textit{3.87} & \textit{3.93} & \textit{4--4/2$^d$}
\\
Butz-Moore
& 2.40\rlap{$^{\mathrm{c}}$} & 3.11 & 4.74 & 7.08 & 10.65 & $\Omega(2^{d/2})$
\\
\hline
\multicolumn{7}{@{}l}{$^{\mathrm{a}}$\rule[0mm]{0mm}{5mm} $\beta\Omega$-curve~\cite{Wierum} analysed by H\&vW~\cite{Haverkort2009};\quad $^{\mathrm{b}}$ Iupiter~\cite{Haverkort3D};\quad $^{\mathrm{c}}$ Hilbert's curve~\cite{Hilbert}}
\\
\end{tabularx}\vspace{-0.5cm}
\end{table}

\subsection{Lower bounds}\label{sec:lowerbounds}

In this work we study space-filling curves that can be described by a series of approximating curves $A_0,A_1,\ldots,A_n$, where $A_k$ is a curve on the $k$-cube. Within this context, we restricted our search for curves with good worst-case \BCR first to face-continuous curves; then, more specifically, to well-folded curves; then to hyperorthogonal well-folded curves; and finally to self-similar, hyperorthogonal, well-folded curves. We found that if $d = 3$ or $d = 4$, there are only two self-similar hyperorthogonal well-folded space-filling curves. For $d = 5$ and up, there are many more, as Lemma~\ref{lem:edgedistance} then starts to leave room for swaps among the first elements of the permutations $\perm_{k,i}$. We will now address the question of how much room for further improvement there is within these restrictions or if some of these restrictions are dropped.

For $d=2$, Haverkort and Van Walderveen~\cite{Haverkort2009} report that the \BCR of any section of the well-folded, non-self-similar $\beta\Omega$-curve~\cite{Wierum} is 2.22 in the worst case, and for $d = 3$, Haverkort~\cite{Haverkort3D} claims a fairly complicated, non-self-similar, face-continuous curve with a worst-case \BCR of 2.89.
These two constructions, which do not easily generalize to higher dimensions, constitute improvements of less than 10\% with respect to the self-similar hyperorthogonal well-folded curves.

For larger values of $d$, no face-continuous curve can be much better than any hyperorthogonal well-folded curve, since the first is subject to a lower bound that quickly approaches the upper bound of the latter as $d$ grows.
The proof is based on the fact that any such curve must contain a sequence of at most $2^{d-2}+1$ edges that have all axes $\{1,\ldots,d\}$.

\begin{lemma}\label{lem:maxboxsection}
Let $X$ be a $k$-curve constructed by inflation of a single vertex, and let $S$ be a subcurve of $X$.
If $\volume{S}/\volume{X} \geq 2^{d-1} / (2^d - 1)$, then the bounding box of $S$ is the bounding box of $X$.
\end{lemma}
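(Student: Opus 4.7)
The plan is to prove the contrapositive by induction on $k$: assuming the bounding box of $S$ is strictly contained in the bounding box of $X$ -- equivalently, $S$ avoids some face $F$ of $X$'s bounding box -- show that $\volume{S}/\volume{X} < 2^{d-1}/(2^d-1)$.

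The base case $k=1$ is immediate. A 1-curve lives in the unit $d$-cube, so avoiding a face forces $S$ into the opposite half, which contains only $2^{d-1}$ of the $2^d$ vertices, giving ratio at most $1/2 < 2^{d-1}/(2^d-1)$.

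For the inductive step I view $X$ top-down as the concatenation of $2^d$ sub-$(k-1)$-curves $X_1,\ldots,X_{2^d}$, each itself a $(k-1)$-curve on a sub-$(k-1)$-cube and constructed by inflation. Because $S$ is a contiguous subcurve of $X$, it covers a consecutive range $X_{i^*},\ldots,X_{j^*}$ of sub-cubes; the middle sub-cubes $X_{i^*+1},\ldots,X_{j^*-1}$ lie entirely in $S$, while $S\cap X_{i^*}$ is a suffix of $X_{i^*}$ and $S\cap X_{j^*}$ is a prefix of $X_{j^*}$. Since $S$ avoids $F$, every middle sub-cube must be disjoint from $F$ and therefore lies on the \emph{high side} of $F$; at most $2^{d-1}$ of the $2^d$ sub-cubes satisfy this. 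For any boundary sub-cube on the \emph{low side} (i.e.\ touching $F$), the induction hypothesis applied to $S\cap X_i$ -- a subcurve of the $(k-1)$-curve $X_i$ whose bounding box does not include the image of $F$ -- yields $\volume{S\cap X_i} < (2^{d-1}/(2^d-1))\cdot\volume{X_i}$. Summing the middle and boundary contributions with $\volume{X_i}=\volume{X}/2^d$ bounds $\volume{S}/\volume{X}$.

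The main obstacle is the borderline configuration in which both boundary sub-cubes are low-side while the middle already contains all $2^{d-1}$ high-side sub-cubes consecutively; the naive summation in this case narrowly exceeds the threshold. I expect to close this case by exploiting extra structure that the inductive hypothesis in its plain form ignores: the suffix $S\cap X_{i^*}$ and the prefix $S\cap X_{j^*}$ are not arbitrary subcurves but must terminate (respectively, begin) on the specific face of $X_i$ adjacent to the neighbouring middle sub-cube, because face-continuity of the inflation dictates the direction of the connecting edge. Strengthening the inductive claim to a sharper bound for prefixes and suffixes whose entry/exit vertex is constrained to a prescribed face should be sufficient to close the induction.
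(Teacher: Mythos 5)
Your route is genuinely different from the paper's. The paper argues directly: it splits $S$ into two partial end-blocks $Y$ and $Z$ plus the full blocks between them, shows by a volume count that $Y$ and $Z$ each contain at least a fraction $2^{d-1}/(2^d-1)$ of their own block, applies the induction hypothesis to conclude that $Y$ and $Z$ have the full bounding boxes of their blocks, and finishes by noting that $2^{d-1}+1$ consecutive block centres cannot fit in a half-cube. You instead take the contrapositive and average over high-side and low-side blocks. The comparison is moot, however, because your argument is not complete: everything hinges on the borderline case you flag at the end, and the sharpening you hope for there is false. In that configuration you would need the two constrained end-pieces to \emph{jointly} occupy less than a $2^{d-1}/(2^d-1)$ fraction of a single block, i.e.\ each roughly half of what the plain inductive bound allows; but a suffix of a 1-curve can have all of its last $2^{d-1}$ vertices in the half-cube away from $F$ while still exiting on the prescribed face adjacent to the next block, so the constraint you invoke buys nothing.

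Worse, the borderline case is not merely a failure of your summation: the statement itself fails there, so no strengthening of the inductive hypothesis can rescue it. Take $d=2$, $k=2$, parent curve $G(2)$ visiting the quadrants lower-left, lower-right, upper-right, upper-left, and child curves chosen to give the Hamiltonian path $(0,1),(0,0),(1,0),(1,1),(2,1),(2,0),(3,0),(3,1),(3,2),(3,3),(2,3),(2,2),(1,2),(1,3),(0,3),(0,2)$ on $\{0,\ldots,3\}^2$; this is a valid (even well-folded) $2$-curve constructed by inflation. The contiguous subcurve $S$ running from $(1,0)$ to $(1,3)$ has $\volume{S}/\volume{X}=12/16=3/4\geq 2/3=2^{d-1}/(2^d-1)$, yet it avoids the column $x=0$, so its bounding box is not that of $X$. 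This is exactly your borderline configuration (both high-side quadrants consecutive in the middle, low-side quadrants at both ends, each contributing half of its vertices). Note that the same example defeats the paper's own proof, which tacitly assumes that $S$ spans exactly $2^{d-1}+1$ blocks when it sets $z-y=2^{d-1}$; the lemma is only ever applied to prefixes and suffixes of inflated curves (in Theorems~\ref{thm:lowerbound} and~\ref{thm:diagonallowerbound}), and under that restriction there is only one partial end-block, both the paper's volume count and a variant of your averaging go through, and the conclusion holds. If you add that hypothesis, your induction (on prefixes and suffixes) closes without any borderline case.
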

\begin{proof}
The proof goes by induction on increasing values of $k$.

For $k = 0$, we have $\volume{X} = 1$ and $S$ only satisfies $\volume{S}/\volume{X} \geq 2^{d-1} / (2^d - 1)$ if $S = X$, in which case the bounding box of $S$ is indeed the bounding box of $X$.

Now suppose the lemma holds for $(k-1)$-curves, and consider a series of curves $X_0,X_1,\ldots,X_k$ where $X_0$ is a single vertex and each curve $X_i$ ($i > 0$) is constructed by inflating $X_{i-1}$. Let $v_1,\ldots,v_D$ be the vertices of $X_1$, and let $R_i$ be the $(k-1)$-curve within $X_k$ that results from inflating $v_i$.

Let $S$ be a subcurve of $X_k$ with $\volume{S}/\volume{X_k} \geq 2^{d-1} / (2^d - 1)$. Since $2^{d-1} / (2^d - 1) > 1/2$, the curve $S$ consists of, at least, a subcurve $Y$ of a curve $R_y$, a subcurve $Z$ of a curve $R_z$, and the complete curves $R_i$ for $y < i < z$, where $z - y = 2^{d-1}$.

We define $\volume{R} = 2^{d \ast(k-1)}$; note that $\volume{R_i} = \volume{R}$, regardless of $i$. We have $\volume{Y} + \volume{Z} =
\volume{S} - \sum_{i=y+1}^{z-1} \volume{R_i} \geq 2^{d-1} / (2^d - 1) \ast2^d \ast\volume{R} - (z-y-1) \ast\volume{R} = \left(2^d \ast 2^{d-1} / (2^d - 1) - (2^{d-1} - 1)\right) \ast  \volume{R} = \left(1 + 2^{d-1} / (2^d - 1)\right) \ast \volume{R}$.
Hence, since $\volume{Z} \leq \volume{R}$, we have $\volume{Y} \geq 2^{d-1} / (2^d - 1) \ast \volume{R}$ and thus, $\volume{Y} /
\volume{R_y} \geq 2^{d-1} / (2^d - 1)$. By a symmetric argument, $\volume{Z} / \volume{R_z} \geq 2^{d-1} / (2^d - 1)$.

Therefore the bounding box
of $S$ must contain at least the complete bounding boxes of the $(k-1)$-curves $R_i$ for $y \leq i \leq z$. Since $z - y = 2^{d-1}$, the vertices $v_y,\ldots,v_z$ cannot all lie within a $(d-1)$-dimensional 1-cube, so their bounding box must be the full unit cube, and the bounding box of $R_y,\ldots,R_z$ must be the full bounding box of $X$.
\end{proof}

\begin{theorem}\label{thm:lowerbound}
If $f$ is a space-filling curve approximated by a series of curves $A_0,\ldots,A_k$ within the framework of Section~\ref{sec:notation}, then $f$ has a section with BCR at least $4 - 16/(2^d + 3)$.
\end{theorem}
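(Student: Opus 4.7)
The plan is to combine Lemma~\ref{lem:maxboxsection} with the combinatorial observation, flagged in the text just before the theorem, that any face-continuous curve in the framework of Section~\ref{sec:notation} must contain a window of at most $2^{d-2}+1$ consecutive edges whose axes together exhaust $\{1,\ldots,d\}$. I would establish this observation by induction on $d$: the base cases $d\leq 2$ are routine, and for $d\geq 3$ I would locate an axis-$d$ edge $e^{*}$ in the relevant approximating curve (one must exist, since every axis has to be used in order to visit both halves of the $d$-cube) and look at the sub-path on one side of $e^{*}$, which lies entirely in a $(d-1)$-dimensional face. It then suffices to find a string of $2^{d-2}$ edges of that sub-path adjacent to $e^{*}$ whose axes cover $\{1,\ldots,d-1\}$, and this follows from the same non-fitting argument used in the proof of Lemma~\ref{lem:maxboxsection}: any $2^{d-2}+1$ consecutive vertices of a Hamiltonian path in a $(d-1)$-cube cannot fit in a $(d-2)$-subcube, so the $2^{d-2}$ connecting edges between them cover all of the axes $\{1,\ldots,d-1\}$.

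Given such a window, I would construct the worst-case section $S$ as follows. Let the window connect $m = 2^{d-2}+2$ consecutive vertices of the approximating curve, corresponding to hypercubes $H_1,\ldots,H_m$ of $k$-cube volume $V_k = 2^{dk}$ sitting inside a common enclosing $(k+1)$-cube $Y$. Because the $m-1$ connecting edges use all $d$ axes, the cells of $H_1,\ldots,H_m$ attain both coordinate values in every dimension, so the bounding box of $H_1\cup\cdots\cup H_m$ is exactly the whole $Y$, of volume $2^d V_k$. Take $S$ to be the last $\lceil 2^{d-1} V_k/(2^d-1)\rceil$ vertices of the subcurve inside $H_1$, followed by the full subcurves inside $H_2,\ldots,H_{m-1}$, followed by the first $\lceil 2^{d-1} V_k/(2^d-1)\rceil$ vertices of the subcurve inside $H_m$. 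By Lemma~\ref{lem:maxboxsection} applied to the trimmed end pieces, the bounding boxes of $S\cap H_1$ and $S\cap H_m$ are already the full $H_1$ and $H_m$, so the bounding box of $S$ is still $Y$.

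The counting is then mechanical. Using $m-2 = 2^{d-2}$,
\[ |S| \leq \frac{2\cdot 2^{d-1} V_k}{2^d-1} + 2^{d-2} V_k = \frac{V_k\bigl[2^d + 2^{d-2}(2^d-1)\bigr]}{2^d-1} = \frac{V_k\cdot 2^{d-2}(2^d+3)}{2^d-1}, \]
and therefore
\[ \BCR(S) \geq \frac{2^d V_k}{|S|} \geq \frac{4(2^d-1)}{2^d+3} = 4 - \frac{16}{2^d+3}, \]
as desired, with the slack from the ceilings vanishing as $k\to\infty$. The main obstacle is the inductive combinatorial step, and in particular the adversarial case in which axis $d$ appears many times so that every sub-path between consecutive axis-$d$ edges is shorter than $2^{d-2}$. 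Then no single sub-path satisfies the non-fitting hypothesis directly, and I would have to argue instead that two closely spaced axis-$d$ edges together bracket an even shorter window which, by a pigeonhole argument over how the remaining $d-1$ axes are distributed among the two $(d-1)$-face sub-paths on either side, still covers $\{1,\ldots,d\}$. Making this case rigorous is where I expect the bulk of the technical care to sit.
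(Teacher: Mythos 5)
Your second half---trimming the two end cells with Lemma~\ref{lem:maxboxsection} so that each of their bounding boxes is already full, and the arithmetic $2^d(2^d-1)/\bigl(2^{d-2}(2^d+3)\bigr) = 4 - 16/(2^d+3)$---is exactly the paper's argument, and your bookkeeping is correct. The genuine gap is in the combinatorial step: the existence of a window of at most $2^{d-2}+1$ consecutive edges of $A_1$ whose axes cover all of $\{1,\ldots,d\}$. Your route through a single axis-$d$ edge $e^{*}$ only works when the sub-path on one side of $e^{*}$ contains at least $2^{d-2}+1$ vertices, and nothing forces this: a Hamiltonian path on $\{0,1\}^d$ may use axis $d$ three or more times, in which case every maximal sub-path avoiding axis $d$ can have fewer than $2^{d-2}+1$ vertices, so the non-fitting argument never applies to any of them. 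The ``bracket two nearby axis-$d$ edges and pigeonhole over the two faces'' patch you sketch is not an argument as it stands---it is essentially the statement you are trying to prove, restricted to a harder configuration---and you acknowledge as much.

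The paper closes this step with a short, non-inductive argument that you should adopt. Let $z$ be the smallest index such that $|e_1|,\ldots,|e_z|$ already cover all $d$ axes, and let $y$ be the largest index $y<z$ such that $|e_y|,\ldots,|e_z|$ still cover all $d$ axes. Minimality of $z$ forces $|e_i|\neq|e_z|$ for all $i<z$, and maximality of $y$ forces $|e_i|\neq|e_y|$ for all $y<i\le z$ (otherwise dropping $e_y$ would still leave all axes covered). Hence the vertices $v_{y+1},\ldots,v_z$ all agree in coordinates $|e_y|$ and $|e_z|$ and therefore lie in a $(d-2)$-dimensional face of the 1-cube, which has only $2^{d-2}$ vertices; so $z-y\le 2^{d-2}$, and $e_y,\ldots,e_z$ is the desired window of at most $2^{d-2}+1$ edges covering all axes. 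With this substitution for your inductive claim, the rest of your proof goes through verbatim.
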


\begin{proof}
Consider the approximating curve $A_1$ with vertices $v_1,\ldots,v_D$ and edges $e_1,\ldots,e_{D-1}$.
Let $z$ be the smallest $z$ such that $\bigcup_{i=1}^z |e_i| = \{1,\ldots,d\}$, and let $y$ be the largest $y < z$ such that $\bigcup_{i=y}^z |e_i| = \{1,\ldots,d\}$. By our choice of $z$, we have $|e_i| \neq |e_z|$ for all $i < z$, and by our choice of $y$, we have $|e_i| \neq |e_y|$ for all $y < i < z$. Hence, all vertices $v_i$ for $y < i \leq z$ must have the same coordinates with respect to dimensions $|e_y|$ and $|e_z|$, and therefore lie within a $(d-2)$-dimensional hypercube of volume $2^{d-2}$, so $z - y \leq 2^{d-2}$. Note that the bounding box of $v_y,\ldots,v_{z+1}$ has volume $2^d$.

For a given $k$, let $R_i$ be the $(k-1)$-curve within $A_k$ that results from inflating $v_{1,i}$. Let $S$ be the subcurve of $A_k$ that starts with the last $\lceil 2^{d(k-1)} \ast 2^{d-1} / (2^d - 1)\rceil$ vertices of $R_y$ and ends with the first $\lceil 2^{d(k-1)} \ast 2^{d-1} / (2^d - 1)\rceil$ vertices of $R_{z+1}$.
We have $\volume{S} < 2^{d(k-1)} \ast (2^{d-2} + 2 \ast 2^{d-1} / (2^d - 1)) + 2$ (the $+2$ results from rounding up). By Lemma~\ref{lem:maxboxsection}, the bounding box of $S$ is the bounding box of the curves $R_y,\ldots,R_{z+1}$, which has volume $2^d \ast 2^{d(k-1)}$.
Hence, the box-to-curve ratio of the section of $f$ corresponding to $S$ is at least
$2^d / (2^{d-2} + 2^d / (2^d - 1) + 2^{1-d(k-1)})$. The limit for $k \to \infty$ is $4 - 16 / (2^d + 3)$.
\end{proof}
For the specific case of $d = 2$, Haverkort and Van Walderveen~\cite{Haverkort2009} prove a stronger lower bound of~2.

Now suppose we drop the restriction to face-continuous curves. More precisely, suppose we have a space-filling curve approximated by a sequence of curves on the grid $A_0,A_1,\ldots$, where we allow our curves on the grid to have diagonal edges, that is, we allow any edge $(v,w)$ such that $w \neq v$ and $|w[j] - v[j]| \leq 1$ for all $j \in \{1,\ldots,d\}$. In that case, the lower bound becomes even worse:

\begin{theorem}\label{thm:diagonallowerbound}
If there is a $k$ and $i$ such that $v_{k,i}$ and $v_{k,i+1}$ differ in at least two coordinates (in other words: if there is a diagonal edge), then $f$ has a section with BCR at least $4 - 4/2^d$.
\end{theorem}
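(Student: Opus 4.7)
The plan is to adapt the proof of Theorem~\ref{thm:lowerbound}, replacing the short face-continuous ``axis-spanning'' subsequence used there by the single diagonal edge we are given. Where that earlier proof needed up to $2^{d-2}+2$ consecutive vertices to make the surrounding bounding box span all $d$ dimensions, here the pair $v_{k,i}, v_{k,i+1}$ alone already produces an axis-aligned bounding box of volume $2^m \cdot 2^{-kd}$ (width $2/2^k$ in each of the $m \geq 2$ dimensions in which they differ, and width $1/2^k$ elsewhere). Using only two vertices, instead of order $2^{d-2}$ of them, is precisely what shrinks the denominator in the BCR ratio and so yields the improved lower bound.

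Concretely, I would fix the given diagonal edge $e_{k,i}$ and, for a parameter $\ell$ to be sent to infinity, consider the $\ell$-subcurves $R$ and $R'$ of $A_{k+\ell}$ that result from inflating $v_{k,i}$ and $v_{k,i+1}$, respectively; each contains $2^{d\ell}$ vertices and fills the corresponding hypercube $H_{k,i}$ or $H_{k,i+1}$. Setting $\alpha = 2^{d-1}/(2^d-1)$, I would take $Y$ to be the last $\lceil \alpha \cdot 2^{d\ell}\rceil$ vertices of $R$ and $Z$ the first $\lceil \alpha \cdot 2^{d\ell}\rceil$ vertices of $R'$. Since both $R$ and $R'$ are $\ell$-fold inflations of a single vertex, Lemma~\ref{lem:maxboxsection} applies to each of them and forces the bounding box of $Y$ to equal that of $R$, namely the whole of $H_{k,i}$, and likewise for $Z$ and $H_{k,i+1}$.

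I would then let $S$ be the section of $f$ corresponding to $Y$ followed immediately by $Z$ in $A_{k+\ell}$. This is a single contiguous subcurve, since $R$ is immediately followed by $R'$. Its bounding box coincides with the bounding box of $H_{k,i}\cup H_{k,i+1}$, of volume $2^m \cdot 2^{-kd}$, while its curve volume is $2\lceil \alpha \cdot 2^{d\ell}\rceil \cdot 2^{-d(k+\ell)}$. The resulting ratio, $2^m\cdot 2^{d\ell}/\bigl(2\lceil \alpha \cdot 2^{d\ell}\rceil\bigr)$, tends to $2^m(2^d-1)/2^d$ as $\ell\to\infty$, which for $m \geq 2$ is at least $4(2^d-1)/2^d = 4 - 4/2^d$, the minimum being attained at $m = 2$. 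This produces sections with BCR approaching $4 - 4/2^d$ arbitrarily closely, which is the claimed bound in exactly the same limiting-supremum sense used in Theorem~\ref{thm:lowerbound}.

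The main point that requires care is the applicability of Lemma~\ref{lem:maxboxsection} when the approximating curves may contain diagonal edges, since its original setting was the face-continuous one. Its proof, however, only uses the inflation decomposition into subcurves $R_i$ and vertex counting: the decisive step, that $z - y \geq 2^{d-1}$ forces the vertices $v_y,\ldots,v_z$ of $X_1$ to span the full $d$-dimensional unit cube (so that their bounding box is the full bounding box of $X$), is a purely combinatorial pigeonhole argument that never uses the assumption that edges have unit length. Hence the lemma extends verbatim to the diagonal-allowed setting, and the whole plan goes through.
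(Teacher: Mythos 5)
Your proposal is correct and follows essentially the same route as the paper's own proof: both take the two subcurves obtained by inflating $v_{k,i}$ and $v_{k,i+1}$ to depth $\ell$, apply Lemma~\ref{lem:maxboxsection} to a final and an initial fraction $2^{d-1}/(2^d-1)$ of each to force their bounding boxes to be the full hypercubes, and let $\ell\to\infty$ to obtain the ratio $4-4/2^d$ from the factor $2^m\geq 4$ contributed by the $m\geq 2$ differing coordinates. Your added remark that Lemma~\ref{lem:maxboxsection} is a purely volume-counting argument unaffected by diagonal edges is a sensible point of care that the paper leaves implicit.
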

\begin{proof}
Consider the $m$-curves $X$ and $Y$ that replace $v_{k,i}$ and $v_{k,i+1}$ in $A_{k+m}$. Let $S$ be the subcurve of $X$ with volume $\lceil\volume{X} \ast 2^{d-1} / (2^d - 1)\rceil$, ending at the exit point of $X$, and let $T$ be the subcurve of $Y$ with volume $\lceil\volume{Y} \ast 2^{d-1} / (2^d - 1)\rceil$, starting at the entry point of $Y$. By Lemma~\ref{lem:maxboxsection}, the concatenation of $S$, $\langle e_{k,i}\rangle$, and $T$ now has bounding box volume at least $4 \ast 2^{d \ast m}$, while $\volume{S} + \volume{T} \leq 2^{d \ast m} \ast 2^d / (2^d - 1) + 2$. Hence, the box-to-curve ratio of the corresponding section of $f$ is at least $4 / (2^d / (2^d - 1) + 2^{1 - d \ast m})$. The limit for $m \to \infty$ is $4 - 4/2^d$.
\end{proof}

\subsection{Questions for further research}\label{sec:researchquestions}

Note that, as Table~\ref{tab:actualBCR} shows, at least for $d$ up to 6 the lower bound of Theorem~\ref{thm:diagonallowerbound} for curves with ``diagonal edges'' is greater than the worst-case \BCR of the best hyperorthogonal, well-folded curves, and for higher dimensions the difference between the lower bound and the upper bound is less than 1\%. Therefore, in terms of worst-case \BCR, little is to be expected from non-face-continuous curves based on inflation of $k$-cubes for increasing $k$.

The question remains whether there are hyperorthogonal curves that are not well-folded, and if so, whether such curves would also have good bounds on the box-to-curve ratio. In other words: is well-foldedness really required in Theorem~\ref{thm:bcr}? Regardless, Theorem~\ref{thm:lowerbound} shows that in any case, there is not much room for finding curves with a better worst-case \BCR within the framework of Section \ref{sec:notation}.

Can we find space-filling curves with a better worst-case BCR outside this framework? Peano's space-filling curve and its obvious generalization to higher dimensions are based on approximating curves $A_i$ on grids of $3^{d \ast i}$ vertices. For these curves in 2, 3, 4, 5, and 6 dimensions, Sasburg's software~\cite{Sasburg} reports a worst-case BCR of 2.00, 3.06, 3.64, 3.87, and 3.96 respectively. This may serve as evidence that, also for these curves, four is an asymptotic upper bound on the worst-case BCR, regardless of $d$. Note, however, that in higher dimensions, the BCR of these curves seems to be slightly worse than the BCR of our hyperorthogonal well-folded curves.

Departing from the framework of Section~\ref{sec:notation} even further: would it be possible to find space-filling curves with a better worst-case BCR that cannot be approximated by Hamiltonian paths on hypercubic grids? Or are such curves also subject to an asymptotic lower bound of 4?

One may also ask what lower bounds could be proven in more \emph{restricted} settings than that of Section 1.3. For example, Alber and Niedermeier~\cite{Alber} provide a framework for the description of generalizations of Hilbert curves that are self-similar and, in the terminology of Haverkort~\cite{Haverkort3D}, \emph{order-preserving}: $A_{k+1}$ is the concatenation of $2^d$ scaled, translated, rotated and/or reflected \emph{but not reversed} copies of $A_k$. From Lemma~\ref{lem:developmenteven} we know that any approximating curve $A_2$ of a self-similar hyperorthogonal well-folded space-filling curves contains child curves of both types (zero and one), that is, it contains both non-reverse and reverse isometries of $G'(d)$. By Lemma~\ref{lem:asymmetric}, these are really different: no non-reverse isometry of an inflation of $G'(d)$ can visit its vertices in the same order as $\reverse{G'(d)}$. So no self-similar hyperorthogonal well-folded space-filling curves exist without reversal, and thus we get:

\begin{corollary}\label{cor:albernogood}
No $d$-dimensional self-similar hyperorthogonal well-folded curve for $d > 2$ can be described within the framework of Alber and Niedermeier \cite{Alber}.
\end{corollary}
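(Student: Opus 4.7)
My plan is to exploit the classification of child-curve types from Section~\ref{sec:extended1curves} together with the asymmetry lemma (Lemma~\ref{lem:asymmetric}) to derive a contradiction with the absence of reversals in the Alber--Niedermeier framework.

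First I would recall what the Alber--Niedermeier framework demands: for a self-similar, order-preserving curve, every child curve in the inflation of $A_k$ to $A_{k+1}$ must be obtained from $A_k$ (equivalently, from the generator) by a non-reverse isometry, i.e.\ by composition of rotations, reflections and translations only. In the notation of Definition~\ref{def:types}, this would force every $T_{k,i} = 0$.

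Next I would show that any self-similar hyperorthogonal well-folded $d$-dimensional space-filling curve, with $d \geq 3$, must contain a child curve of type $1$ at level~$1$. By Lemma~\ref{lem:selfsimilartypes} we may assume the approximating curves are extended, with $A'_0 = \langle d, -(d-1)\rangle$ and entry and exit edges fixed as in Section~\ref{sec:entrytoexit}. Pick any even $i$ with $1 < i < 2^d$ in the range where Lemma~\ref{lem:developmenteven} applies (such $i$ exist because $2^d \geq 8$ for $d \geq 3$). Lemma~\ref{lem:developmenteven} tells us $T_{1,i} = 1 - T_{1,i+1}$, so exactly one of $T_{1,i}, T_{1,i+1}$ equals $1$. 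Hence at level $1$ already, at least one child curve is of type~$1$, meaning the isometry that produces it from $G'(d)$ involves reversal.

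The final step is to argue that this type-$1$ child curve cannot be re-expressed as a non-reverse isometry of the generator. This is exactly the content of Lemma~\ref{lem:asymmetric} (with $F'$ playing the role of a non-reverse inflation of $G'(d)$ and $R'$ a non-reverse inflation of $\reverse{G'(d)}$): no non-reverse isometry of such an inflation can visit its vertices in the same order as an inflation of the reversed generator. Therefore the required type-$1$ child curve cannot be realised without an honest reversal, and the curve falls outside the Alber--Niedermeier framework.

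The only subtle step I anticipate is making sure the self-similarity assumption is used correctly: Lemma~\ref{lem:developmenteven} is stated for the $A'_1 \to A'_2$ inflation, and one must check that, in the Alber--Niedermeier framework, a type-$1$ child curve at level $1$ indeed forces a reversed copy of $A_1$ at that position in $A_2$ (rather than just an abstract reversal of $G'(d)$). This follows from self-similarity: by Observation~\ref{obs:(not)reverse}, the sub-curve of $A_k$ replacing vertex $v_{1,i}$ must be a non-reverse isometric copy of $A_{k-1}$ precisely when $T_{1,i}=0$, and a reverse copy when $T_{1,i}=1$; so a forced $T_{1,i}=1$ is incompatible with the Alber--Niedermeier requirement that every child is non-reverse. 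Once this is spelled out, the corollary follows immediately.
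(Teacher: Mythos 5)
Your proposal is correct and follows essentially the same route as the paper: it invokes Lemma~\ref{lem:developmenteven} to show that $A_2$ necessarily contains child curves of both types (hence at least one reverse isometry of $G'(d)$), and then Lemma~\ref{lem:asymmetric} to show that such a type-1 child cannot be re-expressed as a non-reverse isometry, contradicting the order-preserving requirement of the Alber--Niedermeier framework. Your extra care in pinning down a specific even $i$ and in citing Observation~\ref{obs:(not)reverse} only makes explicit what the paper leaves implicit.
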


Are the curves that \emph{can} be described within the framework of Alber and Niedermeier subject to an exponential lower bound on the worst-case BCR?

\addcontentsline{toc}{section}{References}
\bibliography{HOWFC}

\clearpage
\appendix

\section{Implementation of a comparison operator}\label{sec:pseudocode}

In this appendix we explain how to implement an efficient comparison operator that decides which of any two given points comes first along a $d$-dimensional self-similar hyperorthogonal well-folded space-filling curve. Algorithm~\ref{alg:comparison} gives an implementation for a curve with entry point $(0,\ldots,0)$, assuming $d \geq 3$. (For $d = 2$, one could use any implementation of Hilbert's curve.) We will briefly explain how the algorithms works below. We have also tested the algorithm and verified that it correctly orders all grid points along hyperorthogonal, well-folded curves, for all grids of $\twodk$ points with $3 \leq d \leq 6$ and $2 \leq k \leq 12/d$.
A truly efficient implementation may call for the use of various bit tricks (for example, an array whose elements are $1$ and $-1$ could be encoded as a single binary number); however, in the interest of readability, with our implementation we strive to stay closer to the theory of Section~\ref{sec:selfsimilar} and avoid tricks that would hide too much of what is going on conceptually.

\SetFuncSty{textsc}
\SetDataSty{textit}
\SetCommentSty{textrm}
\SetKw{True}{true}
\SetKw{False}{false}
\DontPrintSemicolon
\SetKw{KwDown}{down}
\SetKwData{direction}{direction}
\SetKwData{pintheback}{pInTheBack}
\SetKwData{qintheback}{qInTheBack}
\SetKwData{absparentperm}{unsgnedPrm}
\SetKwData{signinvparentperm}{sgnsInvPrm}
\SetKwData{abschildperm}{unsgnedChldPrm}
\SetKwData{signinvchildperm}{sgnsInvChldPrm}
\SetKwData{subcubeindex}{sbcubeId}
\SetKwData{lowerbound}{lowbnd}
\SetKwData{upperbound}{uppbnd}
\SetKwData{entryaxis}{entrAxs}
\SetKwData{exitaxis}{extAxs}
\SetKwData{theaxis}{axis}
\SetKwData{nextaxis}{quartAxs}
\SetKwData{orientation}{orientation}

\setlength\algomargin{1.5em}
\SetAlCapHSkip{1.5em}
\begin{algorithm}
\KwIn{Points $p = (p[1],\ldots,p[d])$ and $q = (q[1],\ldots,q[d])$ in $[0,1)^d$}
\KwOut{$-1$, $0$, or $1$: if $1$, $p$ precedes $q$ along the curve; if $0$, $p = q$; if $-1$, $p$ follows $q$}
\BlankLine
$\direction \gets 1$;
$\absparentperm[0,\ldots,d] \gets [0,\ldots,d]$;
$\signinvparentperm[0,\ldots,d] \gets [1,\ldots,1]$\label{algln:initialization}\;
\Repeat{$p = q$}{\label{algln:mainloop}
  $\entryaxis \gets \absparentperm[d]$;
  $\exitaxis \gets \absparentperm[d-1]$\label{algln:initentryexit}\;
  $\nextaxis \gets \absparentperm[d]$;
  $\subcubeindex \gets 0$\;
  \BlankLine
  \For{$i \gets 1$ \KwTo $d$}{\label{algln:innerloop}
    $\theaxis \gets \nextaxis; \nextaxis \gets \absparentperm[d - i]$;\label{algln:quartaxis}
    $\subcubeindex \gets 2 \cdot \subcubeindex$\label{algln:doubleindex}\;
    \BlankLine
    \tcp{figure out in which half of the cube $p$ and $q$ are:}
    $p[\theaxis] \gets 2 \cdot p[\theaxis]$;
    $\pintheback \gets \lfloor p[\theaxis]\rfloor$;
    $p[\theaxis] \gets p[\theaxis] \bmod 1$\label{algln:extractp}\;
    $q[\theaxis] \gets 2 \cdot q[\theaxis]$;
    $\qintheback \gets \lfloor q[\theaxis]\rfloor$;
    $q[\theaxis] \gets q[\theaxis] \bmod 1$\label{algln:extractq}\;
    \If{$\pintheback \neq \qintheback$}{
      \tcp{on different sides: return $1$ if $p$ comes first; $-1$ if $q$ comes first}
      \Return $\direction \cdot \signinvparentperm[\theaxis] \cdot \sgn(\qintheback - \pintheback)$\label{algln:output}
    }
    \BlankLine
    \tcp{determine sign such that entry point lies on outside:}
    $\signinvchildperm[\theaxis] \gets 1 - 2 \cdot \pintheback$\;\label{algln:setsign}
    \BlankLine
    \If{$\pintheback = \flipped(\signinvparentperm[\theaxis])$}{\label{algln:firstpart}
      $\abschildperm[i-2] \gets \exitaxis$;
      $\exitaxis \gets \theaxis$\label{algln:updateexit}
      \tcp*[r]{$p$ and $q$ in 1st half}
    }
    \Else{
      $\abschildperm[i-2] \gets \entryaxis$;
      $\entryaxis \gets \theaxis$\label{algln:updateentry}\tcp*[r]{$p$ and $q$ in 2nd half}
      $\subcubeindex \gets \subcubeindex + 1$\label{algln:incrementindex}\;
      $\signinvparentperm[\nextaxis] \gets -\signinvparentperm[\nextaxis]$\;\label{algln:reflect}\label{algln:endinnerloop}
    }
  }
  \BlankLine
  \tcp{fill in last two elements of unsigned permutation:}
  $\abschildperm[d-1] \gets \absparentperm[1]$\;\label{algln:setpermd-1}\label{algln:begincorrections}
  $\abschildperm[d] \gets \entryaxis + \exitaxis - \absparentperm[1]$\tcp*[r]{the other axis}\label{algln:setpermd}
  \tcp{in first and last subcube it is the other way around:}
  \lIf{$\subcubeindex \in \{0, 2^d - 1\}$}{swap $\abschildperm[d-1], \abschildperm[d]$}\label{algln:setpermatends}
  \tcp{correct first element of permutation in last quarter:}
  \lIf{$\subcubeindex \geq \frac34\cdot 2^d$}{$\abschildperm[1] \gets \absparentperm[d]$}\label{algln:lastquarter}
  \BlankLine
  \tcp{correct entry point to be on inside w.r.t. $\absparentperm[1]$:}
  $\signinvchildperm[\absparentperm[1]] \gets -\signinvchildperm[\absparentperm[1]]$\;\label{algln:correctsign1}
  \tcp{correct entry point to be on inside w.r.t. orientation of subcube:}
  $\orientation \gets \abschildperm[d]$;
  \If{$\subcubeindex \notin \{0, 2^d-1\}$\label{algln:whentocorrectsignd}}{%
    $\signinvchildperm[\orientation] \gets -\signinvchildperm[\orientation]$\label{algln:correctsignd}\label{algln:endcorrections}
  }
  \BlankLine
  $\absparentperm = \abschildperm$;
  $\signinvparentperm = \signinvchildperm$\;
  \tcp{if type 1, reverse direction:}
  \lIf{$\exitaxis = \orientation$}{$\direction \gets -\direction$}\label{algln:setdirection}
}
\Return 0\tcp*[r]{$p$ and $q$ are equal}
\caption{Comparison operator based on the $d$-dimensional self-similar hyperorthogonal well-folded space-filling curve with entry point $(0,\ldots,0)$, $d \geq 3$.\label{alg:comparison}%
}
\end{algorithm}

\subsection{Input and output of the \textbf{repeat} and \textbf{for} loops}

We will first describe the input and output of the \textbf{repeat} and \textbf{for} loops. After that we will explain how this functionality is implemented.

The code takes two points $p, q \in [0,1)^d$ that need to be compared. In the \textbf{for} loop (Lines \ref{algln:innerloop} to~\ref{algln:endinnerloop}) the algorithm tries to decide which of the two points comes first along the curve, assuming that the curve is reversed as specified by \direction ($1$ means: forward, not reversed; $-1$ means: reversed), and rotated and reflected according to the signed permutation $\sigma$ specified by $\absparentperm$ and $\signinvparentperm$. Here $\absparentperm[1,\ldots,d]$ gives the absolute values of $\sigma(1),\ldots,\sigma(d)$ and $\signinvparentperm[1,\ldots,d]$ holds the signs of $\sigma^{-1}(1),\ldots,\sigma^{-1}(d)$ (the entries $\absparentperm[0]$ and $\signinvparentperm[0]$ are sentinels that are used to prevent indexing arrays out of bounds on Lines \ref{algln:quartaxis} and~\ref{algln:reflect} when $i = d$). On Line~\ref{algln:initialization}, the direction is initialized to forward and $\sigma$ is initialized to the identity permutation.

If $p$ and $q$ lie in the same subcube $H$ of width $1/2$, the \textbf{for} loop ends without returning a result, but as a side effect, it will have done the following:\begin{itemize}
\item $p$ and $q$ are scaled and translated according to the transformation that maps $H$ to the unit cube;
\item the signed permutation that specifies how the curve traverses $H$ has been determined and stored in \abschildperm and \signinvchildperm (modulo some small ``mistakes'', which will be corrected in Lines \ref{algln:begincorrections} to~\ref{algln:endcorrections});
\item the position of $H$ in the order in which the curve traverses the unit cube has been stored in \subcubeindex (0 for the first subcube; $2^d-1$ for the last subcube).
\end{itemize}
The algorithm will then, on Lines \ref{algln:begincorrections} to~\ref{algln:setdirection}, correct the ``mistakes'' and set up \absparentperm, \signinvparentperm and \direction for the next iteration, which effectively zooms in on the subcube~$H$ that contains $p$ and $q$. If, eventually, $p$ and $q$ cannot be distinguished, the algorithm returns~0.

\subsection{Deciding in which subcube $p$ and $q$ lie}

We will now describe how the \textbf{for} loop determines in which subcube(s) $p$ and $q$ lie. For now, the reader may ignore the assignments to \entryaxis, \exitaxis, \abschildperm: these have a role in determining the signed permutation that specifies how the curve traverses the common subcube (if any) of $p$ and $q$; we will get back to that in Section~\ref{sec:implementationsortbyed}.

Recall that the space-filling curve that fills the unit cube is approximated by a curve $\sigma(G(d))$ with vertices $v_1,\ldots,v_D$. Each vertex $v_j$ corresponds to a hypercube of width $1/2$, and in particular, there will be two indices $j$ and $k$ such that $v_j$ and $v_k$ correspond to the hypercubes $H_j$ and $H_k$ that contain $p$ and $q$, respectively. Note that, thanks to our decision to ``err on the far side'', for any $m$ and for any $i \in \{1,...,d\}$, the first bit of the fractional part of coordinate $i$ of any point in $H_m$ is equal to $v_m[i]$. The main goal of the \textbf{for} loop is to identify whether $j < k$ or $j > k$, and, if $j = k$, what is their value.

To this end the \textbf{for} loop implicitly maintains a lower bound \lowerbound and an upperbound \upperbound on $j$ and $k$. In successive iterations, the gap between these bounds is narrowed until we either find $j \neq k$, or $\lowerbound = j = k = \upperbound$. In the last case, $\subcubeindex$ eventually holds the value of $j - 1$ (this is because this article generally indexes vertices starting from one, but the implementation starts from zero). Specifically, the following invariant is valid just before each execution of Line~\ref{algln:quartaxis}: $\lowerbound = 2^{d+1-i} \ast \subcubeindex + 1 \leq j \leq k \leq 2^{d+1-i} \ast (\subcubeindex + 1) = \upperbound$. Just after Line~\ref{algln:quartaxis}, the following holds: (i) $\lowerbound = 2^{d-i} \ast \subcubeindex + 1 \leq j \leq k \leq 2^{d-i} \ast (\subcubeindex + 2) = \upperbound$; (ii) $\theaxis = |\sigma(d+1-i)|$, and (iii) $\nextaxis = |\sigma(d-i)|$.

Note that due to the properties of $G(d)$, we always have that $v_{\lowerbound},\ldots,v_{\upperbound}$ is a translation of $\sigma(G(d+1-i))$ or its reverse, and this curve consists of the concatenation of $\sigma(G(d-i))$, an edge with axis $\sigma(d+1-i)$, and $\sigma(\reverse{G(d-i)})$. In iteration $i$ of the loop, the algorithm decides whether $j$ and $k$ lie in the first or in the second half, that is, before or after the edge with axis $\theaxis = |\sigma(d+1-i)|$. Line~\ref{algln:extractp} reads and removes the first bit of the fractional part of $p[\theaxis]$ (that is, $v_j[\theaxis]$), shifting the remaining bits left for the next iteration of the \textbf{repeat} loop. The bit that is read is stored in \pintheback. Similarly, Line~\ref{algln:extractq} reads and removes the first bit $v_k[\theaxis]$ of the fractional part of $q[\theaxis]$.

If $p$ and $q$ differ in the bits just read, we can now decide, on Line~\ref{algln:output}, which of the two comes first along the space-filling curve. In the absence of any reflections or reversals, we would return 1 if $p$ has the smaller coordinate and $-1$ if $q$ has the smaller coordinate. However, if this portion of the curve is reflected in this coordinate, or if it is reversed, the return value is modified accordingly by multiplying with $\signinvparentperm[\theaxis]$ and \direction.

If, on the other hand, $p$ and $q$ have the same initial bit in dimension $\theaxis$, their shared coordinate is effectively stored in $\signinvchildperm$: if $p$ and $q$ lie in the back (the bits read were ones), we store $-1$; if $p$ and $q$ lie in the front (the bits read were zeros), we store 1. Thus, $\flipped(\signinvchildperm[\theaxis]) = \pintheback = \qintheback$. The next iteration of the \textbf{for} loop must now zoom in onto the isometric copy of $\sigma(G(d-i))$ that contains $v_j$ and $v_k$ out of the two copies that appear before and after the edge with axis $\theaxis = |\sigma(d+1-i)|$ in $\sigma(G(d+1-i))$. There are two possibilities:\begin{itemize}
\item $v_j$ and $v_k$ lie in the first part (traversed according to $\sigma(G(d-i))$): in the absence of reflections, this is the case if $v_k[\theaxis] = 0$, but if $\sigma$ encodes a reflection in dimension $\theaxis$, then $v_j$ and $v_k$ lie in the first part if $v_k[\theaxis] = 1$. Line~\ref{algln:firstpart} checks for this: the condition evaluates to true if and only if $v_j$ and $v_k$ lie in the first part. If so, the upper bound on $j$ and $k$ needs to be lowered. This is realized by incrementing the loop counter $i$ and doubling $\subcubeindex$ on Line~\ref{algln:doubleindex} of the next iteration.
\item $v_j$ and $v_k$ lie in the second part, traversed according to $\reverse{\sigma(G(d-i))}$, or equivalently, $\sigma(G(d-i))$ reflected in coordinate $\sigma(d-i)$. Line~\ref{algln:reflect} implements that reflection, or rather, it ``falsifies'' $\signinvparentperm[\sigma(d-i)]$ such that it tricks the next iteration of the \textbf{for} loop into acting according to a reflection in coordinate $\sigma(d-i)$. Additionally, the lower bound on $j$ and $k$ needs to be raised. This is realized by incrementing \subcubeindex on Line~\ref{algln:incrementindex}, followed by increasing the loop counter $i$ and doubling $\subcubeindex$ on Line~\ref{algln:doubleindex} of the next iteration.
\end{itemize}

Note that, when $i$ becomes $d+1$ so that the \textbf{for} loop terminates, as a result of the loop invariant, if $p$ and $q$ have the same initial bit in each dimension, \subcubeindex will hold the correct value of $j-1$ and $k-1$ once the \textbf{for} loop terminates.

\subsection{Sorting axes by local edge distance}\label{sec:implementationsortbyed}

If the \textbf{for} loop completes, that is, $j = k$, we need to set up the permutation $\sigma$ to use in the next iteration of the \textbf{while} loop. In what follows, we will continue to use $\sigma$ for the permutation used in the current iteration, and we will use $\sigma'$ to denote the permutation to use in the next iteration. We will use the construction of Section~\ref{sec:generalconstruction}: the absolute values of the elements of $\sigma'$ are sorted in order of decreasing local edge distance to $v_j$. To realize this, the algorithm exploits the following property of gray codes:
\begin{observation}\label{obs:sorting}
Let $C$ be the concatenation of $\langle a\rangle$, $G(m)$ and $\langle z\rangle$, where $|a| > m$, $|z| > m$ and $|a| \neq |z|$. If $v$ is a vertex in the first half of $G(m)$, then $\edgedist(C,v,z) > \edgedist(C,v,e)$ for any edge $\langle e\rangle$ in the concatenation of $\langle a\rangle$ and $G(m)$. Symmetrically, if $v$ is a vertex in the second half of $G(m)$, then $\edgedist(C,v,a) > \edgedist(C,v,e)$ for any edge $\langle e\rangle$ in the concatenation of $G(m)$ and $\langle z\rangle$.
\end{observation}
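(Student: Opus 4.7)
My plan is to use two structural observations about $C$: (i)~since $|a|,|z|>m$ and $|a|\neq|z|$, the edges $\langle a\rangle$ and $\langle z\rangle$ are the \emph{only} edges in $C$ with axes $|a|$ and $|z|$ respectively, while the axes $\{1,\ldots,m\}$ occur only inside $G(m)$; (ii)~every axis $b\le m$ lies within small edge distance of every vertex of $G(m)$, thanks to the recursive structure of the Gray code.

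Write $v$ for the $i$-th vertex of $G(m)$, where $1\le i\le 2^m$, so that $v$ lies in the first half precisely when $i\le 2^{m-1}$. By~(i), the nearest axis-$|z|$ edge to $v$ is $\langle z\rangle$ itself, reached by walking through the last $2^m-i$ edges of $G(m)$ followed by $\langle z\rangle$, so $\edgedist(C,v,|z|)=2^m-i$; analogously $\edgedist(C,v,|a|)=i-1$. When $i\le 2^{m-1}$ this already yields $\edgedist(C,v,|a|)=i-1<2^{m-1}\le 2^m-i=\edgedist(C,v,|z|)$, settling the comparison against the edge $\langle a\rangle$.

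The remaining work is to bound $\edgedist(C,v,b)$ for axes $b\in\{1,\ldots,m\}$; since neither $\langle a\rangle$ nor $\langle z\rangle$ has such an axis, this equals the edge distance computed inside $G(m)$ alone. The central lemma I would prove is
\[
\edgedist\bigl(G(m),w,b\bigr)\;\le\;2^{b-1}-1 \qquad \text{for every vertex }w\text{ of }G(m)\text{ and every }b\in\{1,\ldots,m\}.
\]
This follows by induction on $m$ using the decomposition $G(m)=G(m-1)\cdot\langle m\rangle\cdot\reverse{G(m-1)}$: for $b<m$, $w$ lies in one of the two half-curves and the inductive hypothesis applies verbatim; for $b=m$, the unique axis-$m$ edge is the central edge of $G(m)$, at distance at most $2^{m-1}-1$ from either end. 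Combining this lemma with $i\le 2^{m-1}$ yields
\[
\edgedist(C,v,b)\;\le\;2^{b-1}-1\;\le\;2^{m-1}-1\;<\;2^{m-1}\;\le\;2^m-i\;=\;\edgedist(C,v,|z|),
\]
which finishes the first-half case.

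The second-half case is symmetric: since $\reverse{G(m)}$ equals $G(m)$ reflected in coordinate $m$, reversing $C$ interchanges $a$ with $z$ and swaps the two halves of $G(m)$, so the argument above applies verbatim after this relabelling. I expect the main obstacle to be the internal bound $\edgedist(G(m),w,b)\le 2^{b-1}-1$: it is not a direct consequence of Lemma~\ref{lem:prefixspan}, which only controls the edges near the endpoints of $G(m)$, but it does yield to a clean induction on the Gray-code recursion.
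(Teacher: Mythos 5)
Your proof is correct. The paper states this as an \emph{Observation} with no accompanying proof, so there is no argument of the authors' to compare against; your write-up is a complete justification of what the paper takes for granted. The two ingredients check out: the exact distances $\edgedist(C,v,|a|)=i-1$ and $\edgedist(C,v,|z|)=2^m-i$ to the unique boundary edges (unique because $|a|,|z|>m$), and the internal bound $\edgedist(G(m),w,b)\le 2^{b-1}-1$ for $b\le m$, whose induction on $G(m)=G(m-1)\cdot\langle m\rangle\cdot\reverse{G(m-1)}$ works as you describe (the base case $b=1$ is Lemma~\ref{lem:graycodealternatingpattern}, and for $b<m$ the edge distance within a half-curve upper-bounds the edge distance within the whole). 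Since $i\le 2^{m-1}$ gives $2^m-i\ge 2^{m-1}$, both bounds are strictly below $\edgedist(C,v,|z|)$, and the second-half case follows because the axis sequence of $G(m)$ is a palindrome, so reversing $C$ swaps the roles of $a$ and $z$ and of the two halves while preserving all edge distances. You are also right that Lemma~\ref{lem:prefixspan} alone does not give the internal bound, since it only speaks about prefixes and suffixes of $G(d)$; your separate induction is the missing piece the paper silently assumes.
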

Recall from Section~\ref{sec:extended1curves} that a self-similar hyperorthogonal well-folded curve must have an extended approximating curve $A' = \sigma(A'_1)$ with entry edge $\langle\sigma(d)\rangle$ and exit edge $\langle\sigma(-(d-1)\rangle$. As explained above, in iteration $i$ of the \textbf{for} loop, we are trying to locate $v_j$ and $v_k$ in a section of $A'$ that is isometric to $\sigma(G(d+1-i))$, and we decide whether $v_j$ and $v_k$ appear in the first or in the second half of that curve. As a result, in each iteration of the \textbf{for} loop we may be able to apply Observation~\ref{obs:sorting} to determine one more axis in the sequence of axes sorted by decreasing local edge distance to $v_j$ in $A'$. Considering this idea more carefully, we see that in the first two iterations Observation~\ref{obs:sorting} cannot be applied since some or all of the preconditions $|a| > m$, $|z| > m$, and $|a| \neq |z|$ are violated; this is consistent with the fact that after the last iteration, two axes must remain that both have local edge distance zero and cannot be sorted. In the third iteration, Observation~\ref{obs:sorting} can be applied if $v_j$ is in the first three quarters of $A'$, but things go wrong if $v_j$ is in the last quarter of $A'$, which is a reflection of $\sigma(G(d-2))$ preceded \emph{and} followed by an edge $\langle \sigma(-(d-1))\rangle$ (however, in that case, it is clear that $|\sigma(d)|$ is the axis with the largest edge distance). In each of the iterations after the third we can always determine one more axis in the sorted sequence.

In our implementation, the sorting is implemented by assignments to \abschildperm, supported by assignments to \entryaxis and \exitaxis. For ease of implementation, an axis is assigned to $\abschildperm[i-2]$ in each iteration $i$ of the \textbf{for} loop (on Line \ref{algln:updateexit} or~\ref{algln:updateentry}), but the assignments in the first two iterations (to $\abschildperm[-1]$ and $\abschildperm[0]$) are meaningless and without consequence. Throughout the iterations of the \textbf{for} loop, the algorithm keeps track of the axes \entryaxis and \exitaxis of the edges that precede and follow the curve $\sigma(G(d+1-i))$ currently under consideration, by the assignments on Lines \ref{algln:initentryexit}, \ref{algln:updateexit} and~\ref{algln:updateentry}. Thus, when the \textbf{for} loop ends, \entryaxis and \exitaxis store the two axes at edge distance zero to $v_j$. As noted above, if $v_j$ is in the last quarter a wrong assignment to $\abschildperm[1]$ is made in the third iteration; this is corrected on Line~\ref{algln:lastquarter}.

To complete the permutation $\sigma'$ to use in recursion (modulo the signs), we need to fill in
$\abschildperm[d-1]$ and $\abschildperm[d]$ with the axes $|\sigma(e_{i-1})|$ and $|\sigma(e_i)|$, one of which equals $\sigma(1)$ (by Lemma~\ref{lem:graycodealternatingpattern}). Since the space-filling curve has entry point $(0,\ldots,0)$, we have $\relentr_1(i) = 0$ for all $i \in \{1,\ldots,d\}$. Hence, by Lemmas \ref{lem:opening} to~\ref{lem:finale}, $|\sigma'(d-1)| = \sigma(1)$ and $|\sigma'(d)|$ is the other axis out of $|\sigma(e_{i-1})|$ and $|\sigma(e_i)|$, unless $j \in \{1,2^d\}$ (and thus, $\subcubeindex \in \{0,2^d-1\}$), in which case $|\sigma'(d)| = \sigma(1)$ and $|\sigma'(d-1)|$ is the other axis. Corresponding assignments are made in Lines \ref{algln:setpermd-1} to~\ref{algln:setpermatends}.

\subsection{Computing the signs of the permutation $\sigma'$ of the common subcube of $p$ and $q$}

The correctness of the assignments to $\signinvchildperm$ can be verified by developing the calculations of Lemmas \ref{lem:opening} to~\ref{lem:finale} further. Some waypoints for these calculations are the following. Let $i$ be the rank of the subcube that contains $p$ and $q$ along the curve, that is, $i = \subcubeindex + 1$. As before, we will continue to use $\sigma$ for the permutation used in the current iteration, and we will use $\sigma' = \sigma \compose \sigma_i$ to denote the permutation to use in the next iteration.
The assignments on Line~\ref{algln:setsign} set the signs of $\sigma'$ such that the relative coordinates of the entry point of $C_i$, as defined in Section~\ref{sec:relativecoordinates}, are all zero (recall the relation between subcube coordinates, relative entry coordinates and signs as given by Observation~\ref{obs:relationabsandrel}, which implies $\relentr_i(j) = 0$ if and only if $\flipped\big(\perm^{-1}_i(j)\big) = v_i[j] \bmod 2$). Lines \ref{algln:correctsign1} and~\ref{algln:correctsignd} flip the relative entry coordinates in dimension $\sigma(1)$ and (if $1 < i < 2^d$) dimension $|\sigma'(d)|$. Finally, Line~\ref{algln:setdirection} reverts the direction in all subcubes of type 1, which means that for subcubes of type 1, the aforementioned settings of the relative coordinates are actually for the exit point, not the entry point. We will now explain how the correctness of the settings for subcubes of type 0 can be derived from Lemmas \ref{lem:opening} to~\ref{lem:finale}; the correctness of the settings of the relative exit coordinates for subcubes of type 1 can be derived similarly, using the relations between entry and exit coordinates from Observation~\ref{obs:relationrelentryandexit}.

Note that for the first subcube, all relative entry coordinates are zero. From Lemmas \ref{lem:opening} to~\ref{lem:finale} we get that $C_{i}$ has type 0 if and only if $i = 2^d$, or if $i$ is odd and $i \neq 1$.
The correct settings of the relative entry point coordinates for the case $i = 2^d$ can now be verified directly using Lemma~\ref{lem:finale}. For the case of odd $i \neq 1$, the relative entry coordinates are the relative exit coordinates for $C_{i-1}$, which differ from the relative entry coordinates for $C_{i-1}$ only in the orientation $|\sigma(\sigma_{i-1}(d))|$, which, by Lemma~\ref{lem:developmenteven}, equals $|\sigma(\sigma_{i}(d))| = |\sigma'(d)|$ and differs from $\sigma(1)$. It follows from Lemma~\ref{lem:developmenteven} that the relative entry coordinates for $C_{i}$ are as follows: $\relentr_{i}(\sigma(1)) = \relentr_{i-1}(\sigma(1)) = 1$; for $1 < j < d$ and $j \neq |\sigma'(d)|$ we have $\relentr_{i}(\sigma(j)) = \relentr_{i-1}(\sigma(j)) = \relentr_{1}(\sigma(j)) = 0$; for $j = d$ and $j \neq |\sigma'(d)|$ we have $\relentr_{i}(\sigma(d)) = \relentr_{i-1}(\sigma(d)) = \relentr_{1}(\sigma(1)) = 0$; and for $j = |\sigma'(d)|$ (and hence, $j \neq 1$), we have $\relentr_{i}(\sigma(j)) = 1 - \relentr_{i-1}(\sigma(j)) = 1$. This is exactly what the algorithm establishes.

\subsection{Running time}

If the binary representations of the coordinates of $p$ and $q$ consist of $k$ bits per coordinate, then the algorithm runs in $O(d \cdot k)$ time (that is, linear in the input size), provided that the necessary operations to extract a single bit from a coordinate (Lines~\ref{algln:extractp} and \ref{algln:extractq}) run in constant time.

\subsection{Variant for a space-filling curve with entry point on face}

For a comparison operator based on the $d$-dimensional self-similar hyperorthogonal well-folded space-filling curve with entry point $(\frac13,\ldots,\frac13,0)$, $d \geq 3$, one may adapt Algorithm~\ref{alg:comparison} as follows: flip the sign of the value assigned on Line~\ref{algln:setsign}; swap the values assigned on Lines \ref{algln:setpermd-1} and~\ref{algln:setpermd}; remove Line~\ref{algln:correctsign1}; and change the $\notin$ sign on Line~\ref{algln:whentocorrectsignd} into an $=$ sign. The correctness of these modifications can again be verified with Lemma \ref{lem:opening} to~\ref{lem:finale}, using $\relentr_1(i) = 1$ for all $i \in \{1,\ldots,d-1\}$ and $\relentr_1(d) = 0$.

\end{document}